\newcommand\blankfootnote[1]{%
  \let\svthefootnote\thefootnote%
  \let\thefootnote\relax\footnotetext{#1}%
  \let\thefootnote\svthefootnote%
}
\newcommand{\lemmaconst}{\delta}
\newcommand{\channelpmf}{q}
\newcommand{\codebookpmf}{p}
\newcommand{\generalpmf}{r}
\newcommand{\generalrvOne}{A}
\newcommand{\generalrvOneValue}{a}
\newcommand{\generalrvOneAlph}{\mathcal{A}}
\newcommand{\generalrvTwo}{B}
\newcommand{\generalrvTwoValue}{b}
\newcommand{\generalrvTwoAlph}{\mathcal{B}}
\newcommand{\generalrvThree}{C}
\newcommand{\generalrvThreeValue}{c}
\newcommand{\generalrvThreeAlph}{\mathcal{C}}
\newcommand{\generalpdistOne}{P}
\newcommand{\generalpmfOne}{p}
\newcommand{\generalpdistTwo}{Q}
\newcommand{\generalpmfTwo}{q}
\newcommand{\generalSecondMoment}{\sigma}
\newcommand{\generalThirdMoment}{\rho}
\newcommand{\generalcdf}{F}
\newcommand{\generalintegrand}{x}
\newcommand{\generalSummationIndex}{k}
\newcommand{\generalLimitIndex}{k}
\newcommand{\generalSummationBound}{n}
\newcommand{\generalpmfset}{\mathcal{P}}
\newcommand{\generaldimension}{n}
\newcommand{\generaldimensionIndex}{k}
\newcommand{\generalfunction}{f}
\newcommand{\generalpmeasure}{P}
\newcommand{\coverNumber}{\chi}
\newcommand{\codebookRate}{R}
\newcommand{\codebookRateOne}{R_1}
\newcommand{\codebookRateTwo}{R_2}
\newcommand{\channelInOne}{X}
\newcommand{\channelInOneAlph}{\mathcal{X}}
\newcommand{\channelInOneAlphElement}{x}
\newcommand{\channelInTwo}{Y}
\newcommand{\channelInTwoAlph}{\mathcal{Y}}
\newcommand{\channelInTwoAlphElement}{y}
\newcommand{\channelOut}{Z}
\newcommand{\channelOutAlph}{\mathcal{Z}}
\newcommand{\channelOutAlphElement}{z}
\newcommand{\channel}{\mathcal{W}}
\newcommand{\channelWiretapper}{{\channel_\mathrm{tap}}}
\newcommand{\channelLegit}{{\channel_\mathrm{legit}}}
\newcommand{\channelOutAlphWiretapper}{\channelOutAlph_\mathrm{tap}}
\newcommand{\channelOutAlphElementWiretapper}{\channelOutAlphElement}
\newcommand{\channelOutAlphLegit}{\channelOutAlph_\mathrm{legit}}
\newcommand{\channelOutWiretapper}{\channelOut_\mathrm{tap}}
\newcommand{\channelOutLegit}{\channelOut_\mathrm{legit}}
\newcommand{\alphSubset}{A}
\newcommand{\codebook}{\mathcal{C}}
\newcommand{\codebookOne}{\mathcal{C}_1}
\newcommand{\codebookTwo}{\mathcal{C}_2}
\newcommand{\codebookOneWord}[1]{C_1(#1)}
\newcommand{\codebookTwoWord}[1]{C_2(#1)}
\newcommand{\codebookWord}[1]{C(#1)}
\newcommand{\codebookSet}{\mathbb{C}}
\newcommand{\codewordIndex}{m}
\newcommand{\codebookBlocklength}{n}
\newcommand{\blockIndex}{k}
\newcommand{\txIndex}{k}
\newcommand{\mutualInformation}[2]{I(#1;#2)}
\newcommand{\mutualInformationConditional}[3]{I(#1;#2|#3)}
\newcommand{\entropy}[1]{H(#1)}
\newcommand{\entropyConditional}[2]{H(#1 | #2)}
\newcommand{\finalconstOne}{\gamma_1}
\newcommand{\finalconstTwo}{\gamma_2}
\newcommand{\totalvariationBigg}[1]{\Bigg\lVert #1 \Bigg\rVert_\mathrm{TV}}
\newcommand{\totalvariation}[1]{\lVert #1 \rVert_\mathrm{TV}}
\newcommand{\totalvariationlr}[1]{\left\lVert #1 \right\rVert_\mathrm{TV}}
\newcommand{\absolute}[1]{\left\lvert #1 \right\rvert}
\newcommand{\positive}[1]{\left[ #1 \right]^+}
\newcommand{\renyiParam}{\alpha}
\newcommand{\proofconstantOne}{{\beta}}
\newcommand{\informationDensity}[2]{i({#1};{#2})}
\newcommand{\informationDensityConditional}[3]{i({#1};{#2} | {#3})}
\newcommand{\renyidiv}[3]{D_{#1}\left({#2} || {#3}\right)}
\newcommand{\Expectation}{\mathbb{E}}
\newcommand{\Probability}{\mathbb{P}}
\newcommand{\indicator}[1]{1_{#1}}
\newcommand{\cardinality}[1]{\lvert #1 \rvert}
\newcommand{\typicalityParam}{\varepsilon}
\newcommand{\typicalSetIndex}[3]{\mathcal{T}_{#3,#1}^{#2}}
\newcommand{\typicalSet}[2]{\mathcal{T}_{#1}^{#2}}
\newcommand{\lemmaexpectation}{\mu}
\newcommand{\channelDispersion}[1]{V_{#1}}
\newcommand{\channelThirdMoment}[1]{\rho_{#1}}
\newcommand{\normalcdfComplement}{\mathcal{Q}}
\newcommand{\normalcdf}{\Phi}
\newcommand{\normalcdfComplementInverse}{\mathcal{Q}^{-1}}
\newcommand{\secondOrderParamC}{c}
\newcommand{\secondOrderParamD}{d}
\newcommand{\secondOrderAtypicalProbability}[1]{\mu_{#1}}
\newcommand{\totvarAtypicalOne}{P_{\mathrm{atyp}, 1}}
\newcommand{\totvarAtypicalTwo}{P_{\mathrm{atyp}, 2}}
\newcommand{\totvarTypical}[1]{P_{\mathrm{typ}}({#1})}
\newcommand{\totvarTypicalOne}[2]{P_{\mathrm{typ}, 1}({#1},{#2})}
\newcommand{\codebookDecoder}{d}
\newcommand{\errorprob}{\mathcal{E}}
\newcommand{\indexForTypicalSet}{k}
\newcommand{\messageRV}{M}
\newcommand{\messageAlphabet}{\mathcal{M}}
\newcommand{\nPartitions}{t}
\newcommand{\nPartitionElements}{\ell}
\newcommand{\indexPartitions}{k}
\newcommand{\messageAlphabetElement}{m}
\newcommand{\codebookRandRate}{{L}}
\newcommand{\codebookRandRateOne}{{L_1}}
\newcommand{\codebookRandRateTwo}{{L_2}}
\newcommand{\codebookRandRateOneLower}[1]{{\underline{L}_{1,{#1}}}}
\newcommand{\codebookRandRateOneUpper}[1]{{\overline{L}_{1,{#1}}}}
\newcommand{\codebookRandRateTwoLower}[1]{{\underline{L}_{2,{#1}}}}
\newcommand{\codebookRandRateTwoUpper}[1]{{\overline{L}_{2,{#1}}}}
\newcommand{\randomnessIndex}{\ell}
\newcommand{\partition}{{\Pi}}
\newcommand{\wiretapperDecoder}{{f}}
\newcommand{\wiretapperGuesser}{{g}}
\newcommand{\semanticTotvarTerm}{{P_1}}
\newcommand{\semanticErrorTermOne}{{P_2}}
\newcommand{\semanticErrorTermTwo}{{P_3}}
\newcommand{\timeSharingRV}{V}
\newcommand{\timeSharingAlph}{\mathcal{V}}
\newcommand{\timeSharingAlphElement}{v}
\newcommand{\capacityRegion}[2]{\mathcal{S}_{#1, #2}}
\newcommand{\convexityParam}{\lambda}
\newcommand{\reals}{\mathbb{R}}
\newtheorem{theorem}{Theorem}
\newtheorem{lemma}{Lemma}
\newtheorem{cor}{Corollary}
\newtheorem{remark}{Remark}
\newtheorem{example}{Example}
\title{The MAC Resolvability Region, Semantic Security and Its Operational Implications}
\author{
Matthias Frey, Igor Bjelaković and Sławomir Stańczak
\\
Technische Universität Berlin
}
\begin{document}

\maketitle

\blankfootnote{
The work was supported by the German Research Foundation (DFG) under grant 
STA864/7-1 and by the German Federal Ministry of Education and Research under 
grant 16KIS0605.

This paper was presented in part at the 4th Workshop on Physical-Layer Methods for Wireless Security held during the 2017 IEEE Conference on Communications and Network Security~\cite{ConferenceVersion}.
}

\begin{abstract}
  Building upon previous work on the relation between secrecy and
  channel resolvability, we revisit a secrecy proof for the
  multiple-access channel (MAC) from the perspective of
  resolvability. In particular, we refine and extend the proof to
  obtain novel results on the second-order achievable rates. Then we
  characterize the resolvability region of the memoryless MAC by
  establishing a conceptually simple converse proof for resolvability
  that relies on the uniform continuity of Shannon's entropy with
  respect to the variational distance. We discuss the operational
  meaning of the information theoretic concept of semantic security
  from the viewpoint of Ahlswede's General Theory of Information
  Transfer and show interesting implications on the resilience against
  a large class of attacks even if no assumption can be made about the
  distributions of the transmitted messages. Finally we provide
  details on how the resolvability results can be used to construct a
  wiretap code that achieves semantic security.
\end{abstract}

\section{Introduction}
\label{section:Introduction}

With an increasing number of users and things being connected to each
other, not only the overall amount of communication increases, but
also the amount of private and personal information being
transferred. This information needs to be protected against various
attacks. In fact, for some emerging applications such as e-health
services where sensitive medical data might be transmitted using a
Body Area Network, the problem of providing secrecy guarantees is a
key issue.

As discovered by Csiszár~\cite{CsiszarSecrecy} and later more
explicitly by Bloch and Laneman~\cite{BlochStrongSecrecy} and
investigated by Yassaee and Aref \cite{YassaeeMACWiretap} for the
multiple-access case, the concept of channel resolvability can be
applied to provide such guarantees; in essence, an effective channel
is created by an encoder in such a way that the output observed by a
wiretapper is almost independent of the transmitted messages, which
renders the output useless for the wiretapper. Moreover, the concept
can also be used in point-to-point settings as a means of exploiting
channel noise to convey randomness in such a way that the distribution
observed at the receiver can be accurately controlled by the
transmitter.

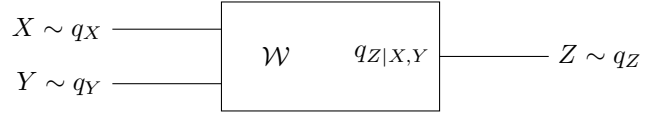
\begin{figure}
\begin{center}
\begin{tikzpicture}[scale=1.45]
\node[left]   (channelInOne)                at (   0,0.75) {$\channelInOne\sim\channelpmf_\channelInOne$};
\node[right]  (channelInOneConnect)         at (   1,0.75) {};
\node[left]   (channelInTwo)                at (   0,0.25) {$\channelInTwo\sim\channelpmf_\channelInTwo$};
\node[right]  (channelInTwoConnect)         at (   1,0.25) {};
\node[right]  (channelOut)                  at (   4, 0.5) {$\channelOut\sim\channelpmf_\channelOut$};
\node[left]   (channelOutConnect)           at (   3, 0.5) {$\channelpmf_{\channelOut | \channelInOne, \channelInTwo}$};
\node         (channel)                     at ( 1.5, 0.5) {$\channel$};

\draw (1,1) rectangle (3,0);
\draw (channelInOne) -- (channelInOneConnect);
\draw (channelInTwo) -- (channelInTwoConnect);
\draw (channelOutConnect) -- (channelOut);
\end{tikzpicture}
\end{center}
\caption{Idealized input-output model of the multiple-access channel.}
\label{figure:channel-idealized}
\end{figure}

\begin{figure}
\begin{center}
\begin{tikzpicture}[scale=1.45]
\node[draw]   (channelInOne)                at (   0,-.75) {$\codebookOne$};
\node[right]  (channelInOneConnect)         at ( 1,-.75) {};
\node[draw]   (channelInTwo)                at (   0,-1.25) {$\codebookTwo$};
\node[right]  (channelInTwoConnect)         at ( 1,-1.25) {};
\node[right]  (channelOut)                  at ( 3.3,-1) {$\channelOut^\codebookBlocklength\sim\codebookpmf_{\channelOut^\codebookBlocklength | \codebookOne, \codebookTwo}$};
\node[left]   (channelOutConnect)           at (   3,-1) {$\channelpmf_{\channelOut^\codebookBlocklength | \channelInOne^\codebookBlocklength, \channelInTwo^\codebookBlocklength}$};
\node         (channel)                     at ( 1.5,-1) {$\channel^\codebookBlocklength$};
\node[left]   (messageOne)                  at (-0.5,-.75) {$\messageRV_1$};
\node[left]   (messageTwo)                  at (-0.5,-1.25) {$\messageRV_2$};

\draw (1,-.5) rectangle (3,-1.5);
\draw (channelInOne) -- node[above]{$\channelInOne^\codebookBlocklength$} (channelInOneConnect);
\draw (channelInTwo) -- node[above]{$\channelInTwo^\codebookBlocklength$} (channelInTwoConnect);
\draw (channelOutConnect) -- (channelOut);
\draw (messageOne) -- (channelInOne);
\draw (messageTwo) -- (channelInTwo);
\end{tikzpicture}
\end{center}
\caption{Resolvability codebooks are used as an input to the MAC
  channel to create a channel output at the receiver that well
  approximtes the given ideal output distribution (i.e. the output
  looks as if  $\codebookBlocklength$ independent instances of the
  output in Figure~\ref{figure:channel-idealized} were observed).}
\label{figure:channel-resolved}
\end{figure}
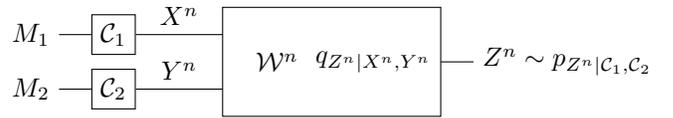

In this paper, we explore channel resolvability in a multiple-access
setting in which there is no communication between the transmitters,
yet they can control the distribution observed at the receiver in a
non-cooperative manner. More precisely, given a multiple-access
channel with an ideal or desired output distribution under some input
distribution (see Figure ~\ref{figure:channel-idealized}), the
resolvability problem for this multiple-access channel consists in
finding for each transmitter a finite sequence of codewords (called a
codebook) such that if each transmitter chooses a codeword at random
and independently of the other transmitter, then the MAC output well
approximates the given ideal output distribution. The coding setting
is illustrated in Figure~\ref{figure:channel-resolved} where each
codeword is a tuple of $\codebookBlocklength$ elements of the input
alphabet and $\codebookBlocklength$ is referred to as the block length.


The first part of the paper deals with the resolvability problem for
the memoryless MAC, while in the second part we show how the concept
of channel resolvability can be exploited to achieve strong
operational secrecy guarantees over a wiretap MAC.


\subsection{Literature}
\label{section:literature}

To the best of our knowledge, the concept of approximating a desired
output distribution over a communication channel using as little
randomness as possible at the transmitter was first introduced by
Wyner~\cite{WynerCommonInformation}, who used normalized
Kullback-Leibler divergence as a similarity measure between the actual
and the desired output distribution. 
Han and Verdú~\cite{HanApproximation} introduced the term
\emph{channel resolvability} for a similar concept; however, instead
of the Kullback-Leibler divergence, they used the variational distance
as a metric and showed, among others, the existence of a codebook that
achieves an arbitrarily small variational distance by studying the
expected variational distance of a random codebook.

A stronger result stating that the probability of drawing an
unsuitable random codebook is doubly exponentially small is due to
Cuff~\cite{CuffSoftCovering}. Related results were proposed before by
Csiszár~\cite{CsiszarSecrecy} and by
Devetak~\cite{DevetakPrivateCapacity} for the quantum setting, who
based his work on the non-commutative Chernoff
bound~\cite{AhlswedeIdentification}. Further secrecy results that are
either based on or related to the concept of channel resolvability can
be found in Hayashi~\cite{HayashiResolvability}, Bloch and Laneman
\cite{BlochStrongSecrecy}, Hou and Kramer~\cite{HouEffectiveSecrecy},
and Wiese and Boche~\cite{WieseWiretap}, who applied Devetak's
approach to a multiple-access setting. Cuff~\cite{CuffSoftCovering}
presented a result on the second-order rate; a related result was
proposed by Watanabe and
Hayashi~\cite{WantanabeSecondOrder}. Resolvability for MACs has been
explored by Steinberg~\cite{SteinbergResolvability} and later by
Oohama~\cite{OohamaConverse}. Explicit low-complexity codebooks for
the special case of symmetric MACs have been proposed by Chou, Bloch
and Kliewer~\cite{ChouLowComplexity}.

Converse results for channel resolvability have been proposed by Han
and Verdú~\cite{HanApproximation} and, for the multiple-access
channel, by Steinberg~\cite{SteinbergResolvability}. However, since
these studies consider channels with memory and allow for arbitrary
input and output distributions (in particular, they need not be
identical or independent across channel uses), these results are not
directly transferrable to the memoryless case considered in this
paper. Converse proofs for the memoryless single-user case were given
by Wyner~\cite{WynerCommonInformation} and later by
Hou~\cite{HouPhDThesis}; however, instead of the variational distance
considered in this study, they assumed the Kullback-Leibler divergence
as a similarity measure so that the results of \cite{HouPhDThesis}
cannot be applied directly either.

Security concepts from cryptography have been redefined as secrecy
concepts for the information theoretic setting by Bellare, Tessano and
Vardy in \cite{BellareSemantic} and \cite{BellareCryptographic}, where
the authors also discussed the interrelations between these concepts;
however, the proposed schemes involve some cryptographic elements,
such as shared secrets between transmitter and legitimate receiver or
shared (public) randomness. To the best of our knowledge, there are
currently no results explicitly linking the concept of semantic
security to the operational meaning in communication theory.

Wiretap codes that achieve semantic security over the Gaussian channel
have been proposed by Ling, Luzzi, Belfiore and
Stehlé~\cite{LingSemantically}. Thangaraj~\cite{ThangarajCoding}
showed how to construct semantically secure wiretap codes for the
single-user channel and already noted the link to channel
resolvability. Goldfeld, Cuff and
Permuter~\cite{GoldfeldSemantic}\cite{GoldfeldSemanticConference}
showed how channel resolvability techniques can be used to obtain
wiretap codebooks that achieve semantic security over the single-user
channel; the technique had been previously sketched by
Cuff~\cite{CuffStronger}.

Liang and Poor~\cite{LiangPoorMACConfidential}, Ekrem and
Ulukus~\cite{EkremUlukusSecrecyMAC}, as well as Liu and
Kang~\cite{LiuKangSecrecyCapacity} provided converse results for
related multiple-access wiretap channels, but these results do not
carry over in a straightforward fashion to the converse results for
the channel model we consider in this paper to study achievable
secrecy rates.

\section{Overview and Main Contributions}
\label{sec:maincontributions}

The main contributions of this study can be summarized as follows:
\begin{enumerate}
\item We characterize the resolvability region of the discrete
  memoryless multiple-access channel. Section \ref{sec:main} proves
  the direct theorem and a result on the second-order rate, while
  Section \ref{sec:converse} deals with the converse theorem.
\item In Section \ref{sec:secrecy} we show how resolvability can be
  exploited to achieve strong secrecy and semantic security, and
  discuss operational implications of semantic security.
\end{enumerate}

In the remainder of this section, we discuss our results in more
detail and explain their implications in the context of secrecy and
security. We start with a basic definition and refer the reader to
Section \ref{section:preliminaries} for further definitions.

Given a multiple-access channel
$\channel = (\channelInOneAlph, \channelInTwoAlph, \channelOutAlph,
\channelpmf_{\channelOut | \channelInOne, \channelInTwo})$ and an
output distribution $\channelpmf_\channelOut$, we call a rate pair
$(\codebookRateOne, \codebookRateTwo)$ \emph{achievable} if there are
sequences of codebooks
$(\codebook_{1,\codebookBlocklength})_{\codebookBlocklength \geq 1},
(\codebook_{2,\codebookBlocklength})_{\codebookBlocklength \geq 1}$ of
rates $\codebookRateOne$ and $\codebookRateTwo$, respectively, and
each of block length $\codebookBlocklength$ such that
\[
\lim\limits_{\codebookBlocklength \rightarrow \infty}
  \totalvariation{\codebookpmf_{\channelOut^\codebookBlocklength | \codebook_{1,\codebookBlocklength}, \codebook_{2,\codebookBlocklength}} - \channelpmf_{\channelOut^\codebookBlocklength}}
=
0\,.
\]
The \emph{resolvability region}
$\capacityRegion{\channel}{\channelpmf_\channelOut}$ is defined as the
closure of the set of all achievable rate pairs.

The main resolvability result in this work is the following characterization of $\capacityRegion{\channel}{\channelpmf_\channelOut}$.
\begin{theorem}
\label{theorem:resolvability-region}
Let $\channel = (\channelInOneAlph, \channelInTwoAlph, \channelOutAlph, \channelpmf_{\channelOut | \channelInOne, \channelInTwo})$ be a channel and $\channelpmf_\channelOut$ an output distribution. Let
\begin{alignat*}{2}
\label{capacity-theorem-input-condition}
\capacityRegion{\channel}{\channelpmf_\channelOut}'
:=
\smash{\bigcup\limits_{\codebookpmf_\timeSharingRV, \channelpmf_{\channelInOne | \timeSharingRV}, \channelpmf_{\channelInTwo | \timeSharingRV}}}
  \{(\codebookRateOne, \codebookRateTwo):
    &\mutualInformationConditional{\channelInOne, \channelInTwo}{\channelOut}{\timeSharingRV} &\leq& \codebookRateOne + \codebookRateTwo
    \\
    \wedge~
    &\mutualInformationConditional{\channelInOne}{\channelOut}{\timeSharingRV} &\leq& \codebookRateOne
    \\
    \wedge~
    &\mutualInformationConditional{\channelInTwo}{\channelOut}{\timeSharingRV} &\leq& \codebookRateTwo
  \},
\end{alignat*}
where $\codebookpmf_\timeSharingRV$ ranges over all probability mass functions on the alphabet $\timeSharingAlph := \{1, \dots, \cardinality{\channelOutAlph} + 3\}$ and $\channelpmf_{\channelInOne | \timeSharingRV}$, $\channelpmf_{\channelInTwo | \timeSharingRV}$ range over all conditional probability distributions such that, for every $\timeSharingAlphElement \in \timeSharingAlph$ and $\channelOutAlphElement \in \channelOutAlph$,
\[
\channelpmf_\channelOut(\channelOutAlphElement)
=
\sum\limits_{\channelInOneAlphElement \in \channelInOneAlph}
\sum\limits_{\channelInTwoAlphElement \in \channelInTwoAlph}
  \channelpmf_{\channelInOne | \timeSharingRV}(\channelInOneAlphElement | \timeSharingAlphElement)
  \channelpmf_{\channelInTwo | \timeSharingRV}(\channelInTwoAlphElement | \timeSharingAlphElement)
  \channelpmf_{\channelOut | \channelInOne, \channelInTwo}(\channelOutAlphElement | \channelInOneAlphElement, \channelInTwoAlphElement),
\]
and the mutual information is computed with respect to the induced joint probability mass function. Then
\[
\capacityRegion{\channel}{\channelpmf_\channelOut}' = \capacityRegion{\channel}{\channelpmf_\channelOut}.
\]
\end{theorem}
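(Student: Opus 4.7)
The plan is to prove the two inclusions $\capacityRegion{\channel}{\channelpmf_\channelOut}'\subseteq\capacityRegion{\channel}{\channelpmf_\channelOut}$ and $\capacityRegion{\channel}{\channelpmf_\channelOut}\subseteq\capacityRegion{\channel}{\channelpmf_\channelOut}'$ separately. The direct part rests on random coding with coded time-sharing followed by a Cuff-type MAC soft-covering lemma, and the converse is a single-letterisation argument whose principal ingredient is the uniform continuity of Shannon entropy under variational distance, as advertised in the abstract.

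For the direct part, fix $(\codebookRateOne,\codebookRateTwo)$ strictly inside $\capacityRegion{\channel}{\channelpmf_\channelOut}'$ with witnesses $(\codebookpmf_\timeSharingRV,\channelpmf_{\channelInOne|\timeSharingRV},\channelpmf_{\channelInTwo|\timeSharingRV})$. I would partition $\{1,\ldots,\codebookBlocklength\}$ into $\cardinality{\timeSharingAlph}$ consecutive subblocks of lengths $\codebookBlocklength_\timeSharingAlphElement\approx \codebookBlocklength\codebookpmf_\timeSharingRV(\timeSharingAlphElement)$ and, inside subblock $\timeSharingAlphElement$, draw the two sub-codebooks i.i.d.\ from $\channelpmf_{\channelInOne|\timeSharingRV}(\cdot|\timeSharingAlphElement)$ and $\channelpmf_{\channelInTwo|\timeSharingRV}(\cdot|\timeSharingAlphElement)$. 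The structurally crucial observation is that the marginal constraint in the theorem forces the induced output distribution in every subblock to equal $\channelpmf_\channelOut$, so the target $\channelpmf_{\channelOut^\codebookBlocklength}$ factorises across subblocks and a Cuff-type MAC soft-covering lemma (to be established in Section~\ref{sec:main}) can be applied subblock by subblock. A feasible allocation of per-subblock rates whose weighted averages are $(\codebookRateOne,\codebookRateTwo)$ and that strictly satisfy the three per-subblock soft-covering inequalities exists precisely because $(\codebookRateOne,\codebookRateTwo)$ lies strictly inside the total region, which is the $\codebookpmf_\timeSharingRV$-weighted convex combination of the per-$\timeSharingAlphElement$ regions; such an allocation then yields doubly-exponentially small failure probability of the random codebook in each subblock. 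A single union bound over the $\cardinality{\timeSharingAlph}$ subblocks, combined with subadditivity of total variation under product distributions, extracts one deterministic codebook pair whose total variation from $\channelpmf_{\channelOut^\codebookBlocklength}$ is at most an arbitrary $\varepsilon>0$. Letting $\varepsilon\downarrow 0$ and approaching boundary points by interior sequences completes the direct part, since the achievable set is closed by definition.

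For the converse, take an achievable $(\codebook_{1,\codebookBlocklength},\codebook_{2,\codebookBlocklength})$ with $\varepsilon_\codebookBlocklength:=\totalvariation{\codebookpmf_{\channelOut^\codebookBlocklength|\codebook_{1,\codebookBlocklength},\codebook_{2,\codebookBlocklength}}-\channelpmf_{\channelOut^\codebookBlocklength}}\to 0$, place independent uniform messages $\messageRV_1,\messageRV_2$ on the two codebooks, and single-letterise via a time-sharing variable $\timeSharingRV_\codebookBlocklength$ uniform on $\{1,\ldots,\codebookBlocklength\}$. Message independence together with $\channelInOne^\codebookBlocklength$ (resp.\ $\channelInTwo^\codebookBlocklength$) being a deterministic function of $\messageRV_1$ (resp.\ $\messageRV_2$) alone yields $\channelInOne_{\timeSharingRV_\codebookBlocklength}\perp\channelInTwo_{\timeSharingRV_\codebookBlocklength}\mid\timeSharingRV_\codebookBlocklength$, and memorylessness reproduces the channel at the sampled position. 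Since $\channelOutAlph$ is fixed, Fannes' inequality applied to $\channelOut^\codebookBlocklength$ and to each single-letter marginal yields $\entropy{\channelOut^\codebookBlocklength}=\codebookBlocklength \entropy{\channelpmf_\channelOut}+o(\codebookBlocklength)$ and $\entropy{\channelOut_i}=\entropy{\channelpmf_\channelOut}+o(1)$ uniformly in $i$; substituting these into the MAC chain
\[
\codebookBlocklength\codebookRateOne=\entropy{\messageRV_1}\geq \mutualInformation{\messageRV_1}{\channelOut^\codebookBlocklength}=\entropy{\channelOut^\codebookBlocklength}-\entropyConditional{\channelOut^\codebookBlocklength}{\channelInOne^\codebookBlocklength}\geq \entropy{\channelOut^\codebookBlocklength}-\sum_{i=1}^\codebookBlocklength\entropyConditional{\channelOut_i}{\channelInOne_i}
\]
and its analogues for $\codebookRateTwo$ and $\codebookRateOne+\codebookRateTwo$, then dividing by $\codebookBlocklength$, gives the three claimed inequalities up to $o(1)$ errors, interpreted under the codebook-induced single-letter joint distribution. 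Compactness of the probability simplex supplies a subsequential weak limit $(\codebookpmf_\timeSharingRV^\ast,\channelpmf_{\channelInOne|\timeSharingRV}^\ast,\channelpmf_{\channelInTwo|\timeSharingRV}^\ast)$ whose $\channelOut$-marginal is exactly $\channelpmf_\channelOut$ (since each per-$\codebookBlocklength$ marginal differs from $\channelpmf_\channelOut$ by at most $\varepsilon_\codebookBlocklength$), lower semicontinuity of conditional mutual information preserves the three inequalities in the limit, and a Fenchel--Eggleston/Carathéodory trimming that preserves the three MI functionals together with the $\cardinality{\channelOutAlph}-1$ independent values of $\channelpmf_\channelOut$ bounds $\cardinality{\timeSharingAlph}\leq \cardinality{\channelOutAlph}+3$.

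The main obstacle I anticipate is threading the $o(\codebookBlocklength)$ error terms introduced by Fannes' inequality through each of the three MAC chains and through the subsequential passage to a limit whose $\channelOut$-marginal is \emph{exactly} $\channelpmf_\channelOut$; in particular, $\entropy{\channelpmf_\channelOut}$ must be reconciled with $\entropyConditional{\channelOut_{\timeSharingRV_\codebookBlocklength}}{\timeSharingRV_\codebookBlocklength}$ computed under the codebook-induced distribution, using only that each single-letter marginal is $\varepsilon_\codebookBlocklength$-close to $\channelpmf_\channelOut$. A secondary, more standard, step is the Carathéodory trimming, which must be verified to preserve the three conditional MIs and the $\channelOut$-marginal simultaneously, accounting for the $+3$ in the cardinality bound $\cardinality{\channelOutAlph}+3$.
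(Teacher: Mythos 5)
Your proposal is correct and follows essentially the same route as the paper: the direct part reduces to a doubly-exponential MAC soft-covering concentration result combined with time sharing over $\timeSharingRV$ (and, inside each subblock, between the corner points of the fixed-input pentagon), while the converse single-letterizes with a uniform time-sharing variable, invokes the Csisz\'ar--K\"orner entropy-continuity lemma (your Fannes step) for $\entropy{\channelOut^\codebookBlocklength}$ and for the single-letter marginals, and finishes with compactness plus a Fenchel--Eggleston--Carath\'eodory reduction to cardinality $\cardinality{\channelOutAlph}+3$. The only cosmetic difference is in the individual-rate converse bounds, where you use $\entropyConditional{\channelOut^\codebookBlocklength}{\channelInOne^\codebookBlocklength}\leq\sum_{\blockIndex}\entropyConditional{\channelOut_\blockIndex}{\channelInOne_\blockIndex}$ directly, whereas the paper subtracts the nonnegative quantity $\codebookBlocklength\,\mutualInformationConditional{\hat{\channelInTwo}}{\hat{\channelOut}}{\hat{\channelInOne},\hat{\timeSharingRV}}-\mutualInformationConditional{\tilde{\channelInTwo}^\codebookBlocklength}{\tilde{\channelOut}^\codebookBlocklength}{\tilde{\channelInOne}^\codebookBlocklength}$ from the exact sum-rate identity --- the same inequality in disguise.
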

\begin{figure}
\begin{center}
\begin{tikzpicture}[scale=2]
\draw [<->,thick] (0,2.5) node (yaxis) [above] {$\codebookRateTwo$}
    |- (2.5,0) node (xaxis) [right] {$\codebookRateOne$};
\coordinate       (i1lesser)      at (0.8,0);
\coordinate       (i1greater)     at (1.8,0);
\draw[thick] (i1lesser)  -- +(0,-2pt) node[below] {$\mutualInformationConditional{\channelInOne}{\channelOut}{\timeSharingRV}$};
\draw[thick] (i1greater) -- +(0,-2pt) node[below] {$\mutualInformationConditional{\channelInOne}{\channelOut}{\timeSharingRV,\channelInTwo}$};
\coordinate       (i2lesser)      at (0,0.8);
\coordinate       (i2greater)     at (0,1.8);
\draw[thick] (i2lesser)   -- +(-2pt,0) node[left] {$\mutualInformationConditional{\channelInTwo}{\channelOut}{\timeSharingRV}$};
\draw[thick] (i2greater) -- +(-2pt,0) node[left] {$\mutualInformationConditional{\channelInTwo}{\channelOut}{\timeSharingRV,\channelInOne}$};
\coordinate        (corner1)      at (1.8,0.8)    {};
\coordinate        (corner2)      at (0.8,1.8)    {};
\coordinate        (end1)         at (2.5,0.8)    {};
\coordinate        (end2)         at (0.8,2.5)    {};
\draw[dashed]      (i1greater) -- (corner1) -- (end1);
\draw[dashed]      (i2greater) -- (corner2) -- (end2);
\draw[dashed]      (corner1) -- (corner2);
\node at (1,1) {
  \begin{tabular}{c}
    capacity \\
    region
  \end{tabular}
};
\node at (1.7,1.5) {
  \begin{tabular}{c}
    resolvability \\
    region
  \end{tabular}
};
\end{tikzpicture}
\end{center}
\caption{The MAC capacity region and the resolvability region as characterized in Theorem~\ref{theorem:resolvability-region}.}
\label{figure:resolvability-region}
\end{figure}
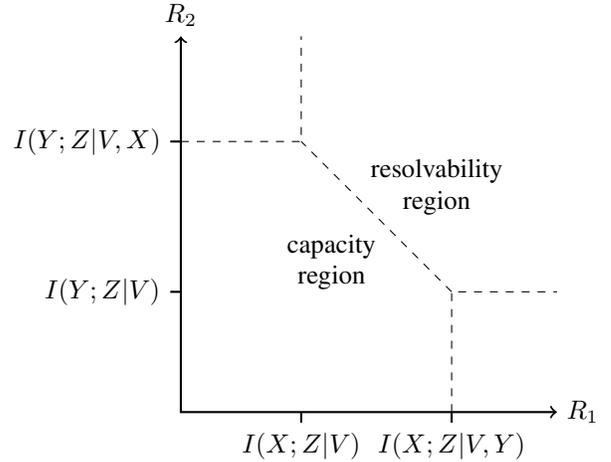

Figure~\ref{figure:resolvability-region} illustrates the resolvability
region as characterized in this theorem and the well-known capacity
region of the MAC.

Theorem~\ref{converse-theorem} in Section~\ref{sec:converse} yields
the converse part
$\capacityRegion{\channel}{\channelpmf_\channelOut}' \supseteq
\capacityRegion{\channel}{\channelpmf_\channelOut}$ of the above
theorem. The proof is remarkably simple and mainly relies on the fact
that entropy is continuous with respect to the variational distance.

The direct part
$\capacityRegion{\channel}{\channelpmf_\channelOut}' \subseteq
\capacityRegion{\channel}{\channelpmf_\channelOut}$ follows from a
stronger statement proven in Section~\ref{sec:main}. In particular,
the statement shows that not only does a sequence of codebook pairs
achieving the rates exist, but if they are constructed randomly, the
probability of drawing a bad pair of codebooks vanishes doubly
exponentially with increasing block length. Precisely, we show the
following. 

\begin{theorem}
\label{theorem:soft-covering-two-transmitters-convex}
Let
$\channel = (\channelInOneAlph, \channelInTwoAlph, \channelOutAlph,
\channelpmf_{\channelOut | \channelInOne, \channelInTwo})$ be a given
channel; let $\channelpmf_\channelInOne$ and $\channelpmf_\channelInTwo$
be input distributions,
\begin{align}
\label{theorem:soft-covering-two-transmitters-convex-rateone}
\codebookRateOne &> \mutualInformation{\channelInOne}{\channelOut}
\\
\label{theorem:soft-covering-two-transmitters-convex-ratetwo}
\codebookRateTwo &> \mutualInformation{\channelInTwo}{\channelOut}
\\
\label{theorem:soft-covering-two-transmitters-convex-sumrates}
\codebookRateOne + \codebookRateTwo &> \mutualInformation{\channelInOne, \channelInTwo}{\channelOut}.
\end{align}
Then there exist $\finalconstOne, \finalconstTwo > 0$ such that for
sufficiently large block length $\codebookBlocklength$, the codebook
distributions of block length $\codebookBlocklength$ and rates
$\codebookRateOne$ and $\codebookRateTwo$ satisfy
\begin{multline*}
\Probability_{\codebookOne, \codebookTwo} \left(
  \totalvariation{
    \codebookpmf_{\channelOut^\codebookBlocklength | \codebookOne, \codebookTwo} - \channelpmf_{\channelOut^\codebookBlocklength}
  }
  >
  \exp(-\finalconstOne\codebookBlocklength)
\right)
\\
\leq
\exp\left(-\exp\left(\finalconstTwo\codebookBlocklength\right)\right).
\end{multline*}
\end{theorem}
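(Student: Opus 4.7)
The plan is to follow the soft covering paradigm due to Cuff and adapt it to the two-transmitter setting. Fix a realization $\channelOutAlphElement^\codebookBlocklength \in \channelOutAlph^\codebookBlocklength$ and observe that the random average
\[
\codebookpmf_{\channelOut^\codebookBlocklength | \codebookOne, \codebookTwo}(\channelOutAlphElement^\codebookBlocklength)
=
\frac{1}{\cardinality{\codebookOne}\cardinality{\codebookTwo}}\sum_{m_1, m_2}
  \channelpmf^\codebookBlocklength(\channelOutAlphElement^\codebookBlocklength | \codebookOneWord{m_1}, \codebookTwoWord{m_2})
\]
has expectation $\channelpmf_{\channelOut^\codebookBlocklength}(\channelOutAlphElement^\codebookBlocklength)$ over the random codebooks. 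The goal is to show doubly exponential concentration of this random average around its mean for every $\channelOutAlphElement^\codebookBlocklength$, and then apply a union bound over $\channelOutAlphElement^\codebookBlocklength$ to control the total variation distance.

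I would begin by splitting $\totalvariation{\codebookpmf_{\channelOut^\codebookBlocklength | \codebookOne, \codebookTwo} - \channelpmf_{\channelOut^\codebookBlocklength}}$ according to whether $\channelOutAlphElement^\codebookBlocklength$ lies in a typical set $\typicalSet{\typicalityParam}{\channelOut^\codebookBlocklength}$ under $\channelpmf_{\channelOut}$. The contribution of atypical $\channelOutAlphElement^\codebookBlocklength$ is easy: the target $\channelpmf_{\channelOut^\codebookBlocklength}$ places exponentially little mass on the complement, and the random mass $\codebookpmf_{\channelOut^\codebookBlocklength | \codebookOne, \codebookTwo}((\typicalSet{\typicalityParam}{\channelOut^\codebookBlocklength})^c)$ concentrates around its equally small mean via a routine Chernoff bound whose summands are bounded by $1$.

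For typical $\channelOutAlphElement^\codebookBlocklength$, I would truncate the channel transition $\channelpmf^\codebookBlocklength(\channelOutAlphElement^\codebookBlocklength | \channelInOneAlphElement^\codebookBlocklength, \channelInTwoAlphElement^\codebookBlocklength)$ by the indicator that the triple $(\channelInOneAlphElement^\codebookBlocklength, \channelInTwoAlphElement^\codebookBlocklength, \channelOutAlphElement^\codebookBlocklength)$ is jointly typical. Each truncated summand is then upper-bounded by $\exp(-\codebookBlocklength (\entropyConditional{\channelOut}{\channelInOne, \channelInTwo} - \typicalityParam))$ while the truncated sum's mean remains close to $\channelpmf_{\channelOut^\codebookBlocklength}(\channelOutAlphElement^\codebookBlocklength)$. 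The untruncated remainder splits into three regimes—depending on which among the predicates ``$(\channelInOneAlphElement^\codebookBlocklength, \channelOutAlphElement^\codebookBlocklength)$ typical'', ``$(\channelInTwoAlphElement^\codebookBlocklength, \channelOutAlphElement^\codebookBlocklength)$ typical'', and ``$(\channelInOneAlphElement^\codebookBlocklength, \channelInTwoAlphElement^\codebookBlocklength, \channelOutAlphElement^\codebookBlocklength)$ jointly typical'' fail—and these three regimes are controlled respectively by the rate conditions~\eqref{theorem:soft-covering-two-transmitters-convex-rateone}, \eqref{theorem:soft-covering-two-transmitters-convex-ratetwo}, and~\eqref{theorem:soft-covering-two-transmitters-convex-sumrates}.

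The main obstacle is that the $\cardinality{\codebookOne} \cardinality{\codebookTwo}$ summands are not independent: each codeword $\codebookOneWord{m_1}$ appears in $\cardinality{\codebookTwo}$ distinct pairs (and symmetrically for $\codebookTwoWord{m_2}$), so Hoeffding/Chernoff cannot be invoked directly on the bilinear sum. I would handle this by a nested concentration argument: condition first on $\codebookOne$, whereupon the summands are, for each fixed $m_1$, independent as $m_2$ varies; apply an inner Chernoff bound to the truncated sum over $m_2$; then apply an outer Chernoff bound over the independent randomness in $\codebookOne$. Carefully calibrating the truncation thresholds against the conditional typical sets so that each level exploits precisely one of the rate gaps yields, for each $\channelOutAlphElement^\codebookBlocklength$, a failure probability of the form $\exp(-\exp(\finalconstTwo \codebookBlocklength))$. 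Because $\cardinality{\typicalSet{\typicalityParam}{\channelOut^\codebookBlocklength}} \leq \exp(\codebookBlocklength(\entropy{\channelOut} + \typicalityParam))$, the subsequent union bound over typical $\channelOutAlphElement^\codebookBlocklength$ preserves the doubly exponential decay and delivers the claimed bound.
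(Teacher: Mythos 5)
Your overall architecture (typicality truncation plus nested Chernoff concentration, then a union bound over output sequences) matches the paper's machinery for its Lemma~\ref{lemma:soft-covering-two-transmitters-typical} and Lemma~\ref{lemma:soft-covering-two-transmitters-atypical}, but there is a genuine gap: a single nested concentration argument cannot reach all rate pairs satisfying (\ref{theorem:soft-covering-two-transmitters-convex-rateone})--(\ref{theorem:soft-covering-two-transmitters-convex-sumrates}); it only reaches a corner point. Concretely, in your nesting you fix $\codebookOne$ and a message $m_1$ and apply the inner Chernoff bound to the sum over $m_2$. For that bound to give a superpolynomially small failure probability, the number of summands $\exp(\codebookBlocklength\codebookRateTwo)$ must dominate the ratio of the maximal truncated summand to the conditional mean $\channelpmf_{\channelOut^\codebookBlocklength|\channelInOne^\codebookBlocklength}(\channelOutAlphElement^\codebookBlocklength\,|\,\codebookOneWord{m_1})$; that ratio is $\exp(\informationDensityConditional{\channelInTwoAlphElement^\codebookBlocklength}{\channelOutAlphElement^\codebookBlocklength}{\channelInOneAlphElement^\codebookBlocklength})$, so the inner step requires $\codebookRateTwo > \mutualInformationConditional{\channelInTwo}{\channelOut}{\channelInOne}$ (and the reverse nesting requires $\codebookRateOne > \mutualInformationConditional{\channelInOne}{\channelOut}{\channelInTwo}$). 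Neither condition is implied by (\ref{theorem:soft-covering-two-transmitters-convex-rateone})--(\ref{theorem:soft-covering-two-transmitters-convex-sumrates}): an interior point of the dominant face, e.g.\ $\codebookRateOne$ slightly above $\mutualInformation{\channelInOne}{\channelOut}$ and $\codebookRateTwo$ slightly above $\mutualInformation{\channelInTwo}{\channelOut}$ with the sum-rate condition met, violates both. Your claim that the three ``regimes'' are controlled respectively by the three rate conditions does not survive this accounting: the rate conditions enter through the truncated (typical) term, where the per-summand bound must be compared with the number of summands at that level of the nesting, and that comparison inevitably produces a conditional mutual information on one side. This is exactly why the paper first proves only the corner point $\codebookRateOne > \mutualInformationConditional{\channelInOne}{\channelOut}{\channelInTwo}$, $\codebookRateTwo > \mutualInformation{\channelInTwo}{\channelOut}$ (Theorem~\ref{theorem:soft-covering-two-transmitters}) and then obtains Theorem~\ref{theorem:soft-covering-two-transmitters-convex} by an explicit time-sharing step, splitting the block into lengths $\hat{\convexityParam}\codebookBlocklength$ and $(1-\hat{\convexityParam})\codebookBlocklength$ and applying the corner-point result with the roles of $\channelInOne$ and $\channelInTwo$ exchanged on the two sub-blocks. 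Your proposal contains no mechanism playing this role, so it does not prove the stated theorem.

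A secondary issue: your atypical term is a double sum over $(m_1,m_2)$ of $[0,1]$-valued quantities that are \emph{not} independent (each $\codebookOneWord{m_1}$ appears in $\exp(\codebookBlocklength\codebookRateTwo)$ pairs), so the ``routine Chernoff bound'' you invoke there is not available. The paper handles exactly this point with Janson's inequality (Theorem~\ref{theorem:janson}), partitioning the $\exp(\codebookBlocklength(\codebookRateOne+\codebookRateTwo))$ summands into $\exp(\codebookBlocklength\max(\codebookRateOne,\codebookRateTwo))$ classes of mutually independent terms; you would need this (or another device for dependent sums) to make the atypical bound rigorous.
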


To prove this theorem, we revisit the proof in~\cite{WieseWiretap},
thereby focusing on channel resolvability. We use a slightly different
technique than that in~\cite{CuffSoftCovering}, which we extend to the
multiple-access case to provide an explicit statement and a more
intuitive proof for a result only implicitly contained
in~\cite{WieseWiretap}. In an excursus, we show how a slight
refinement of the same technique similarly as
in~\cite{CuffSoftCovering} yields also a partial second-order result.

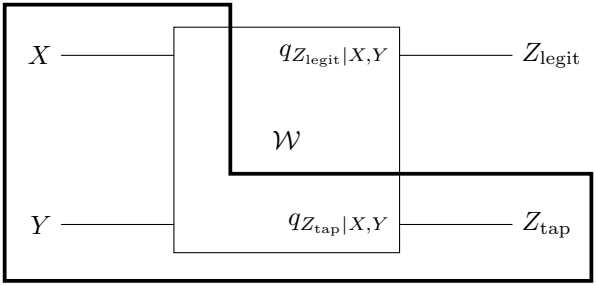
\begin{figure}
\begin{center}
\begin{tikzpicture}[scale=1.5]
\node[left]   (channelInOne)                at (   0,1.75) {$\channelInOne$};
\node[right]  (channelInOneConnect)         at (   1,1.75) {};
\node[left]   (channelInTwo)                at (   0,0.25) {$\channelInTwo$};
\node[right]  (channelInTwoConnect)         at (   1,0.25) {};
\node[right]  (channelOutLegit)             at (   4,1.75) {$\channelOutLegit$};
\node[left]   (channelOutLegitConnect)      at (   3,1.75) {$\channelpmf_{\channelOutLegit | \channelInOne, \channelInTwo}$};
\node[right]  (channelOutWiretapper)        at (   4,0.25) {$\channelOutWiretapper$};
\node[left]   (channelOutWiretapperConnect) at (   3,0.25) {$\channelpmf_{\channelOutWiretapper | \channelInOne, \channelInTwo}$};
\node         (channel)                     at (   2,   1) {$\channel$};

\draw (1,2) rectangle (3,0);
\draw (channelInOne) -- (channelInOneConnect);
\draw (channelInTwo) -- (channelInTwoConnect);
\draw (channelOutLegitConnect) -- (channelOutLegit);
\draw (channelOutWiretapperConnect) -- (channelOutWiretapper);


\draw[line width=.5mm] (-.5,-.25) -- (4.7,-.25) -- (4.7,.7) -- (1.5,.7) -- (1.5,2.2) -- (-.5,2.2) -- cycle;
\end{tikzpicture}
\end{center}
\caption{Channel resolvability problem as part of the wiretap coding problem.}
\label{figure:wiretap-decompose}
\end{figure}

In Section~\ref{sec:secrecy}, we discuss different notions of secrecy
and the interconnection between them. In particular, in addition to
the classical concept of strong secrecy, we review the notions of
distinguishing security and semantic security, both of which originate
from the field of cryptography. Distinguishing security allows us to
decompose the problem for wiretap coding into a problem of coding for
the MAC and a channel resolvability problem as shown in
Figure~\ref{figure:wiretap-decompose}. The concept of semantic
security, on the other hand, which is implied by distinguishing
security, has some very strong operational implications, some of which
we explore from the point of view of Ahlswede's General Theory of
Information Transfer~\cite{AhlswedeGeneral}.

\section{Notation, Definitions and Prerequisites}
\label{section:preliminaries}
The operations $\log$ and $\exp$ use Euler's number as a basis, and all information quantities are given in nats. $\positive{\cdot}$ denotes the maximum of its argument and $0$.

A \emph{channel}
$\channel = (\channelInOneAlph, \channelInTwoAlph, \channelOutAlph, \channelpmf_{\channelOut | \channelInOne, \channelInTwo})$
is given by finite input alphabets $\channelInOneAlph$ and $\channelInTwoAlph$, a finite output alphabet $\channelOutAlph$ and a collection of probability mass functions $\channelpmf_{\channelOut | \channelInOne, \channelInTwo}$ on $\channelOutAlph$ for each pair $(\channelInOneAlphElement,\channelInTwoAlphElement) \in \channelInOneAlph \times \channelInTwoAlph$. The random variables $\channelInOne$, $\channelInTwo$ and $\channelOut$ represent the two channel inputs and the channel output, respectively. \emph{Input distributions} for the channel are probability mass functions on $\channelInOneAlph$ and $\channelInTwoAlph$ denoted by $\channelpmf_{\channelInOne}$ and $\channelpmf_{\channelInTwo}$, respectively. We define an \emph{induced joint distribution} $\channelpmf_{\channelInOne, \channelInTwo, \channelOut}$ on $\channelInOneAlph \times \channelInTwoAlph \times \channelOutAlph$ by
$\channelpmf_{\channelInOne, \channelInTwo, \channelOut}(\channelInOneAlphElement,\channelInTwoAlphElement,\channelOutAlphElement) :=  \channelpmf_{\channelInOne}(\channelInOneAlphElement) \channelpmf_{\channelInTwo}(\channelInTwoAlphElement) \channelpmf_{\channelOut | \channelInOne, \channelInTwo}(\channelOutAlphElement | \channelInOneAlphElement,\channelInTwoAlphElement)$
and the \emph{output distribution}
$
\channelpmf_\channelOut(\channelOutAlphElement)
:=
\sum_{\channelInOneAlphElement \in \channelInOneAlph}
\sum_{\channelInTwoAlphElement \in \channelInTwoAlph}
  \channelpmf_{\channelInOne, \channelInTwo, \channelOut}(\channelInOneAlphElement, \channelInTwoAlphElement, \channelOutAlphElement)
$
is the marginal distribution of $\channelOut$.

By a pair of \emph{codebooks} of block length $\codebookBlocklength \geq 1$ and rates $\codebookRateOne \geq 0$ and $\codebookRateTwo \geq 0$, we mean finite sequences
$\codebookOne = (\codebookOneWord{\codewordIndex})_{\codewordIndex = 1}^{\exp(\codebookBlocklength\codebookRateOne)}$
and
$\codebookTwo = (\codebookTwoWord{\codewordIndex})_{\codewordIndex = 1}^{\exp(\codebookBlocklength\codebookRateTwo)}$,
where the \emph{codewords} $\codebookOneWord{\codewordIndex} \in \channelInOneAlph^\codebookBlocklength$ and $\codebookTwoWord{\codewordIndex} \in \channelInTwoAlph^\codebookBlocklength$ are finite sequences of elements of the input alphabets. We define a probability distribution $\Probability_{\codebookOne, \codebookTwo}$ on these codebooks as i.i.d. drawings in each component of each codeword according to $\channelpmf_\channelInOne$ and $\channelpmf_\channelInTwo$, respectively. Accordingly, we define the \emph{output distribution induced by $\codebookOne$ and $\codebookTwo$} on $\channelOutAlph^\codebookBlocklength$ by
\begin{multline*}
\codebookpmf_{\channelOut^\codebookBlocklength | \codebookOne, \codebookTwo}(\channelOutAlphElement^\codebookBlocklength) :=
  \exp(-\codebookBlocklength(\codebookRateOne+\codebookRateTwo))
  \\ \cdot
  \sum\limits_{\codewordIndex_1=1}^{\exp(\codebookBlocklength\codebookRateOne)}
  \sum\limits_{\codewordIndex_2=1}^{\exp(\codebookBlocklength\codebookRateTwo)}
      \channelpmf_{\channelOut^\codebookBlocklength | \channelInOne^\codebookBlocklength, \channelInTwo^\codebookBlocklength}(\channelOutAlphElement^\codebookBlocklength | \codebookOneWord{\codewordIndex_1}, \codebookTwoWord{\codewordIndex_2}).
\end{multline*}

Given probability distributions $\generalpdistOne$ and $\generalpdistTwo$ on a finite set $\generalrvOneAlph$ with mass functions $\generalpmfOne$ and $\generalpmfTwo$, respectively, and positive $\renyiParam \neq 1$, the \emph{Rényi divergence of order $\renyiParam$ of $\generalpdistOne$ from $\generalpdistTwo$} is defined as
\[
\renyidiv{\renyiParam}{\generalpdistOne}{\generalpdistTwo}
:=
\frac{1}{\renyiParam-1}
\log
\sum\limits_{\generalrvOneValue \in \generalrvOneAlph}
  \generalpmfOne(\generalrvOneValue)^\renyiParam
  \generalpmfTwo(\generalrvOneValue)^{1-\renyiParam}.
\]
Furthermore, we define the \emph{variational distance} between $\generalpdistOne$ and $\generalpdistTwo$ (or between their mass functions) as
\[
\totalvariation{\generalpmfOne - \generalpmfTwo}
:=
\frac{1}{2} \sum\limits_{\generalrvOneValue \in \generalrvOneAlph} \absolute{\generalpmfOne(\generalrvOneValue) - \generalpmfTwo(\generalrvOneValue)}
=
\sum\limits_{\generalrvOneValue \in \generalrvOneAlph} \positive{\generalpmfOne(\generalrvOneValue) - \generalpmfTwo(\generalrvOneValue)}.
\]

Given random variables $\generalrvOne$, $\generalrvTwo$ and $\generalrvThree$ distributed according to $\generalpmf_{\generalrvOne, \generalrvTwo, \generalrvThree}$, we define the \emph{(conditional) information density} as
\[
\informationDensity{\generalrvOneValue}{\generalrvTwoValue} := \log \frac{\generalpmf_{\generalrvTwo | \generalrvOne}(\generalrvTwoValue | \generalrvOneValue)}{\generalpmf_{\generalrvTwo}(\generalrvTwoValue)}
,~~
\informationDensityConditional{\generalrvOneValue}{\generalrvTwoValue}{\generalrvThreeValue} := \log \frac{\generalpmf_{\generalrvTwo | \generalrvOne, \generalrvThree}(\generalrvTwoValue | \generalrvOneValue, \generalrvThreeValue)}{\generalpmf_{\generalrvTwo | \generalrvThree}(\generalrvTwoValue | \generalrvThreeValue)}.
\]
The (conditional) mutual information is the expected value of the (conditional) information density. We use $\entropy{\generalrvOne}$ and $\entropyConditional{\generalrvOne}{\generalrvTwo}$ to denote Shannon's (conditional) entropy.

The following inequality was introduced in~\cite{Berry} and~\cite{Esseen}; we use a refinement here which follows e.g. from~\cite{BeekBerryEsseen}.
\begin{theorem}[Berry-Esseen Inequality]
\label{theorem:berry-esseen}
Given a sequence $(\generalrvOne_\generalSummationIndex)_{\generalSummationIndex=1}^{\generalSummationBound}$ of i.i.d. copies of a random variable $\generalrvOne$ on the reals with $\Expectation \generalrvOne = 0$, $\Expectation \generalrvOne^2 = \generalSecondMoment^2 < \infty$ and $\Expectation \absolute{\generalrvOne}^3 = \generalThirdMoment < \infty$, define $\bar{\generalrvOne} := (\generalrvOne_1 + \dots + \generalrvOne_\generalSummationBound)/\generalSummationBound$. Then the cumulative distribution functions $\generalcdf(\generalrvOneValue) := \Probability(\bar{\generalrvOne}\sqrt{\generalSummationBound}/\generalSecondMoment \leq \generalrvOneValue)$ of $\bar{\generalrvOne}\sqrt{\generalSummationBound}/\generalSecondMoment$ and $\normalcdf(\generalrvOneValue) := \int_{-\infty}^\generalrvOneValue 1/(2\pi) \exp(-\generalintegrand^2/2)  d \generalintegrand$ of the standard normal distribution satisfy
\[
\forall \generalrvOneValue \in \reals:~
\absolute{\generalcdf(\generalrvOneValue) - \normalcdf(\generalrvOneValue)}
\leq
\frac{\generalThirdMoment}
     {\generalSecondMoment^3 \sqrt{\generalSummationBound}}.
\]
\end{theorem}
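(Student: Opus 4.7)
The plan is to prove the bound via the classical Fourier-analytic route, combining characteristic functions with Esseen's smoothing inequality. Set $Y_\generalSummationIndex := \generalrvOne_\generalSummationIndex/\generalSecondMoment$, so that the $Y_\generalSummationIndex$ are i.i.d.\ with $\Expectation Y_\generalSummationIndex = 0$, $\Expectation Y_\generalSummationIndex^2 = 1$, and $\Expectation \absolute{Y_\generalSummationIndex}^3 = \generalThirdMoment/\generalSecondMoment^3$. Write $\varphi(t) := \Expectation \exp(itY_1)$ for the characteristic function of $Y_1$; then the normalized sum $S_\generalSummationBound := \bar{\generalrvOne}\sqrt{\generalSummationBound}/\generalSecondMoment = (Y_1 + \dots + Y_\generalSummationBound)/\sqrt{\generalSummationBound}$ has characteristic function $\varphi_\generalSummationBound(t) = \varphi(t/\sqrt{\generalSummationBound})^\generalSummationBound$, while the standard normal law has characteristic function $\psi(t) = \exp(-t^2/2)$. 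The problem thus reduces to bounding $\sup_\generalintegrand \absolute{\generalcdf(\generalintegrand) - \normalcdf(\generalintegrand)}$ in terms of the difference $\varphi_\generalSummationBound - \psi$.

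The key analytic tool is Esseen's smoothing lemma: for any $T > 0$,
\[
\sup_{\generalintegrand \in \reals} \absolute{\generalcdf(\generalintegrand) - \normalcdf(\generalintegrand)}
\leq
\frac{1}{\pi} \int_{-T}^{T} \absolute{\frac{\varphi_\generalSummationBound(t) - \psi(t)}{t}}\, dt + \frac{c_0}{T},
\]
where $c_0$ is an absolute constant coming from the supremum of the standard normal density. The integrand on the right is controlled by a third-order Taylor expansion of $\varphi$ around zero: $\Expectation Y_1 = 0$, $\Expectation Y_1^2 = 1$, and the elementary estimate $\absolute{e^{iu} - 1 - iu + u^2/2} \leq \absolute{u}^3/6$ together yield $\absolute{\varphi(u) - (1 - u^2/2)} \leq \generalThirdMoment \absolute{u}^3 / (6\generalSecondMoment^3)$. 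Substituting $u = t/\sqrt{\generalSummationBound}$ and applying the telescoping bound $\absolute{z^\generalSummationBound - w^\generalSummationBound} \leq \generalSummationBound \absolute{z - w} \max(\absolute{z}, \absolute{w})^{\generalSummationBound - 1}$ with $z := \varphi(t/\sqrt{\generalSummationBound})$ and $w := \exp(-t^2/(2\generalSummationBound))$, one bounds the integrand on the range $\absolute{t} \leq c_1 \generalSecondMoment^3 \sqrt{\generalSummationBound}/\generalThirdMoment$ by a constant multiple of $(\generalThirdMoment/(\generalSecondMoment^3 \sqrt{\generalSummationBound}))\, t^2 \exp(-t^2/4)$, where the Gaussian envelope uses the pointwise bound $\absolute{\varphi(t/\sqrt{\generalSummationBound})} \leq \exp(-t^2/(3\generalSummationBound))$ valid in that range.

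Integrating this Gaussian-tempered estimate over $t \in \reals$ contributes $O(\generalThirdMoment/(\generalSecondMoment^3 \sqrt{\generalSummationBound}))$; choosing $T := c_1 \generalSecondMoment^3 \sqrt{\generalSummationBound}/\generalThirdMoment$ makes the boundary term $c_0/T$ of the same order, producing the theorem up to an absolute constant. The main obstacle, and the point of the refinement cited from van Beek, lies in tuning this absolute constant down to the clean form $\generalThirdMoment/(\generalSecondMoment^3 \sqrt{\generalSummationBound})$ with implicit prefactor $1$ as claimed in the statement. This requires a more careful regime-dependent analysis — using sharper pointwise bounds on $\absolute{\varphi(t/\sqrt{\generalSummationBound})}$ on moderate versus tail ranges of $t$, together with a version of Esseen's smoothing inequality that optimizes the interplay between the two terms — rather than the textbook estimate sketched above, which only yields the correct asymptotic order.
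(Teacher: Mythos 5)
This statement is not proven in the paper at all: it is imported as a prerequisite, with the authors explicitly attributing the inequality to Berry and Esseen and the specific form used here (prefactor $1$, in fact $C\approx 0.7975\le 1$) to the refinement of van Beek, which they cite rather than reprove. So there is no in-paper argument to compare against; the only question is whether your sketch establishes the stated bound on its own.

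It does not, and you essentially say so yourself. The smoothing-inequality route you outline (normalize to unit variance, expand the characteristic function to third order, telescope $\varphi(t/\sqrt{\generalSummationBound})^\generalSummationBound-\exp(-t^2/2)$, integrate against $1/\absolute{t}$ up to $T\asymp \generalSecondMoment^3\sqrt{\generalSummationBound}/\generalThirdMoment$) is the correct classical strategy and, once the range restrictions are handled (e.g.\ the bound $\absolute{\varphi(t/\sqrt{\generalSummationBound})}\le\exp(-t^2/(3\generalSummationBound))$ only holds for $\absolute{t}$ small relative to $\generalSecondMoment^3\sqrt{\generalSummationBound}/\generalThirdMoment$, using $\generalThirdMoment/\generalSecondMoment^3\ge 1$), it yields $\sup_x\absolute{\generalcdf(x)-\normalcdf(x)}\le C\,\generalThirdMoment/(\generalSecondMoment^3\sqrt{\generalSummationBound})$ only for some unspecified absolute constant $C$ (the textbook execution gives roughly $C\approx 3$). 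But the entire nontrivial content of the theorem as stated -- and as the paper uses it in the second-order analysis, where the constant enters $\secondOrderAtypicalProbability{\txIndex}$ explicitly -- is that the constant can be taken to be $1$. That is precisely the part you defer to ``a more careful regime-dependent analysis,'' i.e.\ to van Beek's result; it is not a routine tightening of the sketch but a substantial separate argument. So as a blind proof of the stated inequality the proposal has a genuine gap (the constant), although your identification of where the gap lies and which reference closes it matches the paper's own treatment, which is to cite rather than prove the result.
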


We further use variations of the concentration bounds introduced in~\cite{HoeffdingInequalities}.
\begin{theorem}[Chernoff-Hoeffding Bound]
\label{theorem:hoeffding}
Suppose $\generalrvOne = \sum_{\generalSummationIndex=1}^{\generalSummationBound} \generalrvOne_\generalSummationIndex$, where the random variables in the sequence $(\generalrvOne_\generalSummationIndex)_{\generalSummationIndex=1}^\generalSummationBound$ are independently distributed with values in $[0,1]$ and $\Expectation \generalrvOne \leq \lemmaexpectation$. Then for $0 < \lemmaconst < 1$,
\[
\Probability(\generalrvOne > \lemmaexpectation(1+\lemmaconst)) \leq \exp\left(-\frac{\lemmaconst^2}{3} \lemmaexpectation \right).
\]
\end{theorem}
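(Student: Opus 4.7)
The plan is to follow the standard Chernoff argument: apply Markov's inequality to an exponential transform, exploit independence to factor the moment generating function, bound each factor using that $\generalrvOne_\generalSummationIndex \in [0,1]$, and finally optimize the free parameter. For $t > 0$ Markov gives
\[
\Probability(\generalrvOne > \lemmaexpectation(1+\lemmaconst))
= \Probability(\exp(t\generalrvOne) > \exp(t\lemmaexpectation(1+\lemmaconst)))
\leq \exp(-t\lemmaexpectation(1+\lemmaconst))\, \Expectation \exp(t\generalrvOne),
\]
and the independence of the $\generalrvOne_\generalSummationIndex$ turns $\Expectation \exp(t\generalrvOne)$ into $\prod_\generalSummationIndex \Expectation \exp(t\generalrvOne_\generalSummationIndex)$.

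For each factor I would use that, on the interval $[0,1]$, the convex function $\generalintegrand \mapsto \exp(t\generalintegrand)$ lies below its chord, so $\exp(t\generalrvOne_\generalSummationIndex) \leq 1 + (\exp(t)-1)\generalrvOne_\generalSummationIndex$; taking expectations and then using $1+y \leq \exp(y)$ yields $\Expectation \exp(t\generalrvOne_\generalSummationIndex) \leq \exp((\exp(t)-1)\Expectation \generalrvOne_\generalSummationIndex)$. Multiplying over $\generalSummationIndex$ and invoking the hypothesis $\Expectation \generalrvOne \leq \lemmaexpectation$ (together with $\exp(t) - 1 > 0$) gives $\Expectation \exp(t\generalrvOne) \leq \exp((\exp(t)-1)\lemmaexpectation)$. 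Plugging into the Markov bound produces
\[
\Probability(\generalrvOne > \lemmaexpectation(1+\lemmaconst))
\leq \exp\bigl(\lemmaexpectation [(\exp(t)-1) - t(1+\lemmaconst)]\bigr).
\]

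Setting $t = \log(1+\lemmaconst)$, which is the minimizer of the exponent and is positive since $\lemmaconst > 0$, collapses the bound to $\exp(\lemmaexpectation[\lemmaconst - (1+\lemmaconst)\log(1+\lemmaconst)])$. It then remains to verify the numerical inequality
\[
\lemmaconst - (1+\lemmaconst)\log(1+\lemmaconst) \leq -\frac{\lemmaconst^2}{3}
\qquad \text{for } 0 < \lemmaconst < 1,
\]
which is the only step that is not entirely mechanical and is therefore the main (minor) obstacle. I would prove it by introducing the auxiliary function $f(\lemmaconst) := (1+\lemmaconst)\log(1+\lemmaconst) - \lemmaconst - \lemmaconst^2/3$ and showing $f \geq 0$ on $[0,1]$. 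One checks $f(0) = 0$, $f'(\lemmaconst) = \log(1+\lemmaconst) - 2\lemmaconst/3$ with $f'(0) = 0$, and that $f'$ is concave with $f'(1) = \log 2 - 2/3 > 0$, so $f' \geq 0$ on $[0,1]$ and hence $f \geq 0$ there. Combining this with the exponential bound yields the claimed $\exp(-\lemmaconst^2 \lemmaexpectation/3)$.
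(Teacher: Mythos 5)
Your proof is correct. The paper does not prove this theorem at all---it imports it from the cited reference ([Ex.\ 1.1] of the concentration-of-measure textbook)---and your argument is exactly the standard Chernoff route that such references use: Markov applied to $\exp(t\generalrvOne)$, factorization by independence, the chord bound $\exp(t\generalintegrand)\leq 1+(\exp(t)-1)\generalintegrand$ on $[0,1]$ together with $1+y\leq \exp(y)$ and $\Expectation\generalrvOne\leq\lemmaexpectation$, and the choice $t=\log(1+\lemmaconst)$. Your verification of the remaining numerical inequality also checks out: with $f(\lemmaconst)=(1+\lemmaconst)\log(1+\lemmaconst)-\lemmaconst-\lemmaconst^2/3$ one has $f'(\lemmaconst)=\log(1+\lemmaconst)-2\lemmaconst/3$, which is concave (its derivative $1/(1+\lemmaconst)-2/3$ is decreasing), vanishes at $0$ and is positive at $1$, hence lies above the chord and is nonnegative on $[0,1]$, so $f\geq 0$ there and the bound $\exp(-\lemmaconst^2\lemmaexpectation/3)$ follows.
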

This version can e.g. be found in~\cite[Ex. 1.1]{ConcentrationTextbook}. We also use an extension of the Chernoff-Hoeffding bound for dependent variables due to Janson~\cite[Theorem 2.1]{JansonLargeDeviations}, of which we state only a specialized instance that is used in this paper.
\begin{theorem}[Janson~\cite{JansonLargeDeviations}]
\label{theorem:janson}
Suppose $\generalrvOne = \sum_{\generalSummationIndex=1}^{\generalSummationBound} \generalrvOne_\generalSummationIndex$, where the random variables in the sequence $(\generalrvOne_\generalSummationIndex)_{\generalSummationIndex=1}^\generalSummationBound$ take values in $[0,1]$ and can be partitioned into $\coverNumber \geq 1$ sets such that the random variables in each set are independently distributed. Then, for $\lemmaconst > 0$,
\[
\Probability(\generalrvOne \geq \Expectation \generalrvOne + \lemmaconst)
\leq
\exp\left(
  -2 \frac{\lemmaconst^2}
          {\coverNumber \cdot \generalSummationBound}
\right).
\]
\end{theorem}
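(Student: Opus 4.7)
The plan is to attack this by the Chernoff–Cramér method, reducing everything to a moment-generating function bound. Writing the partition as $\{S_1,\dots,S_\chi\}$, set $T_j := \sum_{k \in S_j} \generalrvOne_\generalSummationIndex$ so that $\generalrvOne = \sum_{j=1}^{\chi} T_j$, and let $n_j = |S_j|$ with $\sum_j n_j = \generalSummationBound$. The random variables inside each $T_j$ are, by hypothesis, independent and take values in $[0,1]$, so $T_j$ is a "clean" Hoeffding-type sum; the dependencies only appear across different $j$'s. Markov's inequality applied to $\exp(\lambda(\generalrvOne - \Expectation \generalrvOne))$ reduces the task to showing $\Expectation \exp(\lambda(\generalrvOne - \Expectation\generalrvOne)) \le \exp(\lambda^2 \chi \generalSummationBound / 8)$ for all $\lambda > 0$.

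The core technical step is to separate the partition classes inside the expectation. I would use Hölder's inequality with all $\chi$ exponents equal to $\chi$ (so they sum reciprocally to $1$):
\[
\Expectation \exp\Bigl(\lambda \sum_{j=1}^{\chi}(T_j - \Expectation T_j)\Bigr)
= \Expectation \prod_{j=1}^{\chi} \exp\bigl(\lambda (T_j - \Expectation T_j)\bigr)
\le \prod_{j=1}^{\chi} \Bigl(\Expectation \exp\bigl(\lambda \chi (T_j - \Expectation T_j)\bigr)\Bigr)^{1/\chi}.
\]
This is the only place the cross-class dependencies are used, and it is the step whose choice of exponents determines the final constant. Within each class, independence lets me factor the MGF and apply Hoeffding's lemma for $[0,1]$-valued variables, $\Expectation \exp(\mu(\generalrvOne_\generalSummationIndex - \Expectation \generalrvOne_\generalSummationIndex)) \le \exp(\mu^2/8)$, with $\mu = \lambda\chi$, yielding $\Expectation \exp(\lambda\chi(T_j - \Expectation T_j)) \le \exp(\lambda^2 \chi^2 n_j/8)$. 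Multiplying across $j$ and taking the $1/\chi$-th power collapses $\chi^2 \sum_j n_j / \chi = \chi \generalSummationBound$, giving the desired MGF bound.

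Finally, substituting into the Chernoff bound gives $\Probability(\generalrvOne - \Expectation \generalrvOne \ge \lemmaconst) \le \exp(-\lambda \lemmaconst + \lambda^2 \chi \generalSummationBound / 8)$, and optimizing over $\lambda > 0$ with $\lambda = 4\lemmaconst/(\chi \generalSummationBound)$ produces exactly the stated exponent $-2\lemmaconst^2/(\chi \generalSummationBound)$.

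The main obstacle is conceptual rather than computational: one must resist the temptation to apply a union bound (split the event $\{\generalrvOne - \Expectation\generalrvOne \ge \lemmaconst\}$ into $\chi$ events of the form $\{T_j - \Expectation T_j \ge \lemmaconst/\chi\}$ via pigeonhole), which would only yield an exponent scaling like $\lemmaconst^2/(\chi^2 \generalSummationBound)$. The sharper $1/\chi$ rather than $1/\chi^2$ comes precisely from the Hölder-based MGF splitting, where the $\chi$ in the exponent of each factor is paid back by the $1/\chi$-th root outside, so only one factor of $\chi$ survives.
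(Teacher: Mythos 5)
Your proof is correct: the generalized H\"older step with all exponents equal to $\coverNumber$ is valid (the reciprocals sum to one and the factors are positive), Hoeffding's lemma within each class uses exactly the within-class independence and the $[0,1]$ bound, the exponents collapse to $\lambda^2\coverNumber\generalSummationBound/8$, and the optimization $\lambda=4\lemmaconst/(\coverNumber\generalSummationBound)$ gives the stated exponent $-2\lemmaconst^2/(\coverNumber\generalSummationBound)$. Note, however, that the paper itself does not prove this statement at all --- it is quoted as a specialized instance of Janson's Theorem 2.1 --- so there is no in-paper proof to compare against. Your argument is essentially Janson's original one: he also reduces to the moment generating function, splits it across the independent classes (using convexity of the exponential with weights $1/\coverNumber$, i.e.\ a weighted AM--GM step, where you use H\"older with equal exponents --- the two give the same bound here), and then applies Hoeffding's lemma classwise; his version is phrased for a fractional cover, which gives the fractional chromatic number in place of $\coverNumber$, so your proof recovers the integer-partition special case actually used in the paper. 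Your closing remark is also apt: a pigeonhole/union-bound reduction would only yield an exponent of order $\lemmaconst^2/(\coverNumber^2\generalSummationBound)$, and the single factor of $\coverNumber$ is exactly what the MGF splitting buys.
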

We also use~\cite[Lemma 2.7]{CsiszarInformation} which provides bounds for the entropy difference of random variables that are close to each other in terms of their variational distance. Note that the definition of the variational distance in~\cite{CsiszarInformation} differs from the one used in this paper by a factor of $1/2$, and we state the lemma conforming with our definitions.
\begin{lemma}
\label{totvar-entropy-lemma}
Let $\generalrvOne$ and $\generalrvTwo$ be random variables on an alphabet $\generalrvOneAlph$, distributed according to probability mass functions $\generalpmf_\generalrvOne$ and $\generalpmf_\generalrvTwo$, respectively. If $\totalvariation{\generalpmf_\generalrvOne - \generalpmf_\generalrvTwo} = \lemmaconst \leq 1/4$, then
\[
\absolute{
  \entropy{\generalrvOne}
  -
  \entropy{\generalrvTwo}
}
\leq
-
\frac{1}{2}
\lemmaconst
\log \frac{\lemmaconst}{2\cardinality{\generalrvOneAlph}}.
\]
\end{lemma}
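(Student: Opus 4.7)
My plan is to reduce the two-variable claim to a one-variable estimate on $\eta(t) := -t \log t$ and then aggregate by Jensen's inequality. Since $\entropy{\generalrvOne} = \sum_{\generalrvOneValue \in \generalrvOneAlph} \eta(\generalpmf_\generalrvOne(\generalrvOneValue))$, the triangle inequality gives
\[
\absolute{\entropy{\generalrvOne} - \entropy{\generalrvTwo}} \leq \sum_{\generalrvOneValue \in \generalrvOneAlph} \absolute{\eta(\generalpmf_\generalrvOne(\generalrvOneValue)) - \eta(\generalpmf_\generalrvTwo(\generalrvOneValue))}.
\]

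The heart of the argument is a scalar inequality: for all $a, b \in [0,1]$ with $\absolute{a - b} \leq 1/2$,
\[
\absolute{\eta(a) - \eta(b)} \leq \eta(\absolute{a - b}).
\]
Assuming $a \leq b$ and writing $t = b - a$, one direction follows from subadditivity of $\eta$ on $[0,1]$, which is the algebraic identity $\eta(s) + \eta(t) - \eta(s+t) = (s+t) \cdot h_b(s/(s+t)) \geq 0$, where $h_b$ denotes the binary entropy. The reverse direction uses that $a \mapsto \eta(a) - \eta(a + t)$ has derivative $\log(1 + t/a) > 0$ and is therefore increasing, so its maximum over $a \in [0, 1 - t]$ is attained at $a = 1 - t$ and equals $\eta(1 - t)$; a direct computation shows $\eta(1 - t) \leq \eta(t)$ precisely when $t \leq 1/2$.

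With the pointwise bound in hand, set $t_\generalrvOneValue := \absolute{\generalpmf_\generalrvOne(\generalrvOneValue) - \generalpmf_\generalrvTwo(\generalrvOneValue)}$; the hypothesis $\lemmaconst \leq 1/4$ forces $t_\generalrvOneValue \leq 2\lemmaconst \leq 1/2$ for every $\generalrvOneValue$, while $\sum_\generalrvOneValue t_\generalrvOneValue = 2\lemmaconst$. Applying the pointwise estimate term-by-term and then Jensen's inequality to the concave function $\eta$ with the uniform distribution on $\generalrvOneAlph$ yields
\[
\absolute{\entropy{\generalrvOne} - \entropy{\generalrvTwo}}
\leq \sum_{\generalrvOneValue} \eta(t_\generalrvOneValue)
\leq \cardinality{\generalrvOneAlph} \cdot \eta\!\left(\frac{2\lemmaconst}{\cardinality{\generalrvOneAlph}}\right)
= -2\lemmaconst \log\!\frac{2\lemmaconst}{\cardinality{\generalrvOneAlph}},
\]
a bound of the shape $\Theta(\lemmaconst \log(\cardinality{\generalrvOneAlph}/\lemmaconst))$ asserted by the lemma. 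Recovering the exact constants in the statement is a matter of tightening the middle step, for instance by splitting $\generalrvOneAlph$ into the sets where $\generalpmf_\generalrvOne \geq \generalpmf_\generalrvTwo$ versus $\generalpmf_\generalrvOne < \generalpmf_\generalrvTwo$ (each of which carries total discrepancy $\lemmaconst$, not $2\lemmaconst$) and applying Jensen separately on each.

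The main obstacle is establishing the pointwise comparison $\absolute{\eta(a) - \eta(b)} \leq \eta(\absolute{a - b})$; the two subcases (where $\eta$ is monotone between $a$ and $b$ versus where they straddle the maximizer $1/e$) must be handled separately, and it is in the second subcase that the hypothesis $\lemmaconst \leq 1/4$, which forces each $t_\generalrvOneValue \leq 1/2$, is essential. Everything else is a routine application of the triangle and Jensen inequalities.
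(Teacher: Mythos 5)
Your core argument is sound and is, in essence, the textbook proof of the result the paper invokes: the paper does not prove this lemma itself but cites it as Lemma 2.7 of Csisz\'ar--K\"orner, whose proof is exactly your combination of the pointwise estimate $\absolute{\eta(a)-\eta(b)}\leq\eta(\absolute{a-b})$ for $\eta(t):=-t\log t$ and $\absolute{a-b}\leq 1/2$ with Jensen's inequality. Both halves of your pointwise estimate check out (subadditivity of $\eta$ for one direction; monotonicity of $a\mapsto\eta(a)-\eta(a+t)$ together with $\eta(1-t)\leq\eta(t)$ for $t\leq 1/2$ for the other), and the hypothesis $\lemmaconst\leq 1/4$ is used correctly to guarantee $t_\generalrvOneValue\leq 2\lemmaconst\leq 1/2$ for every letter.

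The gap is your final step. Jensen gives $-2\lemmaconst\log(2\lemmaconst/\cardinality{\generalrvOneAlph})$, and the proposed repair --- splitting into the sets where $\generalpmf_\generalrvOne\geq\generalpmf_\generalrvTwo$ and where $\generalpmf_\generalrvOne<\generalpmf_\generalrvTwo$ --- does not improve it: the two parts contribute at most $-\lemmaconst\log(\lemmaconst/N_+)-\lemmaconst\log(\lemmaconst/N_-)$ with $N_++N_-=\cardinality{\generalrvOneAlph}$, and since $N_+N_-\leq\cardinality{\generalrvOneAlph}^2/4$ this is again at most $-2\lemmaconst\log(2\lemmaconst/\cardinality{\generalrvOneAlph})$. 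In fact no tightening can recover the printed constants, because the inequality as stated is false; it appears to come from converting the Csisz\'ar--K\"orner normalization of the variational distance in the wrong direction (substituting $\lemmaconst/2$ where $2\lemmaconst$ is needed). Concretely, for $\cardinality{\generalrvOneAlph}=4$, $\generalpmf_\generalrvOne=(1,0,0,0)$ and $\generalpmf_\generalrvTwo=(3/4,1/12,1/12,1/12)$ one has $\totalvariation{\generalpmf_\generalrvOne-\generalpmf_\generalrvTwo}=1/4$ and $\absolute{\entropy{\generalrvOne}-\entropy{\generalrvTwo}}\approx 0.837$ nats, whereas $-\frac{1}{2}\lemmaconst\log\frac{\lemmaconst}{2\cardinality{\generalrvOneAlph}}=\frac{1}{8}\log 32\approx 0.433$ nats. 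The correct form, which your argument does establish, is $\absolute{\entropy{\generalrvOne}-\entropy{\generalrvTwo}}\leq-2\lemmaconst\log(2\lemmaconst/\cardinality{\generalrvOneAlph})$; this still vanishes as $\lemmaconst\to 0$, so the paper's later uses of the lemma survive with adjusted constants, but you should state and prove that version rather than claim the printed one.
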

\begin{remark}
Since the function $-\lemmaconst/2 \cdot \log (\lemmaconst/2\cardinality{\generalrvOneAlph})$ is nondecreasing (in particular) for $0 \leq \lemmaconst \leq 1/4$, the assumption can be weakened to $\totalvariation{\generalpmf_\generalrvOne - \generalpmf_\generalrvTwo} \leq \lemmaconst \leq 1/4$.
\end{remark}
Furthermore, for the proof of the converse theorem, we need a lemma that is based on the Fenchel-Eggleston-Carathéodory theorem and appeared first in~\cite{AhlswedeSource} and of which we state the version from~\cite[Appendix C]{ElGamalNetworkIT}.
\begin{lemma}
\label{convex-cover-lemma}
Let $\generalrvOneAlph$ be a finite set and $\generalrvTwoAlph$ an arbitrary set. Suppose that $\generalpmfset$ is a connected compact subset of probability mass functions on $\generalrvOneAlph$ and consider a family $(\generalpmf_\generalrvTwoValue)_{\generalrvTwoValue \in \generalrvTwoAlph}$ of elements of $\generalpmfset$. Suppose further that $(\generalfunction_\generaldimensionIndex)_{\generaldimensionIndex=1}^\generaldimension$ are continuous functions mapping from $\generalpmfset$ to the reals and $\generalrvTwo$ is a random variable on $\generalrvTwoAlph$ distributed according to some probability measure $\generalpmeasure$. Then there exist a random variable $\generalrvTwo'$ on some alphabet $\generalrvTwoAlph'$ of cardinality at most $\generaldimension$ with probability mass function $\generalpmf$ and a family $(\generalpmf'_{\generalrvTwoValue'})_{\generalrvTwoValue' \in \generalrvTwoAlph'}$ of elements of $\generalpmfset$ such that for each $\generaldimensionIndex \in \{1, \dots, \generaldimension\}$,
\begin{align}
\label{convex-cover-lemma-function-property}
\int\limits_\generalrvTwoAlph
  \generalfunction_\generaldimensionIndex(\generalpmf_\generalrvTwoValue)
\generalpmeasure(d\generalrvTwoValue)
=
\sum\limits_{\generalrvTwoValue' \in \generalrvTwoAlph'}
  \generalfunction_\generaldimensionIndex(\generalpmf'_{\generalrvTwoValue'})
  \generalpmf(\generalrvTwoValue').
\end{align}
\end{lemma}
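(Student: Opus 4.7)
The plan is to reduce the claim to the Fenchel--Eggleston strengthening of Carathéodory's theorem in $\reals^\generaldimension$: any point in the convex hull of a connected subset of $\reals^\generaldimension$ is already a convex combination of at most $\generaldimension$ points of that set (whereas plain Carathéodory would yield $\generaldimension+1$). The connectedness hypothesis on $\generalpmfset$ is exactly what buys us the sharper bound $\cardinality{\generalrvTwoAlph'}\leq\generaldimension$ in the statement.

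The first step is to package the $\generaldimension$ test functions into a single continuous map $\Phi:\generalpmfset\to\reals^\generaldimension$ defined by $\Phi(\generalpmf):=(\generalfunction_1(\generalpmf),\dots,\generalfunction_\generaldimension(\generalpmf))$. Since $\generalpmfset$ is connected and compact and each $\generalfunction_\generaldimensionIndex$ is continuous, the image $\Phi(\generalpmfset)\subseteq\reals^\generaldimension$ is connected and compact as well. Set
\[
v:=\int_{\generalrvTwoAlph}\Phi(\generalpmf_\generalrvTwoValue)\,\generalpmeasure(d\generalrvTwoValue),
\]
whose $\generaldimensionIndex$-th coordinate is precisely the left-hand side of~\eqref{convex-cover-lemma-function-property}. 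Because $\Phi(\generalpmfset)$ is compact, its convex hull is compact (hence closed) in finite dimensions, and the integral of a bounded measurable map into a compact convex set lies inside that set; consequently $v\in\mathrm{conv}\,\Phi(\generalpmfset)$.

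Applying the Fenchel--Eggleston--Carathéodory theorem to $v$ yields nonnegative weights $\lambda_1,\dots,\lambda_\generaldimension$ summing to $1$ and elements $\generalpmf'_1,\dots,\generalpmf'_\generaldimension\in\generalpmfset$ with $v=\sum_{\generaldimensionIndex=1}^{\generaldimension}\lambda_\generaldimensionIndex\Phi(\generalpmf'_\generaldimensionIndex)$. Taking $\generalrvTwoAlph':=\{1,\dots,\generaldimension\}$, declaring $\generalpmf(\generaldimensionIndex):=\lambda_\generaldimensionIndex$, and relabelling the selected distributions as $(\generalpmf'_{\generalrvTwoValue'})_{\generalrvTwoValue'\in\generalrvTwoAlph'}$ converts this vector identity coordinate-wise into~\eqref{convex-cover-lemma-function-property}. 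Indices $\generaldimensionIndex$ with $\lambda_\generaldimensionIndex=0$ can be dropped to enforce the cardinality bound strictly.

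The main obstacle is the passage from the integral $v$ to a finite convex combination with the sharp count $\generaldimension$. This requires, first, a careful argument that $v\in\mathrm{conv}\,\Phi(\generalpmfset)$, which rests on the finite-dimensional fact that convex hulls of compact sets are compact together with a standard barycenter-of-a-probability-measure argument; and second, the Fenchel--Eggleston refinement of Carathéodory, which is precisely where connectedness of $\generalpmfset$ (and hence of $\Phi(\generalpmfset)$) is consumed. Measurability of $\generalrvTwoValue\mapsto\Phi(\generalpmf_\generalrvTwoValue)$ is implicit in the formulation (the integrals in~\eqref{convex-cover-lemma-function-property} are assumed to exist) and is automatic once the family $(\generalpmf_\generalrvTwoValue)_{\generalrvTwoValue}$ is assumed weakly measurable, since $\Phi(\generalpmfset)$ is bounded.
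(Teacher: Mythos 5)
Your argument is correct: mapping $\generalpmfset$ into $\reals^\generaldimension$ via the $\generaldimension$ test functions, placing the barycenter $v$ in $\mathrm{conv}\,\Phi(\generalpmfset)$ (closedness of the convex hull of a compact set plus a separating-hyperplane or limit argument), and invoking the Fenchel--Eggleston refinement of Carath\'eodory on the connected image is exactly the standard proof of this support lemma. The paper itself does not prove the statement but imports it from Ahlswede and from El Gamal--Kim (Appendix C), and the proof given there is essentially the one you wrote, so there is nothing to add beyond your (appropriately flagged) measurability caveat.
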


\section{Proof of the Direct Theorem}
\label{sec:main}
In this section, we prove Theorem~\ref{theorem:soft-covering-two-transmitters-convex}. Together with the observation that the resolvability region $\capacityRegion{\channel}{\channelpmf_\channelOut}$ is convex, by a standard time sharing argument, this yields the direct part of Theorem~\ref{theorem:resolvability-region} as an immediate consequence.

First, we show a slightly weaker version of Theorem~\ref{theorem:soft-covering-two-transmitters-convex}.
\begin{theorem}
\label{theorem:soft-covering-two-transmitters}
Suppose
$\channel = (\channelInOneAlph, \channelInTwoAlph, \channelOutAlph, \channelpmf_{\channelOut | \channelInOne, \channelInTwo})$
is a channel, $\channelpmf_\channelInOne$ and $\channelpmf_\channelInTwo$ are input distributions,
\begin{align*}
\codebookRateOne &> \mutualInformationConditional{\channelInOne}{\channelOut}{\channelInTwo}
\\
\codebookRateTwo &> \mutualInformation{\channelInTwo}{\channelOut}.
\end{align*}
Then there exist $\finalconstOne, \finalconstTwo > 0$ such that for large enough block length $\codebookBlocklength$, the codebook distributions of block length $\codebookBlocklength$ and rates $\codebookRateOne$ and $\codebookRateTwo$ satisfy
\begin{multline}
\label{theorem:soft-covering-two-transmitters-probability-statement}
\Probability_{\codebookOne, \codebookTwo} \left(
  \totalvariation{
    \codebookpmf_{\channelOut^\codebookBlocklength | \codebookOne, \codebookTwo} - \channelpmf_{\channelOut^\codebookBlocklength}
  }
  >
  \exp(-\finalconstOne\codebookBlocklength)
\right)
\\
\leq
\exp\left(-\exp\left(\finalconstTwo\codebookBlocklength\right)\right).
\end{multline}
\end{theorem}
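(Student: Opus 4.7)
The strategy is a two-step hierarchical reduction matching the corner-point rate conditions $\codebookRateOne > \mutualInformationConditional{\channelInOne}{\channelOut}{\channelInTwo}$ and $\codebookRateTwo > \mutualInformation{\channelInTwo}{\channelOut}$. Introducing the auxiliary distribution $\bar{\codebookpmf}_{\channelOut^\codebookBlocklength | \codebookTwo}(\channelOutAlphElement^\codebookBlocklength) := \exp(-\codebookBlocklength \codebookRateTwo) \sum_{\codewordIndex_2} \channelpmf_{\channelOut^\codebookBlocklength | \channelInTwo^\codebookBlocklength}(\channelOutAlphElement^\codebookBlocklength | \codebookTwoWord{\codewordIndex_2})$, a triangle inequality together with convexity of $\totalvariation{\cdot}$ gives
\[
\totalvariation{\codebookpmf_{\channelOut^\codebookBlocklength | \codebookOne, \codebookTwo} - \channelpmf_{\channelOut^\codebookBlocklength}} \leq \exp(-\codebookBlocklength \codebookRateTwo) \sum_{\codewordIndex_2} \totalvariation{\hat{\codebookpmf}_{\codebookOne, \codebookTwoWord{\codewordIndex_2}} - \channelpmf_{\channelOut^\codebookBlocklength | \channelInTwo^\codebookBlocklength}(\cdot | \codebookTwoWord{\codewordIndex_2})} + \totalvariation{\bar{\codebookpmf}_{\channelOut^\codebookBlocklength | \codebookTwo} - \channelpmf_{\channelOut^\codebookBlocklength}},
\]
where $\hat{\codebookpmf}_{\codebookOne, y^\codebookBlocklength}(\channelOutAlphElement^\codebookBlocklength) := \exp(-\codebookBlocklength \codebookRateOne) \sum_{\codewordIndex_1} \channelpmf_{\channelOut^\codebookBlocklength | \channelInOne^\codebookBlocklength, \channelInTwo^\codebookBlocklength}(\channelOutAlphElement^\codebookBlocklength | \codebookOneWord{\codewordIndex_1}, y^\codebookBlocklength)$. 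Each summand of the first term is a single-user soft-covering problem for the conditional channel $\channelpmf_{\channelOut | \channelInOne, \channelInTwo}(\cdot|\cdot, y)$ with input distribution $\channelpmf_\channelInOne$, whose target output distribution $\channelpmf_{\channelOut^\codebookBlocklength | \channelInTwo^\codebookBlocklength}(\cdot | y^\codebookBlocklength)$ arises because $\channelInOne$ and $\channelInTwo$ are independent under the induced joint law, while the second term is a single-user soft-covering problem for the induced channel $\channelpmf_{\channelOut | \channelInTwo}$ with input $\channelpmf_\channelInTwo$.

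Each single-user soft covering I attack by the standard typicality-plus-Chernoff method. Fix a small $\typicalityParam > 0$ and, for the inner step, define a jointly typical set on triples $(x^\codebookBlocklength, y^\codebookBlocklength, z^\codebookBlocklength)$ by requiring $\informationDensityConditional{x^\codebookBlocklength}{z^\codebookBlocklength}{y^\codebookBlocklength} \leq \codebookBlocklength(\mutualInformationConditional{\channelInOne}{\channelOut}{\channelInTwo} + \typicalityParam)$ and an additional marginal condition on $y^\codebookBlocklength$ (for the outer step, the analogous condition is $\informationDensity{y^\codebookBlocklength}{z^\codebookBlocklength} \leq \codebookBlocklength(\mutualInformation{\channelInTwo}{\channelOut} + \typicalityParam)$). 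Split the inner sum into a typical and an atypical part. On the typical part, each summand is bounded by $\channelpmf_{\channelOut^\codebookBlocklength | \channelInTwo^\codebookBlocklength}(z^\codebookBlocklength | y^\codebookBlocklength) \exp(\codebookBlocklength(\mutualInformationConditional{\channelInOne}{\channelOut}{\channelInTwo} + \typicalityParam))$ by typicality; after normalising the sum by this ceiling, the $\exp(\codebookBlocklength \codebookRateOne)$ independent summands lie in $[0,1]$ and have total expectation on the order of $\exp(\codebookBlocklength(\codebookRateOne - \mutualInformationConditional{\channelInOne}{\channelOut}{\channelInTwo} - \typicalityParam))$, which grows exponentially in $\codebookBlocklength$ by the rate-gap hypothesis, so Theorem~\ref{theorem:hoeffding} yields a doubly exponential upper-tail bound. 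The atypical part has expectation exponentially small in $\codebookBlocklength$ (the information density is an i.i.d.\ sum of $\codebookBlocklength$ bounded terms, so a standard Chernoff bound on its tail applies), and a second invocation of Theorem~\ref{theorem:hoeffding} concentrates it below $\exp(-\finalconstOne \codebookBlocklength)$ again doubly exponentially. A union bound over the singly exponentially many $(y^\codebookBlocklength, z^\codebookBlocklength) \in \channelInTwoAlph^\codebookBlocklength \times \channelOutAlph^\codebookBlocklength$ preserves the doubly exponential rate, and the outer step is analogous.

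The main obstacle I anticipate is the uniform treatment of atypical $y^\codebookBlocklength$ in the inner reduction: for such $y^\codebookBlocklength$ the conditional information density $\informationDensityConditional{\channelInOne^\codebookBlocklength}{\channelOut^\codebookBlocklength}{y^\codebookBlocklength}$ need not concentrate near $\codebookBlocklength \mutualInformationConditional{\channelInOne}{\channelOut}{\channelInTwo}$, so the typicality-based ceiling used to normalise the inner sum breaks down. I expect to resolve this by folding a marginal condition on $y^\codebookBlocklength$ into the joint typical set, thereby forcing atypical $y^\codebookBlocklength$ into the atypical part of the decomposition, whose total mass is already bounded doubly exponentially. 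The remaining work is routine bookkeeping, tracking how each exponential rate and each tail constant propagates through the two triangle-inequality splits and the union bounds so as to produce valid positive constants $\finalconstOne, \finalconstTwo$ in the final statement.
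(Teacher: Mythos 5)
Your proof is correct in outline but takes a genuinely different route from the paper's. The paper keeps the double sum over $(\codewordIndex_1,\codewordIndex_2)$ intact, splits the variational distance into the atypical parts $\totvarAtypicalOne$, $\totvarAtypicalTwo$ and a typical part, and only inside the typical part performs a two-stage reduction (first conditioning on $\codebookOne$ being good simultaneously for every $\channelInTwoAlphElement^\codebookBlocklength$, then concentrating the remaining sum over $\codewordIndex_2$); because the atypical parts remain double sums whose summands are dependent whenever they share a codeword index, the paper needs Janson's extension of the Chernoff--Hoeffding bound (Theorem~\ref{theorem:janson}, via Lemma~\ref{lemma:soft-covering-two-transmitters-atypical}). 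You instead apply the triangle inequality first and reduce to two single-user soft-covering problems, so every concentration step involves a sum over a single codeword index with independent summands and the plain Theorem~\ref{theorem:hoeffding} suffices throughout --- a real structural simplification that dispenses with Janson's inequality. The price is exactly the obstacle you identify: the inner problem must hold uniformly over the random codewords of $\codebookTwo$, whose empirical statistics need not match $\channelpmf_\channelInTwo$, and your fix (fold a marginal typicality condition on $\channelInTwoAlphElement^\codebookBlocklength$ into the typical set, control the fraction of atypical $\codebookTwo$-codewords by a separate concentration bound, and union-bound the per-$(\channelInTwoAlphElement^\codebookBlocklength,\channelOutAlphElement^\codebookBlocklength)$ statements over the singly exponential alphabet, which is harmless against a doubly exponential tail) is sound and plays the role of the paper's set $\codebookSet_{\channelOutAlphElement^\codebookBlocklength}$. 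Two details to tighten when writing this up: for fixed $\channelInTwoAlphElement^\codebookBlocklength$ the conditional information density is a sum of independent but not identically distributed terms that is bounded above but not below, so its upper tail should be controlled by an exponential-moment (R\'enyi-divergence) argument as in (\ref{proof:soft-covering-two-transmitters-probability-bound-start})--(\ref{proof:soft-covering-two-transmitters-probability-bound-end}) rather than a two-sided i.i.d.\ Hoeffding bound; and the constants $\finalconstOne,\finalconstTwo$ must be chosen after fixing $\typicalityParam$ small enough that $\codebookRateOne>\mutualInformationConditional{\channelInOne}{\channelOut}{\channelInTwo}+\typicalityParam$ and $\codebookRateTwo>\mutualInformation{\channelInTwo}{\channelOut}+\typicalityParam$, exactly as in the paper. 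Neither point is a gap.
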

The full Theorem~\ref{theorem:soft-covering-two-transmitters-convex} can then be proven with a time sharing argument which we detail at the end of this section. In the proof of Theorem~\ref{theorem:soft-covering-two-transmitters}, we consider two types of typical sets:
\begin{align*}
  \typicalSetIndex{\typicalityParam}{\codebookBlocklength}{1}
  &:=
  \{
    (\channelInOneAlphElement^\codebookBlocklength, \channelInTwoAlphElement^\codebookBlocklength,\channelOutAlphElement^\codebookBlocklength)
    :
    \informationDensityConditional{\channelInOneAlphElement^\codebookBlocklength}{\channelOutAlphElement^\codebookBlocklength}{\channelInTwoAlphElement^\codebookBlocklength}
    \leq
    \codebookBlocklength(\mutualInformationConditional{\channelInOne}{\channelOut}{\channelInTwo}+\typicalityParam)
  \}
\\
   \typicalSetIndex{\typicalityParam}{\codebookBlocklength}{2}
   &:=
   \{
     (\channelInTwoAlphElement^\codebookBlocklength,\channelOutAlphElement^\codebookBlocklength)
     :
     \informationDensity{\channelInTwoAlphElement^\codebookBlocklength}{\channelOutAlphElement^\codebookBlocklength}
    \leq
    \codebookBlocklength(\mutualInformation{\channelInTwo}{\channelOut}+\typicalityParam)
  \}.
\end{align*}
We split the variational distance in atypical and typical parts as follows, where $\totvarAtypicalOne$, $\totvarAtypicalTwo$ and $\totvarTypical{\channelOutAlphElement^\codebookBlocklength}$ are defined by~(\ref{def:soft-covering-atypical-term-one}), (\ref{def:soft-covering-atypical-term-two}) and (\ref{def:soft-covering-typical-term}) shown on the next page.

\begin{figure*}
\normalsize
\begin{align}
\label{def:soft-covering-atypical-term-one}
\totvarAtypicalOne
&:=
\sum\limits_{\channelOutAlphElement^\codebookBlocklength \in \channelOutAlph^\codebookBlocklength}
\exp(-\codebookBlocklength(\codebookRateOne+\codebookRateTwo))
\sum\limits_{\codewordIndex_1=1}^{\exp(\codebookBlocklength\codebookRateOne)}
\sum\limits_{\codewordIndex_2=1}^{\exp(\codebookBlocklength\codebookRateTwo)}
    \channelpmf_{\channelOut^\codebookBlocklength | \channelInOne^\codebookBlocklength, \channelInTwo^\codebookBlocklength}(\channelOutAlphElement^\codebookBlocklength | \codebookOneWord{\codewordIndex_1}, \codebookTwoWord{\codewordIndex_2})
    \indicator{(\codebookOneWord{\codewordIndex_1}, \codebookTwoWord{\codewordIndex_2}, \channelOutAlphElement^\codebookBlocklength) \notin \typicalSetIndex{\typicalityParam}{\codebookBlocklength}{1}}
\\
\label{def:soft-covering-atypical-term-two}
\totvarAtypicalTwo
&:=
\sum\limits_{\channelOutAlphElement^\codebookBlocklength \in \channelOutAlph^\codebookBlocklength}
\exp(-\codebookBlocklength(\codebookRateOne+\codebookRateTwo))
\sum\limits_{\codewordIndex_1=1}^{\exp(\codebookBlocklength\codebookRateOne)}
\sum\limits_{\codewordIndex_2=1}^{\exp(\codebookBlocklength\codebookRateTwo)}
    \channelpmf_{\channelOut^\codebookBlocklength | \channelInOne^\codebookBlocklength, \channelInTwo^\codebookBlocklength}(\channelOutAlphElement^\codebookBlocklength | \codebookOneWord{\codewordIndex_1}, \codebookTwoWord{\codewordIndex_2})
    \indicator{(\codebookTwoWord{\codewordIndex_2}, \channelOutAlphElement^\codebookBlocklength) \notin \typicalSetIndex{\typicalityParam}{\codebookBlocklength}{2}}
\\
\label{def:soft-covering-typical-term}
\totvarTypical{\channelOutAlphElement^\codebookBlocklength}
&:=
    \sum\limits_{\codewordIndex_1=1}^{\exp(\codebookBlocklength\codebookRateOne)}
    \sum\limits_{\codewordIndex_2=1}^{\exp(\codebookBlocklength\codebookRateTwo)}
        \exp(-\codebookBlocklength(\codebookRateOne+\codebookRateTwo))
        \frac{\channelpmf_{\channelOut^\codebookBlocklength | \channelInOne^\codebookBlocklength, \channelInTwo^\codebookBlocklength}(\channelOutAlphElement^\codebookBlocklength | \codebookOneWord{\codewordIndex_1}, \codebookTwoWord{\codewordIndex_2})}
             {\channelpmf_{\channelOut^\codebookBlocklength}(\channelOutAlphElement^\codebookBlocklength)}
        \indicator{(\codebookTwoWord{\codewordIndex_2}, \channelOutAlphElement^\codebookBlocklength) \in \typicalSetIndex{\typicalityParam}{\codebookBlocklength}{2}}
        \indicator{(\codebookOneWord{\codewordIndex_1}, \codebookTwoWord{\codewordIndex_2}, \channelOutAlphElement^\codebookBlocklength) \in \typicalSetIndex{\typicalityParam}{\codebookBlocklength}{1}}
\end{align}
\hrulefill
\end{figure*}

\begin{align}
\notag
&\totalvariation{ \codebookpmf_{\channelOut^\codebookBlocklength | \codebookOne, \codebookTwo} - \channelpmf_{\channelOut^\codebookBlocklength}}
\\
\notag
=
&\sum\limits_{\channelOutAlphElement^\codebookBlocklength \in \channelOutAlph^\codebookBlocklength}
  \channelpmf_{\channelOut^\codebookBlocklength}(\channelOutAlphElement^\codebookBlocklength)
  \positive{\frac{\codebookpmf_{\channelOut^\codebookBlocklength | \codebookOne, \codebookTwo}(\channelOutAlphElement^\codebookBlocklength)}
                 {\channelpmf_{\channelOut^\codebookBlocklength}(\channelOutAlphElement^\codebookBlocklength)}
  - 1
  }
\\
\label{proof:soft-covering-two-transmitters-typical-split}
\leq
&\totvarAtypicalOne + \totvarAtypicalTwo
+
\sum\limits_{\channelOutAlphElement^\codebookBlocklength \in \channelOutAlph^\codebookBlocklength}
\channelpmf_{\channelOut^\codebookBlocklength}(\channelOutAlphElement^\codebookBlocklength)
\positive{\totvarTypical{\channelOutAlphElement^\codebookBlocklength} - 1}.
\end{align}

\begin{remark}
The denominator of the fraction is almost surely not equal to $0$ as long as the numerator is not equal to $0$. We implicitly let the summation range only over the support of the denominator, as we do in all further summations.
\end{remark}

So the theorem can be proven by considering typical and atypical terms separately.
But first, we prove two lemmas to help us to bound the typical and the atypical terms.

\begin{lemma}[Bound for typical terms]
\label{lemma:soft-covering-two-transmitters-typical}
Given a block length $\codebookBlocklength$, $\typicalityParam > 0$, $0 < \lemmaconst < 1$, random variables $\generalrvOne$, $\generalrvTwo$ and $\generalrvThree$ on finite alphabets $\generalrvOneAlph$, $\generalrvTwoAlph$ and $\generalrvThreeAlph$ respectively with joint probability mass function $\generalpmf_{\generalrvOne, \generalrvTwo, \generalrvThree}$, a rate $\codebookRate$ and a codebook
$\codebook = (\codebookWord{\codewordIndex})_{\codewordIndex=1}^{\exp(\codebookBlocklength\codebookRate)}$ with each component of each codeword drawn i.i.d. according to $\generalpmf_\generalrvOne$, for any $\generalrvTwoValue^\codebookBlocklength \in \generalrvTwoAlph^\codebookBlocklength$ and $\generalrvThreeValue^\codebookBlocklength \in \generalrvThreeAlph^\codebookBlocklength$, we have
\begin{multline*}
\check{\Probability} :=
\Probability_{\codebook}\left(
  \sum\limits_{\codewordIndex=1}^{\exp(\codebookBlocklength\codebookRate)}
  \exp(-\codebookBlocklength\codebookRate)
  \frac{\generalpmf_{\generalrvThree^\codebookBlocklength | \generalrvOne^\codebookBlocklength, \generalrvTwo^\codebookBlocklength}(\generalrvThreeValue^\codebookBlocklength | \codebookWord{\codewordIndex}, \generalrvTwoValue^\codebookBlocklength)}
       {\generalpmf_{\generalrvThree^\codebookBlocklength | \generalrvTwo^\codebookBlocklength}(\generalrvThreeValue^\codebookBlocklength | \generalrvTwoValue^\codebookBlocklength)}
  \right.
  \\
  \left.
  \vphantom{\sum\limits_{\codewordIndex=1}^{\exp(\codebookBlocklength\codebookRate)}}
  \cdot
  \indicator{(\codebookWord{\codewordIndex}, \generalrvTwoValue^\codebookBlocklength, \generalrvThreeValue^\codebookBlocklength) \in \typicalSet{\typicalityParam}{\codebookBlocklength}}
  >
  1 + \lemmaconst
\right) \\
\leq
\exp\left(
  -\frac{\lemmaconst^2}{3} \exp(-\codebookBlocklength (\mutualInformationConditional{\generalrvOne}{\generalrvThree}{\generalrvTwo} + \typicalityParam - \codebookRate))
\right),
\end{multline*}
where the typical set is defined as
\begin{align}
\label{lemma:soft-covering-two-transmitters-typical-def}
\typicalSet{\typicalityParam}{\codebookBlocklength}
:=
\{
  (\generalrvOneValue^\codebookBlocklength, \generalrvTwoValue^\codebookBlocklength, \generalrvThreeValue^\codebookBlocklength)
  :
  \informationDensityConditional{\generalrvOneValue^\codebookBlocklength}{\generalrvThreeValue^\codebookBlocklength}{\generalrvTwoValue^\codebookBlocklength}
  \leq
  \codebookBlocklength(\mutualInformationConditional{\generalrvOne}{\generalrvThree}{\generalrvTwo}+\typicalityParam)
\}.
\end{align}

\end{lemma}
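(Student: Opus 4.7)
The plan is to rescale the summands so that they lie in $[0,1]$ and then apply the Chernoff-Hoeffding bound (Theorem~\ref{theorem:hoeffding}). First I would define
\[
Y_\codewordIndex := \exp(-\codebookBlocklength(\mutualInformationConditional{\generalrvOne}{\generalrvThree}{\generalrvTwo}+\typicalityParam)) \cdot \frac{\generalpmf_{\generalrvThree^\codebookBlocklength | \generalrvOne^\codebookBlocklength, \generalrvTwo^\codebookBlocklength}(\generalrvThreeValue^\codebookBlocklength | \codebookWord{\codewordIndex}, \generalrvTwoValue^\codebookBlocklength)}{\generalpmf_{\generalrvThree^\codebookBlocklength | \generalrvTwo^\codebookBlocklength}(\generalrvThreeValue^\codebookBlocklength | \generalrvTwoValue^\codebookBlocklength)} \indicator{(\codebookWord{\codewordIndex}, \generalrvTwoValue^\codebookBlocklength, \generalrvThreeValue^\codebookBlocklength) \in \typicalSet{\typicalityParam}{\codebookBlocklength}}.
\]
On the typical set the conditional information density is at most $\codebookBlocklength(\mutualInformationConditional{\generalrvOne}{\generalrvThree}{\generalrvTwo}+\typicalityParam)$, so the likelihood ratio is at most $\exp(\codebookBlocklength(\mutualInformationConditional{\generalrvOne}{\generalrvThree}{\generalrvTwo}+\typicalityParam))$, forcing $Y_\codewordIndex \in [0,1]$; the codewords being drawn independently makes the $Y_\codewordIndex$ independent as well. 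Setting $Y := \sum_\codewordIndex Y_\codewordIndex$ and $\mu := \exp(-\codebookBlocklength(\mutualInformationConditional{\generalrvOne}{\generalrvThree}{\generalrvTwo}+\typicalityParam-\codebookRate))$, the event inside $\check{\Probability}$ is exactly $\{Y > (1+\lemmaconst)\mu\}$.

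Next I would verify $\Expectation Y \leq \mu$. Dropping the typicality indicator only enlarges $Y_\codewordIndex$, and since $\codebookWord{\codewordIndex}$ is drawn from $\generalpmf_\generalrvOne^{\otimes\codebookBlocklength}$ independently of the fixed $\generalrvTwoValue^\codebookBlocklength$, averaging $\generalpmf_{\generalrvThree^\codebookBlocklength | \generalrvOne^\codebookBlocklength, \generalrvTwo^\codebookBlocklength}(\generalrvThreeValue^\codebookBlocklength | \cdot, \generalrvTwoValue^\codebookBlocklength)$ against $\generalpmf_\generalrvOne^{\otimes\codebookBlocklength}$ collapses to $\generalpmf_{\generalrvThree^\codebookBlocklength | \generalrvTwo^\codebookBlocklength}(\generalrvThreeValue^\codebookBlocklength | \generalrvTwoValue^\codebookBlocklength)$, so the ratio has expectation at most $1$. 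This yields $\Expectation[Y_\codewordIndex] \leq \exp(-\codebookBlocklength(\mutualInformationConditional{\generalrvOne}{\generalrvThree}{\generalrvTwo}+\typicalityParam))$, and summing the $\exp(\codebookBlocklength\codebookRate)$ independent copies gives $\Expectation Y \leq \mu$.

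With these two ingredients, Theorem~\ref{theorem:hoeffding} directly delivers $\Probability(Y > (1+\lemmaconst)\mu) \leq \exp(-\lemmaconst^2 \mu/3)$, which after substituting $\mu$ is exactly the bound asserted in the lemma. The delicate point—and really what the whole construction hinges on—is choosing the normalization in the definition of $Y_\codewordIndex$: one must divide by precisely the worst-case likelihood ratio permitted on $\typicalSet{\typicalityParam}{\codebookBlocklength}$ so that $Y_\codewordIndex \in [0,1]$, while still having $\Expectation Y \leq \mu$. This is what makes the Chernoff exponent $\lemmaconst^2 \mu/3$ grow exponentially in $\codebookBlocklength$ whenever $\codebookRate > \mutualInformationConditional{\generalrvOne}{\generalrvThree}{\generalrvTwo}+\typicalityParam$, producing the doubly-exponential decay claimed.
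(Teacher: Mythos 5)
Your proposal is correct and follows essentially the same route as the paper's proof: rescale each summand by $\exp(-\codebookBlocklength(\mutualInformationConditional{\generalrvOne}{\generalrvThree}{\generalrvTwo}+\typicalityParam))$ so that the typicality indicator forces it into $[0,1]$, bound the expectation of the sum by $\exp(-\codebookBlocklength(\mutualInformationConditional{\generalrvOne}{\generalrvThree}{\generalrvTwo}+\typicalityParam-\codebookRate))$ via the unit expectation of the likelihood ratio, and apply Theorem~\ref{theorem:hoeffding}. No gaps.
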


\begin{proof}
We have
\begin{multline*}
\check{\Probability} =
\Probability_{\codebook}\left(
  \sum\limits_{\codewordIndex=1}^{\exp(\codebookBlocklength\codebookRate)}
  \exp(-\codebookBlocklength (\mutualInformationConditional{\generalrvOne}{\generalrvThree}{\generalrvTwo} + \typicalityParam))
  \right.
  \\
  \cdot
  \frac{\generalpmf_{\generalrvThree^\codebookBlocklength | \generalrvOne^\codebookBlocklength, \generalrvTwo^\codebookBlocklength}(\generalrvThreeValue^\codebookBlocklength | \codebookWord{\codewordIndex}, \generalrvTwoValue^\codebookBlocklength)}
       {\generalpmf_{\generalrvThree^\codebookBlocklength | \generalrvTwo^\codebookBlocklength}(\generalrvThreeValue^\codebookBlocklength | \generalrvTwoValue^\codebookBlocklength)}
  \cdot
  \indicator{(\codebookWord{\codewordIndex}, \generalrvTwoValue^\codebookBlocklength, \generalrvThreeValue^\codebookBlocklength) \in \typicalSet{\typicalityParam}{\codebookBlocklength}}
  \\
  >
  \left. \vphantom{\sum\limits_{\codewordIndex=1}^{\exp(\codebookBlocklength\codebookRate)}}
  \exp(-\codebookBlocklength (\mutualInformationConditional{\generalrvOne}{\generalrvThree}{\generalrvTwo} + \typicalityParam - \codebookRate))
  (1 + \lemmaconst)
  \right).
\end{multline*}
By the definition of $\typicalSet{\typicalityParam}{\codebookBlocklength}$ in~(\ref{lemma:soft-covering-two-transmitters-typical-def}), the summands are at most $1$, and furthermore, the expectation of the sum can be bounded as
\begin{align*}
&
\begin{aligned}
  \Expectation_{\codebook}\left(
    \sum\limits_{\codewordIndex=1}^{\exp(\codebookBlocklength\codebookRate)}
    \right.
    &\exp(-\codebookBlocklength (\mutualInformationConditional{\generalrvOne}{\generalrvThree}{\generalrvTwo} + \typicalityParam))
    \\
    &\cdot
    \frac{\generalpmf_{\generalrvThree^\codebookBlocklength | \generalrvOne^\codebookBlocklength, \generalrvTwo^\codebookBlocklength}(\generalrvThreeValue^\codebookBlocklength | \codebookWord{\codewordIndex}, \generalrvTwoValue^\codebookBlocklength)}
        {\generalpmf_{\generalrvThree^\codebookBlocklength | \generalrvTwo^\codebookBlocklength}(\generalrvThreeValue^\codebookBlocklength | \generalrvTwoValue^\codebookBlocklength)}
    \indicator{(\codebookWord{\codewordIndex}, \generalrvTwoValue^\codebookBlocklength, \generalrvThreeValue^\codebookBlocklength) \in \typicalSet{\typicalityParam}{\codebookBlocklength}}
  \left.
  \vphantom{\sum\limits_{\codewordIndex=1}^{\exp(\codebookBlocklength\codebookRate)}}
  \right)
\end{aligned}
\\
&
\begin{aligned}
  \leq
  \sum\limits_{\codewordIndex=1}^{\exp(\codebookBlocklength\codebookRate)}
  &\exp(-\codebookBlocklength (\mutualInformationConditional{\generalrvOne}{\generalrvThree}{\generalrvTwo} + \typicalityParam))
  \\
  &\cdot
  \Expectation_{\codebook}\left(
    \frac{\generalpmf_{\generalrvThree^\codebookBlocklength | \generalrvOne^\codebookBlocklength, \generalrvTwo^\codebookBlocklength}(\generalrvThreeValue^\codebookBlocklength | \codebookWord{\codewordIndex}, \generalrvTwoValue^\codebookBlocklength)}
        {\generalpmf_{\generalrvThree^\codebookBlocklength | \generalrvTwo^\codebookBlocklength}(\generalrvThreeValue^\codebookBlocklength | \generalrvTwoValue^\codebookBlocklength)}
  \right)
\end{aligned}
\\
&=
\exp(-\codebookBlocklength (\mutualInformationConditional{\generalrvOne}{\generalrvThree}{\generalrvTwo} + \typicalityParam - \codebookRate)).
\end{align*}
Now applying Theorem~\ref{theorem:hoeffding} to the above shows the desired probability statement and completes the proof.
\end{proof}

\begin{lemma}[Bound for atypical terms]
\label{lemma:soft-covering-two-transmitters-atypical}
Given a channel
$\channel = (\channelInOneAlph, \channelInTwoAlph, \channelOutAlph, \channelpmf_{\channelOut | \channelInOne, \channelInTwo})$,
input distributions $\channelpmf_\channelInOne$ and $\channelpmf_\channelInTwo$, some set $\alphSubset \subseteq \channelInOneAlph^\codebookBlocklength \times \channelInTwoAlph^\codebookBlocklength \times \channelOutAlph^\codebookBlocklength$, $\lemmaconst > 0$, $\lemmaexpectation \geq \Probability((\channelInOne^\codebookBlocklength, \channelInTwo^\codebookBlocklength, \channelOut^\codebookBlocklength) \in \alphSubset)$ as well as rates $\codebookRateOne$ and $\codebookRateTwo$ and codebooks distributed according to $\Probability_{\codebookOne, \codebookTwo}$ defined in Section~\ref{section:preliminaries}, we have
\begin{multline*}
\hat{\Probability} :=
\Probability_{\codebookOne,\codebookTwo}\left(
  \sum\limits_{\channelOutAlphElement^\codebookBlocklength \in \channelOutAlph^\codebookBlocklength}
  \exp(-\codebookBlocklength(\codebookRateOne+\codebookRateTwo))
  \right.
  \\
  \sum\limits_{\codewordIndex_1=1}^{\exp(\codebookBlocklength\codebookRateOne)}
  \sum\limits_{\codewordIndex_2=1}^{\exp(\codebookBlocklength\codebookRateTwo)}
      \channelpmf_{\channelOut^\codebookBlocklength | \channelInOne^\codebookBlocklength, \channelInTwo^\codebookBlocklength}(\channelOutAlphElement^\codebookBlocklength | \codebookOneWord{\codewordIndex_1}, \codebookTwoWord{\codewordIndex_2})
      \\
      \indicator{(\codebookOneWord{\codewordIndex_1}, \codebookTwoWord{\codewordIndex_2}, \channelOutAlphElement^\codebookBlocklength) \in \alphSubset}
  \left. \vphantom{\sum\limits_{\channelOutAlphElement^\codebookBlocklength \in \channelOutAlph^\codebookBlocklength}} >
  \lemmaexpectation(1+\lemmaconst)
\right) \\
\leq
\exp(-2 \lemmaconst^2 \lemmaexpectation^2 \exp(\codebookBlocklength\min(\codebookRateOne,\codebookRateTwo))).
\end{multline*}
\end{lemma}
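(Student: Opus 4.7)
The plan is to recognize the double sum as an average of bounded but \emph{partially dependent} random variables indexed by codeword pairs $(m_1, m_2)$, and to apply Janson's bound (Theorem~\ref{theorem:janson}) after exhibiting a suitable partition into independent classes. Concretely, I would first introduce, for each pair $(m_1, m_2)$, the random variable
\[
Y_{m_1, m_2} := \sum_{\channelOutAlphElement^\codebookBlocklength \in \channelOutAlph^\codebookBlocklength} \channelpmf_{\channelOut^\codebookBlocklength | \channelInOne^\codebookBlocklength, \channelInTwo^\codebookBlocklength}(\channelOutAlphElement^\codebookBlocklength | \codebookOneWord{m_1}, \codebookTwoWord{m_2}) \indicator{(\codebookOneWord{m_1}, \codebookTwoWord{m_2}, \channelOutAlphElement^\codebookBlocklength) \in \alphSubset},
\]
which is the conditional probability (given the codewords) that $\alphSubset$ is hit, and therefore takes values in $[0,1]$. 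The sum inside the probability is then $S := \exp(-\codebookBlocklength(\codebookRateOne + \codebookRateTwo)) \sum_{m_1, m_2} Y_{m_1, m_2}$, and since codewords are drawn i.i.d.\ from $\channelpmf_\channelInOne$ and $\channelpmf_\channelInTwo$, we have $\Expectation Y_{m_1, m_2} = \Probability((\channelInOne^\codebookBlocklength, \channelInTwo^\codebookBlocklength, \channelOut^\codebookBlocklength) \in \alphSubset) \leq \lemmaexpectation$, hence $\Expectation S \leq \lemmaexpectation$.

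Next I would address the dependence structure. The variable $Y_{m_1, m_2}$ depends only on $\codebookOneWord{m_1}$ and $\codebookTwoWord{m_2}$, so $Y_{m_1, m_2}$ and $Y_{m_1', m_2'}$ are independent whenever $m_1 \neq m_1'$ and $m_2 \neq m_2'$. Viewing the index set as the edges of the complete bipartite graph $K_{\exp(\codebookBlocklength\codebookRateOne), \exp(\codebookBlocklength\codebookRateTwo)}$, pairs of the same color are independent iff they form a matching; by König's edge-coloring theorem the edges can be partitioned into $\coverNumber := \max(\exp(\codebookBlocklength\codebookRateOne), \exp(\codebookBlocklength\codebookRateTwo))$ matchings, giving a partition of the $\exp(\codebookBlocklength(\codebookRateOne + \codebookRateTwo))$ variables $Y_{m_1, m_2}$ into $\coverNumber$ independent classes.

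With this in place, the remaining step is a direct application of Theorem~\ref{theorem:janson} to $\sum_{m_1, m_2} Y_{m_1, m_2}$ with $\generalSummationBound = \exp(\codebookBlocklength(\codebookRateOne + \codebookRateTwo))$, $\coverNumber = \max(\exp(\codebookBlocklength\codebookRateOne), \exp(\codebookBlocklength\codebookRateTwo))$, and the deviation $\lemmaconst \lemmaexpectation \exp(\codebookBlocklength(\codebookRateOne + \codebookRateTwo))$ (using $\Expectation S \leq \lemmaexpectation$ to absorb the gap between the mean and $\lemmaexpectation$ into the deviation, which only helps). The exponent in Janson's bound then simplifies as
\[
\frac{2 \lemmaconst^2 \lemmaexpectation^2 \exp(2\codebookBlocklength(\codebookRateOne+\codebookRateTwo))}{\coverNumber \cdot \exp(\codebookBlocklength(\codebookRateOne+\codebookRateTwo))} = 2 \lemmaconst^2 \lemmaexpectation^2 \exp(\codebookBlocklength \min(\codebookRateOne, \codebookRateTwo)),
\]
yielding exactly the claimed tail probability.

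The main obstacle is the dependence between the $Y_{m_1, m_2}$; the rest is bookkeeping. The essential idea is that because each codeword appears in $\min(\exp(\codebookBlocklength\codebookRateOne), \exp(\codebookBlocklength\codebookRateTwo))$ of the $\exp(\codebookBlocklength(\codebookRateOne+\codebookRateTwo))$ summands, the effective number of independent samples is only $\min(\exp(\codebookBlocklength\codebookRateOne), \exp(\codebookBlocklength\codebookRateTwo))$ rather than the full $\exp(\codebookBlocklength(\codebookRateOne+\codebookRateTwo))$, which is exactly what the $\min(\codebookRateOne, \codebookRateTwo)$ in the exponent of the bound reflects.
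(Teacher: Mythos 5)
Your proof is correct and follows essentially the same route as the paper: identify the inner sum over $\channelOutAlphElement^\codebookBlocklength$ as a $[0,1]$-valued variable per codeword pair, partition the $\exp(\codebookBlocklength(\codebookRateOne+\codebookRateTwo))$ pairs into $\exp(\codebookBlocklength\max(\codebookRateOne,\codebookRateTwo))$ independent classes, and apply Janson's bound with deviation $\lemmaconst\lemmaexpectation\exp(\codebookBlocklength(\codebookRateOne+\codebookRateTwo))$. The only cosmetic difference is that you justify the partition via König's edge-coloring theorem, whereas the paper simply asserts it.
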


\begin{proof}
We have
\begin{align*}
&
\begin{aligned}
\hat{\Probability} =
&\Probability_{\codebookOne,\codebookTwo}\left(
  \sum\limits_{\codewordIndex_1=1}^{\exp(\codebookBlocklength\codebookRateOne)}
  \sum\limits_{\codewordIndex_2=1}^{\exp(\codebookBlocklength\codebookRateTwo)}
  \sum\limits_{\channelOutAlphElement^\codebookBlocklength \in \channelOutAlph^\codebookBlocklength}
  \right.
  \\
      &~
      \channelpmf_{\channelOut^\codebookBlocklength | \channelInOne^\codebookBlocklength, \channelInTwo^\codebookBlocklength}(\channelOutAlphElement^\codebookBlocklength | \codebookOneWord{\codewordIndex_1}, \codebookTwoWord{\codewordIndex_2})
      \indicator{(\codebookOneWord{\codewordIndex_1}, \codebookTwoWord{\codewordIndex_2}, \channelOutAlphElement^\codebookBlocklength) \in \alphSubset}
  \\
  &\left. \vphantom{\sum\limits_{\channelOutAlphElement^\codebookBlocklength \in \channelOutAlph^\codebookBlocklength}} >
  \exp(\codebookBlocklength(\codebookRateOne+\codebookRateTwo))
  (
    \lemmaexpectation
    +
    \lemmaexpectation
    \lemmaconst
  )
\right)
\end{aligned}
\\
&
\begin{aligned}
\leq
&\Probability_{\codebookOne,\codebookTwo}\left(
  \sum\limits_{\codewordIndex_1=1}^{\exp(\codebookBlocklength\codebookRateOne)}
  \sum\limits_{\codewordIndex_2=1}^{\exp(\codebookBlocklength\codebookRateTwo)}
  \sum\limits_{\channelOutAlphElement^\codebookBlocklength \in \channelOutAlph^\codebookBlocklength}
  \right.
  \\
      &~
      \channelpmf_{\channelOut^\codebookBlocklength | \channelInOne^\codebookBlocklength, \channelInTwo^\codebookBlocklength}(\channelOutAlphElement^\codebookBlocklength | \codebookOneWord{\codewordIndex_1}, \codebookTwoWord{\codewordIndex_2})
      \indicator{(\codebookOneWord{\codewordIndex_1}, \codebookTwoWord{\codewordIndex_2}, \channelOutAlphElement^\codebookBlocklength) \in \alphSubset}
  \\
  &\left. \vphantom{\sum\limits_{\channelOutAlphElement^\codebookBlocklength \in \channelOutAlph^\codebookBlocklength}} >
  \exp\Big(\codebookBlocklength(\codebookRateOne+\codebookRateTwo)\Big)
  \Big(
    \Probability((\channelInOne^\codebookBlocklength, \channelInTwo^\codebookBlocklength, \channelOut^\codebookBlocklength) \in \alphSubset)
    +
    \lemmaexpectation
    \lemmaconst
  \Big)
\right)
\end{aligned}
\\
&\leq
\exp\left(
  -2\frac{\exp(2\codebookBlocklength(\codebookRateOne+\codebookRateTwo))\lemmaexpectation^2\lemmaconst^2}
         {\exp(\codebookBlocklength\max(\codebookRateOne,\codebookRateTwo)) \exp(\codebookBlocklength(\codebookRateOne + \codebookRateTwo))}
\right)
\\
&=
\exp(-2 \lemmaconst^2 \lemmaexpectation^2 \exp(\codebookBlocklength\min(\codebookRateOne,\codebookRateTwo))),
\end{align*}
where the inequality follows from Theorem~\ref{theorem:janson} by observing that the innermost sum is confined to $[0,1]$, the two outer summations together have $\exp(\codebookBlocklength(\codebookRateOne+\codebookRateTwo)$ summands which can be partitioned into $\exp(\codebookBlocklength(\max(\codebookRateOne,\codebookRateTwo))$ sets with $\exp(\codebookBlocklength\min(\codebookRateOne,\codebookRateTwo))$ independently distributed elements each, and the overall expectation of the term is $\exp(\codebookBlocklength(\codebookRateOne+\codebookRateTwo)\Probability((\channelInOne^\codebookBlocklength, \channelInTwo^\codebookBlocklength, \channelOut^\codebookBlocklength) \in \alphSubset)$.
\end{proof}

We now have all the technical ingerdients needed to finish the proof of Theorem~\ref{theorem:soft-covering-two-transmitters}.

\begin{proof}[Proof of Theorem~\ref{theorem:soft-covering-two-transmitters}]
In order to bound $\totvarAtypicalOne$, we observe that for any $\renyiParam > 1$, we can bound
\begin{align}
\label{proof:soft-covering-two-transmitters-probability-bound-start}
&\phantom{{}={}}
\Probability_{\channelInOne^\codebookBlocklength, \channelInTwo^\codebookBlocklength, \channelOut^\codebookBlocklength}((\channelInOne^\codebookBlocklength, \channelInTwo^\codebookBlocklength, \channelOut^\codebookBlocklength) \notin \typicalSetIndex{\typicalityParam}{\codebookBlocklength}{1})
\\
&\begin{aligned}
=
\Probability_{\channelInOne^\codebookBlocklength, \channelInTwo^\codebookBlocklength, \channelOut^\codebookBlocklength}\left(
  \vphantom{
    \frac{\channelpmf_{\channelOut^\codebookBlocklength | \channelInOne^\codebookBlocklength, \channelInTwo^\codebookBlocklength}(\channelOut^\codebookBlocklength | \channelInOne^\codebookBlocklength, \channelInTwo^\codebookBlocklength)}
        {\channelpmf_{\channelOut^\codebookBlocklength | \channelInTwo^\codebookBlocklength}(\channelOut^\codebookBlocklength | \channelInTwo^\codebookBlocklength)}
  }
  \right.
  &\frac{\channelpmf_{\channelOut^\codebookBlocklength | \channelInOne^\codebookBlocklength, \channelInTwo^\codebookBlocklength}(\channelOut^\codebookBlocklength | \channelInOne^\codebookBlocklength, \channelInTwo^\codebookBlocklength)}
       {\channelpmf_{\channelOut^\codebookBlocklength | \channelInTwo^\codebookBlocklength}(\channelOut^\codebookBlocklength | \channelInTwo^\codebookBlocklength)}
\\
  &>
  \left.
  \vphantom{
    \frac{\channelpmf_{\channelOut^\codebookBlocklength | \channelInOne^\codebookBlocklength, \channelInTwo^\codebookBlocklength}(\channelOut^\codebookBlocklength | \channelInOne^\codebookBlocklength, \channelInTwo^\codebookBlocklength)}
        {\channelpmf_{\channelOut^\codebookBlocklength | \channelInTwo^\codebookBlocklength}(\channelOut^\codebookBlocklength | \channelInTwo^\codebookBlocklength)}
  }
  \exp(\codebookBlocklength(\mutualInformationConditional{\channelInOne}{\channelOut}{\channelInTwo}+\typicalityParam))
\right)
\end{aligned}
\\
&\begin{aligned}
=
\Probability_{\channelInOne^\codebookBlocklength, \channelInTwo^\codebookBlocklength, \channelOut^\codebookBlocklength}
\left(
\vphantom{
  \frac{\channelpmf_{\channelOut^\codebookBlocklength | \channelInOne^\codebookBlocklength, \channelInTwo^\codebookBlocklength}(\channelOut^\codebookBlocklength | \channelInOne^\codebookBlocklength, \channelInTwo^\codebookBlocklength)}
          {\channelpmf_{\channelOut^\codebookBlocklength | \channelInTwo^\codebookBlocklength}(\channelOut^\codebookBlocklength | \channelInTwo^\codebookBlocklength)}
}
\right.
  &\left(
    \frac{\channelpmf_{\channelOut^\codebookBlocklength | \channelInOne^\codebookBlocklength, \channelInTwo^\codebookBlocklength}(\channelOut^\codebookBlocklength | \channelInOne^\codebookBlocklength, \channelInTwo^\codebookBlocklength)}
        {\channelpmf_{\channelOut^\codebookBlocklength | \channelInTwo^\codebookBlocklength}(\channelOut^\codebookBlocklength | \channelInTwo^\codebookBlocklength)}
  \right)^{\renyiParam-1}
  \\
  &>
  \exp(\codebookBlocklength(\renyiParam-1)(\mutualInformationConditional{\channelInOne}{\channelOut}{\channelInTwo}+\typicalityParam))
\left.
\vphantom{
  \frac{\channelpmf_{\channelOut^\codebookBlocklength | \channelInOne^\codebookBlocklength, \channelInTwo^\codebookBlocklength}(\channelOut^\codebookBlocklength | \channelInOne^\codebookBlocklength, \channelInTwo^\codebookBlocklength)}
       {\channelpmf_{\channelOut^\codebookBlocklength | \channelInTwo^\codebookBlocklength}(\channelOut^\codebookBlocklength | \channelInTwo^\codebookBlocklength)}
}
\right)
\end{aligned}
\\
&\begin{aligned}
\leq
&\Expectation_{\channelInOne^\codebookBlocklength, \channelInTwo^\codebookBlocklength, \channelOut^\codebookBlocklength}\left(
  \left(
    \frac{\channelpmf_{\channelOut^\codebookBlocklength | \channelInOne^\codebookBlocklength, \channelInTwo^\codebookBlocklength}(\channelOut^\codebookBlocklength | \channelInOne^\codebookBlocklength, \channelInTwo^\codebookBlocklength)}
        {\channelpmf_{\channelOut^\codebookBlocklength | \channelInTwo^\codebookBlocklength}(\channelOut^\codebookBlocklength | \channelInTwo^\codebookBlocklength)}
  \right)^{\renyiParam-1}
\right)
\\
&\cdot \exp(-\codebookBlocklength(\renyiParam-1)(\mutualInformationConditional{\channelInOne}{\channelOut}{\channelInTwo}+\typicalityParam))
\end{aligned}
\\
&\begin{aligned}
=
\exp\big(
  &\codebookBlocklength(\renyiParam-1)
  \cdot (
    \renyidiv{\renyiParam}{\Probability_{\channelInOne, \channelInTwo, \channelOut}}{\Probability_{\channelInOne | \channelInTwo}\Probability_{\channelOut | \channelInTwo}\Probability_\channelInTwo}
\\
    &-
    \mutualInformationConditional{\channelInOne}{\channelOut}{\channelInTwo}-\typicalityParam 
  )
\big)
\end{aligned}
\\
&\leq
\exp(-\codebookBlocklength\proofconstantOne),
\label{proof:soft-covering-two-transmitters-probability-bound-end}
\end{align}
where (\ref{proof:soft-covering-two-transmitters-probability-bound-end}) holds as long as $\proofconstantOne < (\renyiParam-1)(\mutualInformationConditional{\channelInOne}{\channelOut}{\channelInTwo}+\typicalityParam-\renyidiv{\renyiParam}{\Probability_{\channelInOne, \channelInTwo, \channelOut}}{\Probability_{\channelInOne | \channelInTwo}\Probability_{\channelOut | \channelInTwo}\Probability_\channelInTwo})$. We can achieve this for sufficiently small $\proofconstantOne > 0$ as long as $\renyiParam>1$ and $\mutualInformationConditional{\channelInOne}{\channelOut}{\channelInTwo}+\typicalityParam-\renyidiv{\renyiParam}{\Probability_{\channelInOne, \channelInTwo, \channelOut}}{\Probability_{\channelInOne | \channelInTwo}\Probability_{\channelOut | \channelInTwo}\Probability_\channelInTwo} > 0$. In order to choose an $\renyiParam > 1$ such that the latter requirement holds, note that since our alphabets are finite, the Rényi divergence is also finite and thus it is continuous and approaches the Kullback-Leibler divergence for $\renyiParam$ tending to $1$~\cite{RenyiDiv}, which is in this case equal to the mutual information term.

We apply Lemma~\ref{lemma:soft-covering-two-transmitters-atypical} with $\alphSubset = (\channelInOneAlph^\codebookBlocklength \times \channelInTwoAlph^\codebookBlocklength \times \channelOutAlph^\codebookBlocklength) \setminus \typicalSetIndex{\typicalityParam}{\codebookBlocklength}{1}$ and $\lemmaconst = 1$ to obtain
\begin{multline}
\label{proof:soft-covering-two-transmitters-atypical-bound-1}
\Probability_{\codebookOne, \codebookTwo}\left(
  \totvarAtypicalOne
  >
  2\exp(-\codebookBlocklength\proofconstantOne)
\right)
\\
\leq
\exp(
  -2\exp(
    \codebookBlocklength(
      \min(\codebookRateOne,\codebookRateTwo) - 2\proofconstantOne
    )
  )
).
\end{multline}
Proceeding along similar lines of reasoning including another application of Lemma~\ref{lemma:soft-covering-two-transmitters-atypical} with $\alphSubset = \channelInOneAlph^\codebookBlocklength \times ((\channelInTwoAlph^\codebookBlocklength \times \channelOutAlph^\codebookBlocklength) \setminus \typicalSetIndex{\typicalityParam}{\codebookBlocklength}{2})$ and $\lemmaconst=1$, we show that if $\proofconstantOne>0$ is small enough,
\begin{multline}
\label{proof:soft-covering-two-transmitters-atypical-bound-2}
\Probability_{\codebookOne, \codebookTwo}\left(
  \totvarAtypicalTwo
  >
  2\exp(-\codebookBlocklength\proofconstantOne)
\right)
\\
\leq
\exp(
  -2\exp(
    \codebookBlocklength(
      \min(\codebookRateOne,\codebookRateTwo) - 2\proofconstantOne
    )
  )
).
\end{multline}
As for the typical term, we first observe that for any fixed $\channelInTwoAlphElement^\codebookBlocklength$ and $\channelOutAlphElement^\codebookBlocklength$, we can apply Lemma~\ref{lemma:soft-covering-two-transmitters-typical} with $\generalrvOne=\channelInOne$, $\generalrvTwo=\channelInTwo$, $\generalrvThree=\channelOut$ and $\lemmaconst=\exp(-\codebookBlocklength\proofconstantOne)$ to obtain
\begin{multline}
\label{proof:soft-covering-two-transmitters-typical-bound}
\Probability_{\codebookOne}\left(
  \totvarTypicalOne{\channelInTwoAlphElement^\codebookBlocklength}{\channelOutAlphElement^\codebookBlocklength}
  >
  1 + \exp(-\codebookBlocklength\proofconstantOne)
\right)
\\
\leq
\exp\left(
  -\frac{1}{3} \exp(-\codebookBlocklength (\mutualInformationConditional{\channelInOne}{\channelOut}{\channelInTwo} + \typicalityParam + 2\proofconstantOne - \codebookRateOne))
\right),
\end{multline}
where we used
\begin{multline}
\label{def:soft-covering-typical-term-one}
\totvarTypicalOne{\channelInTwoAlphElement^\codebookBlocklength}{\channelOutAlphElement^\codebookBlocklength}
:=
\sum\limits_{\codewordIndex_1=1}^{\exp(\codebookBlocklength\codebookRateOne)}
    \exp(-\codebookBlocklength(\codebookRateOne))
    \\
    \cdot \frac{\channelpmf_{\channelOut^\codebookBlocklength | \channelInOne^\codebookBlocklength, \channelInTwo^\codebookBlocklength}(\channelOutAlphElement^\codebookBlocklength | \codebookOneWord{\codewordIndex_1}, \channelInTwoAlphElement^\codebookBlocklength)}
          {\channelpmf_{\channelOut^\codebookBlocklength | \channelInTwo^\codebookBlocklength}(\channelOutAlphElement^\codebookBlocklength | \channelInTwoAlphElement^\codebookBlocklength)}
    \indicator{(\codebookOneWord{\codewordIndex_1}, \channelInTwoAlphElement^\codebookBlocklength, \channelOutAlphElement^\codebookBlocklength) \in \typicalSetIndex{\typicalityParam}{\codebookBlocklength}{1}}.
\end{multline}
We define a set of codebooks
\begin{align}
\label{def:soft-covering-good-codebooks}
\codebookSet_{\channelOutAlphElement^\codebookBlocklength}
:=
\bigcap\limits_{\channelInTwoAlphElement^\codebookBlocklength \in \channelInTwoAlph^\codebookBlocklength}
  \left\{
    \codebookOne:
    \totvarTypicalOne{\channelInTwoAlphElement^\codebookBlocklength}{\channelOutAlphElement^\codebookBlocklength}
    \leq
    1 + \exp(-\codebookBlocklength\proofconstantOne)
  \right\}
\end{align}
and bound for arbitrary but fixed $\channelOutAlphElement^\codebookBlocklength$
\begin{align*}
\tilde{\Probability}
:=
\Probability_{\codebookOne, \codebookTwo}\left(
  \totvarTypical{\channelOutAlphElement^\codebookBlocklength}
  >
  1 + 3\exp(-\codebookBlocklength\proofconstantOne)
  ~|~
  \codebookOne \in \codebookSet_{\channelOutAlphElement^\codebookBlocklength}
\right)
\end{align*}
in~(\ref{proof:soft-covering-two-transmitters-totalprob1}) to~(\ref{proof:soft-covering-two-transmitters-lemmaapplication2}) in the appendix.

We can now put everything together as shown in (\ref{proof:soft-covering-two-transmitters-union-bound-start}) to (\ref{proof:soft-covering-two-transmitters-union-bound-substitutions}) in the appendix.

What remains is to choose $\finalconstOne$ and $\finalconstTwo$ such that (\ref{theorem:soft-covering-two-transmitters-probability-statement}) holds. First, we have to choose $\typicalityParam$ and $\proofconstantOne$ small enough such that
\begin{align*}
\hat{\lemmaconst} := \min\big(
  &\min(\codebookRateOne,\codebookRateTwo)-2\proofconstantOne,
  \\
  &\codebookRateOne - 2\proofconstantOne - \typicalityParam - \mutualInformationConditional{\channelInOne}{\channelOut}{\channelInTwo},
  \\
  &\codebookRateTwo - 2\proofconstantOne - \typicalityParam - \mutualInformation{\channelInTwo}{\channelOut}
\big)
\end{align*}
is positive. Since there have so far been no constraints on $\proofconstantOne$ and $\typicalityParam$ except that they are positive and sufficiently small, such a choice is possible provided $\codebookRateOne > \mutualInformationConditional{\channelInOne}{\channelOut}{\channelInTwo}$ and $\codebookRateTwo > \mutualInformation{\channelInTwo}{\channelOut}$. The theorem then follows for large enough $\codebookBlocklength$ by choosing positive $\finalconstTwo < \hat{\lemmaconst}$ and $\finalconstOne < \proofconstantOne$.
\end{proof}

Obviously, we can reverse the roles of $\channelInOne$ and $\channelInTwo$ in Theorem~\ref{theorem:soft-covering-two-transmitters}. In order to prove the full Theorem~\ref{theorem:soft-covering-two-transmitters-convex}, however, we need time sharing between the two corner points, as we detail in the following proof.

\begin{proof}[Proof of Theorem~\ref{theorem:soft-covering-two-transmitters-convex}]
We are given $\codebookRateOne$ and $\codebookRateTwo$ satisfying (\ref{theorem:soft-covering-two-transmitters-convex-rateone}), (\ref{theorem:soft-covering-two-transmitters-convex-ratetwo}) and (\ref{theorem:soft-covering-two-transmitters-convex-sumrates}). In this proof, we show how to find a time sharing parameter $\hat{\convexityParam}$ with the following properties:
\begin{align*}
\codebookRateOne
&>
\hat{\convexityParam} \mutualInformation{\channelInOne}{\channelOut}
+
(1-\hat{\convexityParam}) \mutualInformationConditional{\channelInOne}{\channelOut}{\channelInTwo}
\\
\codebookRateTwo
&>
\hat{\convexityParam} \mutualInformationConditional{\channelInTwo}{\channelOut}{\channelInOne}
+
(1-\hat{\convexityParam}) \mutualInformation{\channelInTwo}{\channelOut}
\end{align*}
We can then conclude the proof by applying Theorem~\ref{theorem:soft-covering-two-transmitters} twice: First we apply it with a block length of $(1-\hat{\convexityParam})\codebookBlocklength$ and then we apply it with reversed roles of $\channelInOne$ and $\channelInTwo$ and a block length of $\hat{\convexityParam}\codebookBlocklength$.

To this end, we define functions mapping from $[0,1]$ to the reals
\begin{align*}
\codebookRateOne(\convexityParam)
&:=
\convexityParam \mutualInformation{\channelInOne}{\channelOut}
+
(1-\convexityParam) \mutualInformationConditional{\channelInOne}{\channelOut}{\channelInTwo}
\\
\codebookRateTwo(\convexityParam)
&:=
\convexityParam \mutualInformationConditional{\channelInTwo}{\channelOut}{\channelInOne}
+
(1-\convexityParam) \mutualInformation{\channelInTwo}{\channelOut}.
\end{align*}
Note that $\codebookRateOne(\convexityParam)$ and $\codebookRateTwo(\convexityParam)$ are continuous in $\convexityParam$, $\codebookRateOne(0) = \mutualInformationConditional{\channelInOne}{\channelOut}{\channelInTwo}$, $\codebookRateOne(1) = \mutualInformation{\channelInOne}{\channelOut}$, $\codebookRateTwo(0) = \mutualInformation{\channelInTwo}{\channelOut}$ and $\codebookRateTwo(1) = \mutualInformationConditional{\channelInTwo}{\channelOut}{\channelInOne}$. If $\codebookRateOne > \mutualInformationConditional{\channelInOne}{\channelOut}{\channelInTwo}$ or $\codebookRateTwo > \mutualInformationConditional{\channelInTwo}{\channelOut}{\channelInOne}$, there is nothing to prove since Theorem~\ref{theorem:soft-covering-two-transmitters} can be applied directly. So we assume $\codebookRateOne \leq \mutualInformationConditional{\channelInOne}{\channelOut}{\channelInTwo}$ and $\codebookRateTwo \leq \mutualInformationConditional{\channelInTwo}{\channelOut}{\channelInOne}$, which means that $\codebookRateOne(\convexityParam)$ is strictly decreasing and $\codebookRateTwo(\convexityParam)$ is strictly increasing in $\convexityParam$ and we can find $\convexityParam_1, \convexityParam_2 \in [0,1]$ such that $\codebookRateOne(\convexityParam_1) = \codebookRateOne$ and $\codebookRateTwo(\convexityParam_2) = \codebookRateTwo$. Next, we note that $\convexityParam_1 < \convexityParam_2$ because under the assumption $\convexityParam_1 \geq \convexityParam_2$, we can obtain the contradiction
\begin{align*}
\codebookRateOne + \codebookRateTwo \leq \codebookRateOne(\convexityParam_2) + \codebookRateTwo(\convexityParam_2) = \mutualInformation{\channelInOne, \channelInTwo}{\channelOut}.
\end{align*}
So we can find $\hat{\convexityParam}$ with $\convexityParam_1 < \hat{\convexityParam} < \convexityParam_2$. We then have $\codebookRateOne > \codebookRateOne(\hat{\convexityParam})$ and $\codebookRateTwo > \codebookRateTwo(\hat{\convexityParam})$, as required.
\end{proof}

\subsection{Second Order Rates}
Although the probability statement in Theorem~\ref{theorem:soft-covering-two-transmitters-convex} is quite strong in the sense that it guarantees a doubly exponentially low probability of drawing a bad pair of codebooks, a weak spot of the theorem as stated is that the constants $\finalconstOne$ and $\finalconstTwo$ are guaranteed to be positive, but can be arbitrarily small depending on the channel, the input distribution and the codebook rates and that the theorem is only valid for sufficiently large $\codebookBlocklength$. So in particular, if we are given a fixed $\codebookBlocklength$ (even a very large one), we cannot draw any conclusions for the error probabilities of codebooks of block length $\codebookBlocklength$. However, a closer examination of the proof shows that stronger statements are possible in this case. In this subsection, we give details how the proof of Theorem~\ref{theorem:soft-covering-two-transmitters-convex} can be modified so that it yields the following second-order result.
\begin{theorem}
\label{theorem:soft-covering-two-transmitters-second-order}
Given a channel
$\channel = (\channelInOneAlph, \channelInTwoAlph, \channelOutAlph, \channelpmf_{\channelOut | \channelInOne, \channelInTwo})$,
input distributions $\channelpmf_\channelInOne$ and $\channelpmf_\channelInTwo$, $\typicalityParam \in (0,1)$, let the central second and absolute third moment of $\informationDensityConditional{\channelInOne}{\channelOut}{\channelInTwo}$ be $\channelDispersion{1}$ and $\channelThirdMoment{1}$, respectively; analogously, we use $\channelDispersion{2}$ and $\channelThirdMoment{2}$ to denote the central second and absolute third moment of $\informationDensity{\channelInTwo}{\channelOut}$. Suppose the rates $\codebookRateOne, \codebookRateTwo$ depend on $\codebookBlocklength$ in the following way:
\begin{alignat}{3}
\label{theorem:soft-covering-two-transmitters-second-order-rate-one}
\codebookRateOne
&=
\mutualInformationConditional{\channelInOne}{\channelOut}{\channelInTwo}&
&+
\sqrt{\frac{\channelDispersion{1}}{\codebookBlocklength}} \normalcdfComplementInverse(\typicalityParam)&
&+
\secondOrderParamC\frac{\log \codebookBlocklength}
                       {\codebookBlocklength} \\
\label{theorem:soft-covering-two-transmitters-second-order-rate-two}
\codebookRateTwo
&=
\mutualInformation{\channelInTwo}{\channelOut}&
&+
\sqrt{\frac{\channelDispersion{2}}{\codebookBlocklength}} \normalcdfComplementInverse(\typicalityParam)&
&+
\secondOrderParamC\frac{\log \codebookBlocklength}
                       {\codebookBlocklength},
\end{alignat}
where $\normalcdfComplement := 1 - \normalcdf$ with $\normalcdf$ as defined in the statement of Theorem~\ref{theorem:berry-esseen}, and $\secondOrderParamC>1$. Then, for any $\secondOrderParamD \in (0, \secondOrderParamC-1)$, we have
\begin{multline*}
\begin{aligned}
  \Probability_{\codebookOne, \codebookTwo}
  \left( \vphantom{\frac{1}{\sqrt{\codebookBlocklength}}}
  \right.
    &\totalvariation{\codebookpmf_{\channelOut^\codebookBlocklength | \codebookOne, \codebookTwo} - \channelpmf_{\channelOut^\codebookBlocklength}}
    >
    \\ &\left.
    (\secondOrderAtypicalProbability{1} + \secondOrderAtypicalProbability{2})
    \left(1+\frac{1}{\sqrt{\codebookBlocklength}}\right)
    +
    \frac{3}{\sqrt{\codebookBlocklength}}
  \right)
\end{aligned}
\\
\leq
2\exp\left(
  -\frac{2\min(\secondOrderAtypicalProbability{1}^2,\secondOrderAtypicalProbability{2}^2)}
        {\codebookBlocklength}
  \exp(\codebookBlocklength \min(\codebookRateOne,\codebookRateTwo))
\right) \\
+
2\exp\left(
  \codebookBlocklength(\log \cardinality{\channelOutAlph} + \log \cardinality{\channelInTwoAlph})
  -\frac{1}{3}
  \codebookBlocklength^{\secondOrderParamC - \secondOrderParamD - 1}
\right),
\end{multline*}
where for both $\txIndex=1$ and $\txIndex=2$,
\begin{align*}
\secondOrderAtypicalProbability{\txIndex}
:=
\normalcdfComplement\left(
  \normalcdfComplementInverse(\typicalityParam)
  +
  \frac{\secondOrderParamD \log \codebookBlocklength}
       {\sqrt{\codebookBlocklength\channelDispersion{\txIndex}}}
\right)
+
\frac{\channelThirdMoment{\txIndex}}
     {\channelDispersion{\txIndex}^{\frac{3}{2}} \sqrt{\codebookBlocklength}}
\end{align*}
tends to $\typicalityParam$ for $\codebookBlocklength \rightarrow \infty$.
\end{theorem}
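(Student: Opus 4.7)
The plan is to retrace the proof of Theorem~\ref{theorem:soft-covering-two-transmitters} with two substitutions: replace the Rényi-divergence tail estimate that bounds $\Probability((\channelInOne^\codebookBlocklength, \channelInTwo^\codebookBlocklength, \channelOut^\codebookBlocklength) \notin \typicalSetIndex{\typicalityParam}{\codebookBlocklength}{k})$ by a sharper Berry--Esseen estimate, and shrink the slack parameter $\lemmaconst$ in Lemmas~\ref{lemma:soft-covering-two-transmitters-typical} and~\ref{lemma:soft-covering-two-transmitters-atypical} from a constant to $1/\sqrt{\codebookBlocklength}$. The rate hypotheses (\ref{theorem:soft-covering-two-transmitters-second-order-rate-one})--(\ref{theorem:soft-covering-two-transmitters-second-order-rate-two}) are calibrated so that these two sharpenings still leave enough slack between the typicality thresholds and the rates to close the argument.

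First I redefine the typical sets with the finer thresholds
\[
\codebookBlocklength \mutualInformationConditional{\channelInOne}{\channelOut}{\channelInTwo}
+ \sqrt{\codebookBlocklength \channelDispersion{1}}\, \normalcdfComplementInverse(\typicalityParam)
+ \secondOrderParamD \log \codebookBlocklength
\]
for $\typicalSetIndex{\typicalityParam}{\codebookBlocklength}{1}$ and its analogue using $\mutualInformation{\channelInTwo}{\channelOut}$ and $\channelDispersion{2}$ for $\typicalSetIndex{\typicalityParam}{\codebookBlocklength}{2}$. Applying Theorem~\ref{theorem:berry-esseen} to the i.i.d.\ sums underlying $\informationDensityConditional{\channelInOne^\codebookBlocklength}{\channelOut^\codebookBlocklength}{\channelInTwo^\codebookBlocklength}$ and $\informationDensity{\channelInTwo^\codebookBlocklength}{\channelOut^\codebookBlocklength}$ bounds the atypicality probability by exactly $\secondOrderAtypicalProbability{k}$, which by continuity of $\normalcdfComplement$ converges to $\typicalityParam$ as $\codebookBlocklength \rightarrow \infty$.

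Next I feed $\lemmaexpectation = \secondOrderAtypicalProbability{k}$ and $\lemmaconst = 1/\sqrt{\codebookBlocklength}$ into Lemma~\ref{lemma:soft-covering-two-transmitters-atypical} to bound each of $\totvarAtypicalOne$ and $\totvarAtypicalTwo$ by $\secondOrderAtypicalProbability{k}(1+1/\sqrt{\codebookBlocklength})$ outside an event of probability $\exp(-2 \secondOrderAtypicalProbability{k}^2 \exp(\codebookBlocklength \min(\codebookRateOne, \codebookRateTwo))/\codebookBlocklength)$; a union bound over $k \in \{1,2\}$ produces the first summand of the claimed bound. For the typical part I apply Lemma~\ref{lemma:soft-covering-two-transmitters-typical} with $\lemmaconst = 1/\sqrt{\codebookBlocklength}$ to each fixed pair $(\channelInTwoAlphElement^\codebookBlocklength, \channelOutAlphElement^\codebookBlocklength)$: by the rate choice, the exponent $\codebookBlocklength(\codebookRateOne - \mutualInformationConditional{\channelInOne}{\channelOut}{\channelInTwo}) - \sqrt{\codebookBlocklength\channelDispersion{1}}\normalcdfComplementInverse(\typicalityParam) - \secondOrderParamD \log \codebookBlocklength$ reduces to exactly $(\secondOrderParamC - \secondOrderParamD)\log\codebookBlocklength$, yielding a per-pair exceedance probability of $\exp(-\codebookBlocklength^{\secondOrderParamC - \secondOrderParamD - 1}/3)$. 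A union bound over the $\exp(\codebookBlocklength(\log \cardinality{\channelOutAlph} + \log \cardinality{\channelInTwoAlph}))$ pairs, together with the symmetric argument for $\codebookTwo$, yields the second summand. Combining via (\ref{proof:soft-covering-two-transmitters-typical-split}) and collecting three slack contributions of size $1/\sqrt{\codebookBlocklength}$ (one from the typical sum and one from each atypical sum) gives the residual $3/\sqrt{\codebookBlocklength}$.

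The main obstacle is the joint tuning of $\secondOrderParamC$, $\secondOrderParamD$, and $\lemmaconst$. The polynomial gap $(\secondOrderParamC - \secondOrderParamD)\log \codebookBlocklength/\codebookBlocklength$ between the typicality threshold and the rate must be wide enough that the Chernoff--Hoeffding bound on the typical sum survives the exponential union bound over $\channelInTwoAlph^\codebookBlocklength \times \channelOutAlph^\codebookBlocklength$; at the same time the perturbation $\secondOrderParamD \log \codebookBlocklength/\sqrt{\codebookBlocklength\channelDispersion{k}}$ must remain small enough that the Berry--Esseen estimate $\secondOrderAtypicalProbability{k}$ still approaches $\typicalityParam$. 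The hypothesis $\secondOrderParamC > 1$ combined with $\secondOrderParamD \in (0, \secondOrderParamC - 1)$ is exactly what makes $\codebookBlocklength^{\secondOrderParamC - \secondOrderParamD - 1} \rightarrow \infty$ while keeping the Gaussian shift $\secondOrderParamD\log\codebookBlocklength/\sqrt{\codebookBlocklength}$ vanishing, so that both summands of the announced bound tend to zero in the regime of interest.
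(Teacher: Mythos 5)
Your proposal is correct and follows essentially the same route as the paper's proof: redefine the typicality thresholds via $\typicalityParam_\txIndex = \sqrt{\channelDispersion{\txIndex}/\codebookBlocklength}\,\normalcdfComplementInverse(\typicalityParam) + \secondOrderParamD \log\codebookBlocklength/\codebookBlocklength$, bound the atypicality probabilities with Berry--Esseen, and rerun Lemmas~\ref{lemma:soft-covering-two-transmitters-typical} and~\ref{lemma:soft-covering-two-transmitters-atypical} with $\lemmaconst = 1/\sqrt{\codebookBlocklength}$ before the union bounds of the first-order argument. One small bookkeeping correction: the residual $3/\sqrt{\codebookBlocklength}$ does not collect one slack from each of the three sums — the atypical slacks are already absorbed multiplicatively into $\secondOrderAtypicalProbability{\txIndex}(1+1/\sqrt{\codebookBlocklength})$ — but arises entirely from the two-stage conditioning of the typical term, where $(1+1/\sqrt{\codebookBlocklength})^2 \leq 1 + 3/\sqrt{\codebookBlocklength}$.
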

An immediate consequence of this theorem is the following corollary that follows by applying the theorem with the roles of $\channelInOne$ and $\channelInTwo$ reversed.
\begin{cor}
\label{cor:soft-covering-two-transmitters-second-order}
Theorem~\ref{theorem:soft-covering-two-transmitters-second-order} holds with (\ref{theorem:soft-covering-two-transmitters-second-order-rate-one}) and (\ref{theorem:soft-covering-two-transmitters-second-order-rate-two}) replaced by
\begin{align*}
\codebookRateOne
&=
\mutualInformation{\channelInOne}{\channelOut}
+
\sqrt{\frac{\channelDispersion{1}}{\codebookBlocklength}} \normalcdfComplementInverse(\typicalityParam)
+
\secondOrderParamC\frac{\log \codebookBlocklength}
                       {\codebookBlocklength} \\
\codebookRateTwo
&=
\mutualInformationConditional{\channelInTwo}{\channelOut}{\channelInOne}
+
\sqrt{\frac{\channelDispersion{2}}{\codebookBlocklength}} \normalcdfComplementInverse(\typicalityParam)
+
\secondOrderParamC\frac{\log \codebookBlocklength}
                       {\codebookBlocklength}
\end{align*}
and $\channelDispersion{1}$, $\channelThirdMoment{1}$, $\channelDispersion{2}$ and $\channelThirdMoment{2}$ redefined to be the second and third moments of $\informationDensity{\channelInOne}{\channelOut}$ and $\informationDensityConditional{\channelInTwo}{\channelOut}{\channelInOne}$, respectively.
\end{cor}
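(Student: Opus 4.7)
The plan is to derive the corollary from Theorem~\ref{theorem:soft-covering-two-transmitters-second-order} by a straightforward relabelling of the two transmitters. I would introduce an auxiliary channel $\tilde{\channel} = (\channelInTwoAlph, \channelInOneAlph, \channelOutAlph, \tilde{\channelpmf}_{\channelOut|\channelInTwo,\channelInOne})$, defined by $\tilde{\channelpmf}_{\channelOut|\channelInTwo,\channelInOne}(\channelOutAlphElement|\channelInTwoAlphElement,\channelInOneAlphElement) := \channelpmf_{\channelOut|\channelInOne,\channelInTwo}(\channelOutAlphElement|\channelInOneAlphElement,\channelInTwoAlphElement)$, take $\channelpmf_{\channelInTwo}$ as the first input distribution and $\channelpmf_{\channelInOne}$ as the second, and pair them with codebook rates $\codebookRateTwo$ and $\codebookRateOne$ in this swapped order. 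Applying Theorem~\ref{theorem:soft-covering-two-transmitters-second-order} to $\tilde{\channel}$ directly yields the corresponding probability statement for $\Probability_{\codebookTwo, \codebookOne}$ and $\codebookpmf_{\channelOut^\codebookBlocklength | \codebookTwo, \codebookOne}$; my task is then to translate this statement into the notation of the original setup.

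The key step is to verify that every quantity transforms as expected. Because the induced joint distribution on the triple $(\channelInOne, \channelInTwo, \channelOut)$ is invariant under the relabelling, the role of $\mutualInformationConditional{\channelInOne}{\channelOut}{\channelInTwo}$ in the theorem is played by $\mutualInformationConditional{\channelInTwo}{\channelOut}{\channelInOne}$ in the corollary, while $\mutualInformation{\channelInTwo}{\channelOut}$ is replaced by $\mutualInformation{\channelInOne}{\channelOut}$. The dispersions $\channelDispersion{1}, \channelThirdMoment{1}$ and $\channelDispersion{2}, \channelThirdMoment{2}$ are correspondingly identified with the central second and absolute third moments of $\informationDensity{\channelInOne}{\channelOut}$ and $\informationDensityConditional{\channelInTwo}{\channelOut}{\channelInOne}$, respectively, which matches the corollary's redefinition. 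Inspecting the definitions in Section~\ref{section:preliminaries} shows that both the codebook law $\Probability_{\codebookOne, \codebookTwo}$ and the induced output distribution $\codebookpmf_{\channelOut^\codebookBlocklength | \codebookOne, \codebookTwo}$ are symmetric under swapping the two codebooks, so the total variation on the left-hand side of the probability bound is unchanged. The dominant pieces of the right-hand side, $\min(\codebookRateOne, \codebookRateTwo)$ and $\min(\secondOrderAtypicalProbability{1}^2, \secondOrderAtypicalProbability{2}^2)$, are manifestly symmetric in the indices.

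There is no genuine obstacle here; the argument is pure bookkeeping. The only subtlety worth noting is the $\log\cardinality{\channelInTwoAlph}$ factor appearing in the second exponential term of the bound, which under the relabelling would literally become $\log\cardinality{\channelInOneAlph}$. Since the dominant $\codebookBlocklength^{\secondOrderParamC - \secondOrderParamD - 1}$ term with $\secondOrderParamC - \secondOrderParamD - 1 > 0$ grows super-linearly and therefore overwhelms any constant-alphabet contribution for large $\codebookBlocklength$, the stated bound is recovered once $\codebookBlocklength$ is sufficiently large, which is consistent with the asymptotic character of the claim.
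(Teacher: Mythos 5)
Your proposal is correct and is exactly the paper's own argument: Corollary~\ref{cor:soft-covering-two-transmitters-second-order} is obtained simply by applying Theorem~\ref{theorem:soft-covering-two-transmitters-second-order} with the roles of $\channelInOne$ and $\channelInTwo$ (and hence of the codebooks, rates, typical sets and moments) reversed, which is all the paper states. One small caveat on your side remark: $\codebookBlocklength^{\secondOrderParamC - \secondOrderParamD - 1}$ need not grow super-linearly, since $\secondOrderParamC - \secondOrderParamD - 1$ may lie in $(0,1]$; the cleaner reading is that the relabelling also exchanges $\cardinality{\channelInOneAlph}$ and $\cardinality{\channelInTwoAlph}$ in the final bound, so no asymptotic absorption argument is needed.
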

\begin{remark}
The question of how the achievable second-order rates behave near the line connecting these two corner points should be a subject of further research.
\end{remark}

\begin{proof}[Proof of Theorem~\ref{theorem:soft-covering-two-transmitters-second-order}]
We consider the typical sets $\typicalSetIndex{\typicalityParam_1}{\codebookBlocklength}{1}$ and $\typicalSetIndex{\typicalityParam_2}{\codebookBlocklength}{2}$, where for $\txIndex=1,2$, we choose $\typicalityParam_\txIndex >0$ to be
\begin{align}
\label{proof:soft-covering-two-transmitters-second-order-typicalityparam}
\typicalityParam_\txIndex
:=
\sqrt{\frac{\channelDispersion{\txIndex}}
           {\codebookBlocklength}
}
\normalcdfComplementInverse(\typicalityParam)
+
\secondOrderParamD
\frac{\log \codebookBlocklength}{\codebookBlocklength}.
\end{align}
The definitions~(\ref{def:soft-covering-atypical-term-one}), (\ref{def:soft-covering-atypical-term-two}) and (\ref{def:soft-covering-typical-term}) change accordingly.

In order to bound $\totvarAtypicalOne$, we use Theorem~\ref{theorem:berry-esseen} to obtain
\begin{align*}
&\phantom{{}={}}
\Probability_{\channelInOne^\codebookBlocklength, \channelInTwo^\codebookBlocklength, \channelOut^\codebookBlocklength}((\channelInOne^\codebookBlocklength, \channelInTwo^\codebookBlocklength, \channelOut^\codebookBlocklength) \notin \typicalSetIndex{\typicalityParam_1}{\codebookBlocklength}{1})
\\
&
=
\Probability_{\channelInOne^\codebookBlocklength, \channelInTwo^\codebookBlocklength, \channelOut^\codebookBlocklength}\left(
  \frac{1}{\codebookBlocklength}
  \sum\limits_{\blockIndex=1}^\codebookBlocklength
  \left(
    \informationDensityConditional{\channelInOne_\blockIndex}{\channelOut_\blockIndex}{\channelInTwo_\blockIndex}
    -
    \mutualInformationConditional{\channelInOne}{\channelOut}{\channelInTwo}
  \right)
  >
  \typicalityParam_1
\right)
\\
&\leq
\normalcdfComplement\left(
  \typicalityParam_1
  \sqrt{\frac{\codebookBlocklength}
             {\channelDispersion{1}}
  }
\right)
+
\frac{\channelThirdMoment{1}}
     {\channelDispersion{1}^{\frac{3}{2}} \sqrt{\codebookBlocklength}}
=
\secondOrderAtypicalProbability{1}.
\end{align*}
An application of Lemma~\ref{lemma:soft-covering-two-transmitters-atypical} with $\lemmaconst = 1/\sqrt{\codebookBlocklength}$ yields
\begin{multline}
\label{proof:soft-covering-two-transmitters-second-order-atypical-bound-1}
\Probability_{\codebookOne, \codebookTwo} \left(
  \totvarAtypicalOne
  >
  \secondOrderAtypicalProbability{1}\left(
    1+\frac{1}{\sqrt{\codebookBlocklength}}
  \right)
\right)
\\
\leq
\exp\left(
  -\frac{2\secondOrderAtypicalProbability{1}^2}
        {\codebookBlocklength}
  \exp(\codebookBlocklength \min(\codebookRateOne,\codebookRateTwo))
\right).
\end{multline}
Reasoning along similar lines shows
\begin{align*}
\Probability_{\channelInTwo^\codebookBlocklength, \channelOut^\codebookBlocklength}((\channelInTwo^\codebookBlocklength, \channelOut^\codebookBlocklength) \notin \typicalSetIndex{\typicalityParam_2}{\codebookBlocklength}{2})
\leq
\secondOrderAtypicalProbability{2}
\end{align*}
so that a further application of Lemma~\ref{lemma:soft-covering-two-transmitters-atypical} yields
\begin{multline}
\label{proof:soft-covering-two-transmitters-second-order-atypical-bound-2}
\Probability_{\codebookOne, \codebookTwo}\left(
  \totvarAtypicalTwo
  >
  \secondOrderAtypicalProbability{2}\left(
    1+\frac{1}{\sqrt{\codebookBlocklength}}
  \right)
\right)
\\
\leq
\exp\left(
  -\frac{2\secondOrderAtypicalProbability{2}^2}
        {\codebookBlocklength}
  \exp(\codebookBlocklength \min(\codebookRateOne,\codebookRateTwo))
\right).
\end{multline}
For the typical term, we use the definitions~(\ref{def:soft-covering-typical-term-one}) and~(\ref{def:soft-covering-good-codebooks}) with the typical set $\typicalSetIndex{\typicalityParam_1}{\codebookBlocklength}{1}$, and observe that for any fixed $\channelInTwoAlphElement^\codebookBlocklength$ and $\channelOutAlphElement^\codebookBlocklength$, we can apply Lemma~\ref{lemma:soft-covering-two-transmitters-typical} with $\generalrvOne=\channelInOne$, $\generalrvTwo=\channelInTwo$, $\generalrvThree=\channelOut$ and $\lemmaconst=1/\sqrt{\codebookBlocklength}$ to obtain
\begin{multline}
\label{proof:soft-covering-two-transmitters-second-order-typical-bound}
\Probability_{\codebookOne}\left(
  \totvarTypicalOne{\channelInTwoAlphElement^\codebookBlocklength}{\channelOutAlphElement^\codebookBlocklength}
  >
  1 + \frac{1}{\sqrt{\codebookBlocklength}})
\right) \\
\leq
\exp\left(
  -\frac{1}{3\codebookBlocklength} \exp(-\codebookBlocklength (\mutualInformationConditional{\channelInOne}{\channelOut}{\channelInTwo} + \typicalityParam_1 - \codebookRateOne))
\right).
\end{multline}

Now proceeding in a similar manner as in~(\ref{proof:soft-covering-two-transmitters-totalprob1}) to~(\ref{proof:soft-covering-two-transmitters-lemmaapplication2}) shows
\begin{multline*}
\Probability_{\codebookOne, \codebookTwo}\left(
  \totvarTypical{\channelOutAlphElement^\codebookBlocklength}
  >
  1 + \frac{3}{\sqrt{\codebookBlocklength}}
  ~|~
  \codebookOne \in \codebookSet_{\channelOutAlphElement^\codebookBlocklength}
\right)
\\
\leq
\exp\left(
  -\frac{1}{3\codebookBlocklength} \exp(-\codebookBlocklength (\mutualInformation{\channelInTwo}{\channelOut} + \typicalityParam_2 - \codebookRateTwo))
\right),
\end{multline*}
where there is no assumption on $\codebookBlocklength$ because $1/\sqrt{\codebookBlocklength} \leq 1$ for all $\codebookBlocklength \geq 1$.

The theorem then follows from (\ref{proof:soft-covering-two-transmitters-second-order-union-bound-start}) to (\ref{proof:soft-covering-two-transmitters-second-order-union-bound-end}) in the appendix.
\end{proof}

\section{Converse Theorem}
\label{sec:converse}

In this section, we establish the following theorem which implies the converse part of Theorem~\ref{theorem:resolvability-region}.
\begin{theorem}
\label{converse-theorem}
Let $\channel = (\channelInOneAlph, \channelInTwoAlph, \channelOutAlph, \channelpmf_{\channelOut | \channelInOne, \channelInTwo})$ be a channel, $\channelpmf_\channelInOne, \channelpmf_\channelInTwo$ input distributions (with induced output distribution $\channelpmf_\channelOut$) and $(\codebook_{1,\codebookBlocklength})_{\codebookBlocklength \geq 1}, (\codebook_{2,\codebookBlocklength})_{\codebookBlocklength \geq 1}$ sequences of codebooks of block lengths $\codebookBlocklength$ and rates $\codebookRateOne$ and $\codebookRateTwo$, respectively, such that
\[
\lim\limits_{\codebookBlocklength \rightarrow \infty}
\totalvariation{\codebookpmf_{\channelOut^\codebookBlocklength | \codebook_{1,\codebookBlocklength}, \codebook_{2,\codebookBlocklength}} - \channelpmf_{\channelOut^\codebookBlocklength}}
=
0.
\]
Then there are random variables $\channelInOne$ on $\channelInOneAlph$, $\channelInTwo$ on $\channelInTwoAlph$, $\channelOut$ on $\channelOutAlph$ and $\timeSharingRV$ on alphabet $\timeSharingAlph=\{1, \dots, \cardinality{\channelOutAlph} + 3\}$ with a joint probability mass function $\codebookpmf_\timeSharingRV \codebookpmf_{\channelInOne | \timeSharingRV} \codebookpmf_{\channelInTwo | \timeSharingRV} \channelpmf_{\channelOut | \channelInOne, \channelInTwo}$ such that the marginal distribution of $\channelOut$ is $\channelpmf_\channelOut$ and
\begin{align}
\label{converse-theorem-sumrate}
\mutualInformationConditional{\channelInOne, \channelInTwo}{\channelOut}{\timeSharingRV}
&\leq
\codebookRateOne + \codebookRateTwo
\\
\label{converse-theorem-rateone}
\mutualInformationConditional{\channelInOne}{\channelOut}{\timeSharingRV}
&\leq
\codebookRateOne
\\
\label{converse-theorem-ratetwo}
\mutualInformationConditional{\channelInTwo}{\channelOut}{\timeSharingRV}
&\leq
\codebookRateTwo.
\end{align}
\end{theorem}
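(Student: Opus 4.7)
The plan is to single-letterize in the spirit of the classical MAC converse, with the true output law replaced (up to a vanishing error) by the ideal $\channelpmf_\channelOut^{\otimes \codebookBlocklength}$ via Lemma~\ref{totvar-entropy-lemma}, and with the induced time-sharing alphabet trimmed at the end via Lemma~\ref{convex-cover-lemma}. I begin by drawing independent uniform messages $\messageRV_1, \messageRV_2$ and setting $\channelInOne^\codebookBlocklength := \codebookOneWord{\messageRV_1}$, $\channelInTwo^\codebookBlocklength := \codebookTwoWord{\messageRV_2}$; since $\messageRV_1 \perp \messageRV_2$, one has $\channelInOne^\codebookBlocklength \perp \channelInTwo^\codebookBlocklength$, $\entropy{\channelInOne^\codebookBlocklength} \leq \codebookBlocklength\codebookRateOne$, $\entropy{\channelInTwo^\codebookBlocklength} \leq \codebookBlocklength\codebookRateTwo$, and $\entropy{\channelInOne^\codebookBlocklength, \channelInTwo^\codebookBlocklength} \leq \codebookBlocklength(\codebookRateOne + \codebookRateTwo)$. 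Combining this with data-processing, the memoryless identity $\entropyConditional{\channelOut^\codebookBlocklength}{\channelInOne^\codebookBlocklength, \channelInTwo^\codebookBlocklength} = \sum_{k=1}^\codebookBlocklength \entropyConditional{\channelOut_k}{\channelInOne_k, \channelInTwo_k}$, and the fact that conditioning reduces entropy (so that $\entropyConditional{\channelOut^\codebookBlocklength}{\channelInOne^\codebookBlocklength} \leq \sum_k \entropyConditional{\channelOut_k}{\channelInOne_k}$), I obtain
\begin{align*}
\codebookBlocklength\codebookRateOne
&\geq \entropy{\channelOut^\codebookBlocklength} - \sum_{k=1}^\codebookBlocklength \entropyConditional{\channelOut_k}{\channelInOne_k},
\\
\codebookBlocklength\codebookRateTwo
&\geq \entropy{\channelOut^\codebookBlocklength} - \sum_{k=1}^\codebookBlocklength \entropyConditional{\channelOut_k}{\channelInTwo_k},
\\
\codebookBlocklength(\codebookRateOne+\codebookRateTwo)
&\geq \entropy{\channelOut^\codebookBlocklength} - \sum_{k=1}^\codebookBlocklength \entropyConditional{\channelOut_k}{\channelInOne_k, \channelInTwo_k},
\end{align*}
all entropies being computed under $\codebookpmf_{\channelOut^\codebookBlocklength | \codebookOne, \codebookTwo}$.

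Next I would convert the entropies of $\channelOut^\codebookBlocklength$ and its marginals to those under $\channelpmf_\channelOut$. Writing $\lemmaconst_\codebookBlocklength := \totalvariation{\codebookpmf_{\channelOut^\codebookBlocklength | \codebookOne, \codebookTwo} - \channelpmf_{\channelOut^\codebookBlocklength}}$ (so $\lemmaconst_\codebookBlocklength \to 0$) and noting that marginalization does not increase total variation, Lemma~\ref{totvar-entropy-lemma} applied on $\channelOutAlph^\codebookBlocklength$ and separately on $\channelOutAlph$ gives $\tfrac{1}{\codebookBlocklength}\entropy{\channelOut^\codebookBlocklength} \to \entropy{\channelOut}$ and $\tfrac{1}{\codebookBlocklength}\sum_k \entropy{\channelOut_k} \to \entropy{\channelOut}$ (where $\channelOut \sim \channelpmf_\channelOut$), because the $\codebookBlocklength\log\cardinality{\channelOutAlph}$ term in the continuity bound is exactly tamed by the $\tfrac{1}{\codebookBlocklength}$ prefactor. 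I then introduce a time-sharing variable $K_\codebookBlocklength \sim \mathrm{Unif}\{1,\dots,\codebookBlocklength\}$ independent of everything, and set $\timeSharingRV_\codebookBlocklength := K_\codebookBlocklength$, $\tilde{\channelInOne}_\codebookBlocklength := \channelInOne_{K_\codebookBlocklength}$, $\tilde{\channelInTwo}_\codebookBlocklength := \channelInTwo_{K_\codebookBlocklength}$, $\tilde{\channelOut}_\codebookBlocklength := \channelOut_{K_\codebookBlocklength}$; the averaged conditional entropies become $\timeSharingRV_\codebookBlocklength$-conditional entropies of the time-shared variables, so the three bounds above turn into
\[
\codebookRateOne \geq \mutualInformationConditional{\tilde{\channelInOne}_\codebookBlocklength}{\tilde{\channelOut}_\codebookBlocklength}{\timeSharingRV_\codebookBlocklength} - \delta_\codebookBlocklength,\quad
\codebookRateTwo \geq \mutualInformationConditional{\tilde{\channelInTwo}_\codebookBlocklength}{\tilde{\channelOut}_\codebookBlocklength}{\timeSharingRV_\codebookBlocklength} - \delta_\codebookBlocklength,
\]
\[
\codebookRateOne+\codebookRateTwo \geq \mutualInformationConditional{\tilde{\channelInOne}_\codebookBlocklength, \tilde{\channelInTwo}_\codebookBlocklength}{\tilde{\channelOut}_\codebookBlocklength}{\timeSharingRV_\codebookBlocklength} - \delta_\codebookBlocklength,
\]
with $\delta_\codebookBlocklength \to 0$; by construction $\tilde{\channelInOne}_\codebookBlocklength \perp \tilde{\channelInTwo}_\codebookBlocklength \mid \timeSharingRV_\codebookBlocklength$ and $\tilde{\channelOut}_\codebookBlocklength \mid (\tilde{\channelInOne}_\codebookBlocklength, \tilde{\channelInTwo}_\codebookBlocklength) \sim \channelpmf_{\channelOut | \channelInOne, \channelInTwo}$.

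Finally I would shrink the time-sharing alphabet. For each fixed $\codebookBlocklength$, I apply Lemma~\ref{convex-cover-lemma} with $\generalpmfset$ the compact connected set of product distributions $\codebookpmf_\channelInOne \otimes \codebookpmf_\channelInTwo$ on $\channelInOneAlph \times \channelInTwoAlph$, the family parameterized by $v \in \{1,\dots,\codebookBlocklength\}$ via $\codebookpmf_{\channelInOne | \timeSharingRV_\codebookBlocklength = v} \otimes \codebookpmf_{\channelInTwo | \timeSharingRV_\codebookBlocklength = v}$, and the $\cardinality{\channelOutAlph} + 3$ continuous functionals consisting of the $\cardinality{\channelOutAlph}$ coordinates of the induced $\channelOut$-marginal (obtained by passing through $\channelpmf_{\channelOut | \channelInOne, \channelInTwo}$) together with the three mutual informations $\mutualInformation{\channelInOne}{\channelOut}$, $\mutualInformation{\channelInTwo}{\channelOut}$, $\mutualInformation{\channelInOne, \channelInTwo}{\channelOut}$ evaluated at that fixed $v$. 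The lemma delivers a replacement $\timeSharingRV_\codebookBlocklength'$ on an alphabet of size at most $\cardinality{\channelOutAlph} + 3$ preserving both the marginal of $\tilde{\channelOut}_\codebookBlocklength$ and the three conditional mutual informations. The resulting quadruples now inhabit a fixed compact simplex of joint distributions on $\timeSharingAlph \times \channelInOneAlph \times \channelInTwoAlph \times \channelOutAlph$, so I extract a convergent subsequence and define $(\timeSharingRV, \channelInOne, \channelInTwo, \channelOut)$ as its limit: the factorization $\codebookpmf_\timeSharingRV \codebookpmf_{\channelInOne | \timeSharingRV} \codebookpmf_{\channelInTwo | \timeSharingRV} \channelpmf_{\channelOut | \channelInOne, \channelInTwo}$ is preserved, the $\channelOut$-marginal equals $\channelpmf_\channelOut$ (because $\tfrac{1}{\codebookBlocklength}\sum_k \codebookpmf_{\channelOut_k}$ is within $\lemmaconst_\codebookBlocklength$ of $\channelpmf_\channelOut$ in total variation), and (\ref{converse-theorem-sumrate})--(\ref{converse-theorem-ratetwo}) follow by continuity of mutual information in the joint distribution.

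The main obstacle I anticipate is the simultaneous handling of cardinality reduction and the limit: since the natural time-sharing alphabet $\{1,\dots,\codebookBlocklength\}$ grows with $\codebookBlocklength$, compactness only becomes available after Lemma~\ref{convex-cover-lemma} is applied at each finite $\codebookBlocklength$, with the continuous functionals chosen so that the $\channelpmf_\channelOut$-marginal constraint and all three rate bounds survive the limit simultaneously. Once this is set up correctly, the remainder is a routine combination of data-processing inequalities and the entropy-continuity estimate from Lemma~\ref{totvar-entropy-lemma}.
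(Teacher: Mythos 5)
Your proposal is correct and follows essentially the same route as the paper: single-letterization with a uniform time-sharing variable on $\{1,\dots,\codebookBlocklength\}$, the entropy-continuity bound of Lemma~\ref{totvar-entropy-lemma} to control $\sum_\blockIndex \entropy{\channelOut_\blockIndex} - \entropy{\channelOut^\codebookBlocklength}$ against the i.i.d.\ target, cardinality reduction via Lemma~\ref{convex-cover-lemma} with the $\cardinality{\channelOutAlph}$ output-marginal coordinates plus the three mutual informations, and a compactness/subsequence argument for the limit. The only (immaterial) difference is that you obtain the individual-rate bounds directly from $\entropyConditional{\channelOut^\codebookBlocklength}{\channelInOne^\codebookBlocklength} \leq \sum_\blockIndex \entropyConditional{\channelOut_\blockIndex}{\channelInOne_\blockIndex}$, whereas the paper subtracts $\mutualInformationConditional{\tilde{\channelInTwo}^\codebookBlocklength}{\tilde{\channelOut}^\codebookBlocklength}{\tilde{\channelInOne}^\codebookBlocklength}$ from the sum-rate identity — these are the same inequality in disguise.
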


We first prove the following lemma, which states conclusions we can draw from the existence of a codebook pair for a fixed block length $\codebookBlocklength$. Theorem~\ref{converse-theorem} then follows by considering the limit $\codebookBlocklength \rightarrow \infty$, as we detail at the end of this section.

\begin{lemma}
\label{converse-lemma}
Let $\channel = (\channelInOneAlph, \channelInTwoAlph, \channelOutAlph, \channelpmf_{\channelOut | \channelInOne, \channelInTwo})$ be a channel, $\channelpmf_\channelInOne, \channelpmf_\channelInTwo$ input distributions (with induced output distribution $\channelpmf_\channelOut$) and $\codebookOne, \codebookTwo$ a pair of codebooks of rates $\codebookRateOne, \codebookRateTwo$ and block length $\codebookBlocklength$ such that
\begin{align} 
\label{converse-lemma-tvassumption}
\totalvariation{\codebookpmf_{\channelOut^\codebookBlocklength | \codebookOne, \codebookTwo} - \channelpmf_{\channelOut^\codebookBlocklength}} \leq \lemmaconst \leq \frac{1}{4}.
\end{align}
Then there are random variables $\hat{\channelInOne}$ on $\channelInOneAlph$, $\hat{\channelInTwo}$ on $\channelInTwoAlph$, $\hat{\channelOut}$ on $\channelOutAlph$ and $\hat{\timeSharingRV}$ on $\timeSharingAlph :=  \{1,\dots, \codebookBlocklength\}$ with a joint probability mass function $\codebookpmf_{\hat{\timeSharingRV}} \codebookpmf_{\hat{\channelInOne} | \hat{\timeSharingRV}} \codebookpmf_{\hat{\channelInTwo} | \hat{\timeSharingRV}} \channelpmf_{\hat{\channelOut} | \hat{\channelInOne}, \hat{\channelInTwo}}$ such that $\channelpmf_{\hat{\channelOut} | \hat{\channelInOne}, \hat{\channelInTwo}} = \channelpmf_{\channelOut | \channelInOne, \channelInTwo}$ and
\begin{align}
\label{converse-lemma-convergence}
\totalvariationlr{\codebookpmf_{\hat{\channelOut}} - \channelpmf_{\channelOut}}
&\leq
\lemmaconst
\\
\label{converse-lemma-sumrate}
\mutualInformationConditional{\hat{\channelInOne}, \hat{\channelInTwo}}{\hat{\channelOut}}{\hat{\timeSharingRV}}
+
\lemmaconst
\log
\frac{\lemmaconst}{2\cardinality{\channelOutAlph}}
&\leq
\codebookRateOne + \codebookRateTwo
\\
\label{converse-lemma-rateone}
\mutualInformationConditional{\hat{\channelInOne}}{\hat{\channelOut}}{\hat{\timeSharingRV}}
+
\lemmaconst
\log
\frac{\lemmaconst}{2\cardinality{\channelOutAlph}}
&\leq
\codebookRateOne
\\
\label{converse-lemma-ratetwo}
\mutualInformationConditional{\hat{\channelInTwo}}{\hat{\channelOut}}{\hat{\timeSharingRV}}
+
\lemmaconst
\log
\frac{\lemmaconst}{2\cardinality{\channelOutAlph}}
&\leq
\codebookRateTwo.
\end{align}
\end{lemma}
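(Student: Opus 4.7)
The plan is to construct the random variables by time sharing: let $\hat{\timeSharingRV}$ be uniformly distributed on $\{1,\dots,\codebookBlocklength\}$, and conditional on $\hat{\timeSharingRV}=k$ set $(\hat{\channelInOne},\hat{\channelInTwo},\hat{\channelOut})=(\channelInOne_k,\channelInTwo_k,\channelOut_k)$, where the joint law on the right is the one induced by drawing independent uniform messages $M_1,M_2$ over $\codebookOne,\codebookTwo$ and passing the resulting codewords through the memoryless channel. The required factorization $\codebookpmf_{\hat{\timeSharingRV}}\codebookpmf_{\hat{\channelInOne}|\hat{\timeSharingRV}}\codebookpmf_{\hat{\channelInTwo}|\hat{\timeSharingRV}}\channelpmf_{\hat{\channelOut}|\hat{\channelInOne},\hat{\channelInTwo}}$ is then automatic: the conditional independence $\hat{\channelInOne}\perp\hat{\channelInTwo}\mid\hat{\timeSharingRV}$ is inherited from $M_1\perp M_2$, while memorylessness gives $\channelpmf_{\hat{\channelOut}|\hat{\channelInOne},\hat{\channelInTwo}}=\channelpmf_{\channelOut|\channelInOne,\channelInTwo}$. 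The variational bound~(\ref{converse-lemma-convergence}) follows by writing $\codebookpmf_{\hat{\channelOut}}=\frac{1}{\codebookBlocklength}\sum_k\codebookpmf_{\channelOut_k}$, noting that marginalization does not increase variational distance so that $\totalvariation{\codebookpmf_{\channelOut_k}-\channelpmf_\channelOut}\leq\lemmaconst$ for every $k$, and invoking the triangle inequality.

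For the rate bounds I plan a Fano-style chain. Uniformity and independence of the messages give $H(M_1)=\codebookBlocklength\codebookRateOne$ and $H(M_2)=\codebookBlocklength\codebookRateTwo$. The basic inequality $H(M)\geq I(M;\channelOut^\codebookBlocklength)$ applied with $M=(M_1,M_2)$, together with the memoryless-channel identity $H(\channelOut^\codebookBlocklength|M_1,M_2)=\sum_k H(\channelOut_k|\channelInOne_k,\channelInTwo_k)$, yields
\[
\codebookBlocklength(\codebookRateOne+\codebookRateTwo)\;\geq\;H(\channelOut^\codebookBlocklength)-\sum_{k=1}^{\codebookBlocklength}H(\channelOut_k|\channelInOne_k,\channelInTwo_k).
\]
For the single-user versions the key identity is $H(\channelOut_k|M_1)=H(\channelOut_k|\channelInOne_k)$, which holds because $M_2\perp M_1$ forces the conditional law of $\channelOut_k$ given $M_1$ to depend on $M_1$ only through $\channelInOne_k$; the chain rule together with dropping the conditioning on $\channelOut^{k-1}$ then gives $H(\channelOut^\codebookBlocklength|M_1)\leq\sum_k H(\channelOut_k|\channelInOne_k)$, and symmetrically for $M_2$. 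Evaluating the target conditional mutual informations through $\hat{\timeSharingRV}$ produces $\codebookBlocklength\,\mutualInformationConditional{\hat{\channelInOne},\hat{\channelInTwo}}{\hat{\channelOut}}{\hat{\timeSharingRV}}=\sum_k H(\channelOut_k)-\sum_k H(\channelOut_k|\channelInOne_k,\channelInTwo_k)$ and its marginal analogues, so after subtraction all three inequalities (\ref{converse-lemma-sumrate})--(\ref{converse-lemma-ratetwo}) reduce to the single claim
\[
\sum_{k=1}^{\codebookBlocklength}H(\channelOut_k)-H(\channelOut^\codebookBlocklength)\;\leq\;-\codebookBlocklength\lemmaconst\log\frac{\lemmaconst}{2\cardinality{\channelOutAlph}}.
\]

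This last inequality is the step I expect to require the most care, and it is where Lemma~\ref{totvar-entropy-lemma} earns its keep. The plan is to apply that lemma twice: first per coordinate on $\channelOutAlph$, upper-bounding each $H(\channelOut_k)$ in terms of the entropy of $\channelpmf_\channelOut$ with an additive error of $-\tfrac{\lemmaconst}{2}\log(\lemmaconst/(2\cardinality{\channelOutAlph}))$, and then on the full product alphabet $\channelOutAlph^\codebookBlocklength$, lower-bounding $H(\channelOut^\codebookBlocklength)$ in terms of $\codebookBlocklength$ times the same entropy with an additive error of $+\tfrac{\lemmaconst}{2}\log(\lemmaconst/(2\cardinality{\channelOutAlph}^\codebookBlocklength))$. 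Both applications are legal because~(\ref{converse-lemma-tvassumption}) and the marginalization argument already used above guarantee all relevant variational distances $\leq\lemmaconst\leq 1/4$. Summing the per-coordinate upper bound over $k$ and subtracting the joint lower bound cancels the two contributions proportional to $\codebookBlocklength$ times the entropy of $\channelpmf_\channelOut$, and a short algebraic check using $\codebookBlocklength\geq 1$ brings the remaining terms into the claimed right-hand side. Substituting back into the three rate chains completes the proof.
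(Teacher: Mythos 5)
Your proposal is correct and follows essentially the same route as the paper: the same time-sharing construction of $(\hat{\timeSharingRV},\hat{\channelInOne},\hat{\channelInTwo},\hat{\channelOut})$, the same marginalization/triangle-inequality argument for~(\ref{converse-lemma-convergence}), the same reduction of all three rate bounds to $\sum_{\blockIndex}\entropy{\tilde{\channelOut}_\blockIndex}-\entropy{\tilde{\channelOut}^\codebookBlocklength}\leq-\codebookBlocklength\lemmaconst\log\frac{\lemmaconst}{2\cardinality{\channelOutAlph}}$, and the same double application of Lemma~\ref{totvar-entropy-lemma} (per letter and on $\channelOutAlph^\codebookBlocklength$) exploiting that the ideal output is i.i.d. The only cosmetic difference is that you obtain the per-user bounds directly from $\entropyConditional{\channelOut^\codebookBlocklength}{\messageRV_1}\leq\sum_\blockIndex\entropyConditional{\channelOut_\blockIndex}{\channelInOne_\blockIndex}$, whereas the paper subtracts the nonnegative cross term $\codebookBlocklength\mutualInformationConditional{\hat{\channelInTwo}}{\hat{\channelOut}}{\hat{\channelInOne},\hat{\timeSharingRV}}-\mutualInformationConditional{\tilde{\channelInTwo}^\codebookBlocklength}{\tilde{\channelOut}^\codebookBlocklength}{\tilde{\channelInOne}^\codebookBlocklength}$ from the sum-rate bound — the underlying inequalities are identical.
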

\begin{proof}[Proof of Lemma~\ref{converse-lemma}]
The codebooks induce random variables $\tilde{\channelInOne}^\codebookBlocklength$, $\tilde{\channelInTwo}^\codebookBlocklength$ and $\tilde{\channelOut}^\codebookBlocklength$ with joint probability mass function
\begin{multline*}
\codebookpmf_{\tilde{\channelInOne}^\codebookBlocklength, \tilde{\channelInTwo}^\codebookBlocklength, \tilde{\channelOut}^\codebookBlocklength}(\channelInOneAlphElement^\codebookBlocklength, \channelInTwoAlphElement^\codebookBlocklength, \channelOutAlphElement^\codebookBlocklength)
=
\codebookpmf_{\channelInOne^\codebookBlocklength, \channelInTwo^\codebookBlocklength, \channelOut^\codebookBlocklength | \codebookOne, \codebookTwo}(\channelInOneAlphElement^\codebookBlocklength, \channelInTwoAlphElement^\codebookBlocklength, \channelOutAlphElement^\codebookBlocklength)
\\
=
\codebookpmf_{\channelInOne^\codebookBlocklength | \codebookOne}(\channelInOneAlphElement^\codebookBlocklength)
\codebookpmf_{\channelInTwo^\codebookBlocklength | \codebookTwo}(\channelInTwoAlphElement^\codebookBlocklength)
\channelpmf_{\channelOut^\codebookBlocklength | \channelInOne^\codebookBlocklength, \channelInTwo^\codebookBlocklength}(\channelOutAlphElement^\codebookBlocklength | \channelInOneAlphElement^\codebookBlocklength, \channelInTwoAlphElement^\codebookBlocklength),
\end{multline*}
where
\begin{align*}
\codebookpmf_{\channelInOne^\codebookBlocklength | \codebookOne}(\channelInOneAlphElement^\codebookBlocklength)
&=
\sum\limits_{\codewordIndex=1}^{\exp(\codebookBlocklength\codebookRateOne)}
  \exp(-\codebookBlocklength\codebookRateOne)
  \indicator{\codebookOneWord{\codewordIndex} = \channelInOneAlphElement^\codebookBlocklength}
\\
\codebookpmf_{\channelInTwo^\codebookBlocklength | \codebookTwo}(\channelInTwoAlphElement^\codebookBlocklength)
&=
\sum\limits_{\codewordIndex=1}^{\exp(\codebookBlocklength\codebookRateTwo)}
  \exp(-\codebookBlocklength\codebookRateTwo)
  \indicator{\codebookTwoWord{\codewordIndex} = \channelInTwoAlphElement^\codebookBlocklength}.
\end{align*}
Note that for notational convenience, we will throughout this proof use the definition of $\codebookpmf_{\tilde{\channelInOne}^\codebookBlocklength, \tilde{\channelInTwo}^\codebookBlocklength, \tilde{\channelOut}^\codebookBlocklength}$ and the definitions implied through the marginals of this probability mass function even if the same probability mass functions have already been introduced using a different notation before; e.g. we will write $\codebookpmf_{\tilde{\channelOut}^\codebookBlocklength}$ instead of $\codebookpmf_{\channelOut^\codebookBlocklength | \codebookOne, \codebookTwo}$.

Additionally, we define a random variable $\hat{\timeSharingRV}$ that is uniformly distributed on $\{1,\dots,\codebookBlocklength\}$ and independent of the other random variables. We furthermore define $\hat{\channelInOne} := \tilde{\channelInOne}_\timeSharingRV$, $\hat{\channelInTwo} := \tilde{\channelInTwo}_\timeSharingRV$ and $\hat{\channelOut} := \tilde{\channelOut}_\timeSharingRV$, noting that the conditional distribution of $\hat{\channelOut}$ given $\hat{\channelInOne}$ and $\hat{\channelInTwo}$ is $\channelpmf_{\channelOut | \channelInOne, \channelInTwo}$.

We observe that for any $\blockIndex \in \{1, \dots, \codebookBlocklength\}$, an application of the triangle inequality yields
\begin{align}
\totalvariation{\codebookpmf_{\tilde{\channelOut}_\blockIndex} - \channelpmf_{\channelOut}}
&=
\frac{1}{2}
\sum_{\channelOutAlphElement \in \channelOutAlph}
  \absolute{
    \codebookpmf_{\tilde{\channelOut}_\blockIndex}(\channelOutAlphElement)
    -
    \channelpmf_{\channelOut}(\channelOutAlphElement)
  }
\nonumber
\\
&=
\frac{1}{2}
\sum_{\channelOutAlphElement \in \channelOutAlph}
  \absolute{
    \sum_{\substack{\channelOutAlphElement^\codebookBlocklength \in \channelOutAlph^\codebookBlocklength \\
                    \channelOutAlphElement_\blockIndex = \channelOutAlphElement}}
    (
      \codebookpmf_{\tilde{\channelOut}^\codebookBlocklength}(\channelOutAlphElement^\codebookBlocklength)
      -
      \channelpmf_{\channelOut^\codebookBlocklength}(\channelOutAlphElement^\codebookBlocklength)
    )
  }
\nonumber
\\
&\leq
\frac{1}{2}
\sum_{\channelOutAlphElement^\codebookBlocklength \in \channelOutAlph^\codebookBlocklength}
  \absolute{
      \codebookpmf_{\tilde{\channelOut}^\codebookBlocklength}(\channelOutAlphElement^\codebookBlocklength)
      -
      \channelpmf_{\channelOut^\codebookBlocklength}(\channelOutAlphElement^\codebookBlocklength)
  }
\nonumber
\\
&=
\totalvariation{\codebookpmf_{\tilde{\channelOut}^\codebookBlocklength} - \channelpmf_{\channelOut^\codebookBlocklength}}
\leq
\lemmaconst.
\label{converse-lemma-totvar-single}
\end{align}
Now, we can bound
\begin{align}
\label{converse-lemma-tvbound-def}
\totalvariationlr{\codebookpmf_{\hat{\channelOut}} - \channelpmf_{\channelOut}}
&=
\totalvariationlr{
  \frac{1}{\codebookBlocklength}
  \sum\limits_{\blockIndex=1}^\codebookBlocklength
  \codebookpmf_{\tilde{\channelOut}_\blockIndex}
  -
  \channelpmf_{\channelOut}
}
\\
\label{converse-lemma-tvbound-triangle}
&\leq
\frac{1}{\codebookBlocklength}
\sum\limits_{\blockIndex=1}^\codebookBlocklength
  \totalvariationlr{
    \codebookpmf_{\tilde{\channelOut}_\blockIndex}
    -
    \channelpmf_{\channelOut}
  }
\\
\label{converse-lemma-tvbound-observation}
&\leq
\lemmaconst,
\end{align}
where (\ref{converse-lemma-tvbound-def}) is by definition of $\hat{\timeSharingRV}$ and $\hat{\channelOut}$, (\ref{converse-lemma-tvbound-triangle}) follows by the triangle inequality and (\ref{converse-lemma-tvbound-observation}) by applying (\ref{converse-lemma-totvar-single}).

Next, we bound
\begin{align}
\label{converse-mutinf-bound}
\codebookBlocklength
\mutualInformationConditional{\hat{\channelInOne}, \hat{\channelInTwo}}{\hat{\channelOut}}{\hat{\timeSharingRV}}
-
\mutualInformation{\tilde{\channelInOne}^\codebookBlocklength, \tilde{\channelInTwo}^\codebookBlocklength}{\tilde{\channelOut}^\codebookBlocklength}
\leq
-
\codebookBlocklength
\lemmaconst
\log
\frac{\lemmaconst}{2\cardinality{\channelOutAlph}}
\end{align}
as shown in (\ref{converse-mutinf-bound-start}) through (\ref{converse-mutinf-bound-end}) in the appendix.

Taking into consideration that
\begin{align*}
\codebookBlocklength
(\codebookRateOne + \codebookRateTwo)
\geq
\entropy{\tilde{\channelInOne}^\codebookBlocklength, \tilde{\channelInTwo}^\codebookBlocklength}
\geq
\mutualInformation{\tilde{\channelInOne}^\codebookBlocklength, \tilde{\channelInTwo}^\codebookBlocklength}{\tilde{\channelOut}^\codebookBlocklength}
\end{align*}
and that $(\codebookBlocklength+1)/\codebookBlocklength \leq 2$, this proves~(\ref{converse-lemma-sumrate}).

Furthermore, we note that
\begin{multline}
\codebookBlocklength
\mutualInformationConditional{\hat{\channelInOne}}{\hat{\channelOut}}{\hat{\timeSharingRV}}
-
\mutualInformation{\tilde{\channelInOne}^\codebookBlocklength}{\tilde{\channelOut}^\codebookBlocklength}
\\
\begin{aligned}
&
\begin{aligned}
=
&
\big(
  \codebookBlocklength
  \mutualInformationConditional{\hat{\channelInOne}, \hat{\channelInTwo}}{\hat{\channelOut}}{\hat{\timeSharingRV}}
  -
  \mutualInformation{\tilde{\channelInOne}^\codebookBlocklength, \tilde{\channelInTwo}^\codebookBlocklength}{\tilde{\channelOut}^\codebookBlocklength}
\big)
\\
&-
\big(
  \codebookBlocklength
  \mutualInformationConditional{\hat{\channelInTwo}}{\hat{\channelOut}}{\hat{\channelInOne}, \hat{\timeSharingRV}}
  -
  \mutualInformationConditional{\tilde{\channelInTwo}^\codebookBlocklength}{\tilde{\channelOut}^\codebookBlocklength}{\tilde{\channelInOne}^\codebookBlocklength}
\big)
\end{aligned}
\\
&\leq
-
\codebookBlocklength
\lemmaconst
\log
\frac{\lemmaconst}{2\cardinality{\channelOutAlph}},
\end{aligned}
\label{converse-lemma-rateone-bound-prereq}
\end{multline}
where the inequality step follows by~(\ref{converse-mutinf-bound}) and the observation that
$
  \codebookBlocklength
  \mutualInformationConditional{\hat{\channelInTwo}}{\hat{\channelOut}}{\hat{\channelInOne}, \hat{\timeSharingRV}}
  -
  \mutualInformationConditional{\tilde{\channelInTwo}^\codebookBlocklength}{\tilde{\channelOut}^\codebookBlocklength}{\tilde{\channelInOne}^\codebookBlocklength}
$
is nonnegative, because
\begin{align}
&\hphantom{{}={}}
\mutualInformationConditional{\tilde{\channelInTwo}^\codebookBlocklength}{\tilde{\channelOut}^\codebookBlocklength}{\tilde{\channelInOne}^\codebookBlocklength}
\nonumber
\\
&=
\sum_{\blockIndex=1}^\codebookBlocklength
  \Big(
    \entropyConditional{\tilde{\channelOut}_\blockIndex}{\tilde{\channelInOne}^\codebookBlocklength, \tilde{\channelOut}^{\blockIndex-1}}
    -
    \entropyConditional{\tilde{\channelOut}_\blockIndex}{\tilde{\channelInOne}^\codebookBlocklength, \tilde{\channelInTwo}^\codebookBlocklength, \tilde{\channelOut}^{\blockIndex-1}}
  \Big)
\label{converse-lemma-rateone-ibound-chain-rule}
\\
&=
\sum_{\blockIndex=1}^\codebookBlocklength
  \Big(
    \entropyConditional{\tilde{\channelOut}_\blockIndex}{\tilde{\channelInOne}^\codebookBlocklength, \tilde{\channelOut}^{\blockIndex-1}}
    -
    \entropyConditional{\tilde{\channelOut}_\blockIndex}{\tilde{\channelInOne}_\blockIndex, \tilde{\channelInTwo}_\blockIndex}
  \Big)
\label{converse-lemma-rateone-ibound-conditional-independence}
\\
&\leq
\sum_{\blockIndex=1}^\codebookBlocklength
  \Big(
    \entropyConditional{\tilde{\channelOut}_\blockIndex}{\tilde{\channelInOne}_\blockIndex}
    -
    \entropyConditional{\tilde{\channelOut}_\blockIndex}{\tilde{\channelInOne}_\blockIndex, \tilde{\channelInTwo}_\blockIndex}
  \Big)
\label{converse-lemma-rateone-ibound-conditioning-entropy}
\\
&=
\codebookBlocklength
\mutualInformationConditional{\hat{\channelInTwo}}{\hat{\channelOut}}{\hat{\channelInOne}, \hat{\timeSharingRV}},
\nonumber
\end{align}
where (\ref{converse-lemma-rateone-ibound-chain-rule}) follows by the entropy chain rule, (\ref{converse-lemma-rateone-ibound-conditional-independence}) follows from the fact that $\tilde{\channelOut}_\blockIndex$ is conditionally independent of the other random variables given $\tilde{\channelInOne}_\blockIndex$ and $\tilde{\channelInTwo}_\blockIndex$ and (\ref{converse-lemma-rateone-ibound-conditioning-entropy}) follows because conditioning does not increase entropy.

Analogous reasoning exchanging the roles of $\channelInOne$ and $\channelInTwo$ yields
\begin{align}
\label{converse-lemma-ratetwo-bound-prereq}
\codebookBlocklength
\mutualInformationConditional{\hat{\channelInTwo}}{\hat{\channelOut}}{\hat{\timeSharingRV}}
-
\mutualInformation{\tilde{\channelInOne}^\codebookBlocklength}{\tilde{\channelOut}^\codebookBlocklength}
\leq
-
\codebookBlocklength
\lemmaconst
\log
\frac{\lemmaconst}{2\cardinality{\channelOutAlph}}.
\end{align}

In a similar way as we obtained~(\ref{converse-lemma-sumrate}) from~(\ref{converse-mutinf-bound}), we can derive the bounds (\ref{converse-lemma-rateone}) and (\ref{converse-lemma-ratetwo}) from (\ref{converse-lemma-rateone-bound-prereq}) and (\ref{converse-lemma-ratetwo-bound-prereq}), respectively.
\end{proof}

In order to ensure convergence in the limit $\codebookBlocklength \rightarrow \infty$, we need a slight refinement of Lemma~\ref{converse-lemma}.

\begin{cor}
\label{converse-cor}
Lemma~\ref{converse-lemma} holds with $\timeSharingAlph$ replaced by some (fixed) $\timeSharingAlph'$ with $\cardinality{\timeSharingAlph'} \leq \cardinality{\channelOutAlph} + 3$.
\end{cor}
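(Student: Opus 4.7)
The plan is to apply Lemma~\ref{convex-cover-lemma} to the random variable $\hat{\timeSharingRV}$ produced in the proof of Lemma~\ref{converse-lemma} so as to shrink its alphabet from $\{1,\dots,\codebookBlocklength\}$ to one of size at most $\cardinality{\channelOutAlph}+3$ while exactly preserving every quantity appearing in (\ref{converse-lemma-convergence})--(\ref{converse-lemma-ratetwo}).

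First I would set $\generalrvOneAlph := \channelInOneAlph \times \channelInTwoAlph$ and take $\generalpmfset$ to be the set of product probability mass functions of the form $p_1 \otimes p_2$ on $\generalrvOneAlph$. This $\generalpmfset$ is connected and compact because it is the continuous image of the connected compact set $\Delta(\channelInOneAlph) \times \Delta(\channelInTwoAlph)$ under the multiplication map $(p_1,p_2) \mapsto p_1 \otimes p_2$. For each $\timeSharingAlphElement \in \{1,\dots,\codebookBlocklength\}$ the distribution $p_{\timeSharingAlphElement} := \codebookpmf_{\hat{\channelInOne}\mid\hat{\timeSharingRV}}(\cdot\mid\timeSharingAlphElement)\otimes\codebookpmf_{\hat{\channelInTwo}\mid\hat{\timeSharingRV}}(\cdot\mid\timeSharingAlphElement)$ lies in $\generalpmfset$ by the conditional independence of $\hat{\channelInOne}$ and $\hat{\channelInTwo}$ given $\hat{\timeSharingRV}$ that is built into Lemma~\ref{converse-lemma}.

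Next I would specify the $\cardinality{\channelOutAlph}+3$ continuous functions whose expectations the lemma preserves. For each $\channelOutAlphElement \in \channelOutAlph$ define the linear (hence continuous) functional $f_{\channelOutAlphElement}(p) := \sum_{\channelInOneAlphElement,\channelInTwoAlphElement} p(\channelInOneAlphElement,\channelInTwoAlphElement)\,\channelpmf_{\channelOut\mid\channelInOne,\channelInTwo}(\channelOutAlphElement\mid\channelInOneAlphElement,\channelInTwoAlphElement)$; additionally let $g_1,g_2,g_3$ send $p$ to $\mutualInformation{\channelInOne,\channelInTwo}{\channelOut}$, $\mutualInformation{\channelInOne}{\channelOut}$ and $\mutualInformation{\channelInTwo}{\channelOut}$ respectively, each computed with respect to the joint distribution $p(\channelInOneAlphElement,\channelInTwoAlphElement)\channelpmf_{\channelOut\mid\channelInOne,\channelInTwo}(\channelOutAlphElement\mid\channelInOneAlphElement,\channelInTwoAlphElement)$. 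Each $g_i$ is continuous in $p$ because it is a difference of Shannon entropies of PMFs that depend continuously on $p$ (under the standard $0\log 0 := 0$ convention).

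Applying Lemma~\ref{convex-cover-lemma} with $\generalrvTwo := \hat{\timeSharingRV}$ and $\generalpmeasure := \codebookpmf_{\hat{\timeSharingRV}}$ then yields $\hat{\timeSharingRV}'$ on an alphabet $\timeSharingAlph'$ with $\cardinality{\timeSharingAlph'} \leq \cardinality{\channelOutAlph}+3$, together with product distributions $p'_{\timeSharingAlphElement'} \in \generalpmfset$. I would adopt the marginals of $p'_{\timeSharingAlphElement'}$ as the conditional distributions $\codebookpmf_{\hat{\channelInOne}'\mid\hat{\timeSharingRV}'}$ and $\codebookpmf_{\hat{\channelInTwo}'\mid\hat{\timeSharingRV}'}$ and let $\hat{\channelOut}'$ arise by passing $(\hat{\channelInOne}',\hat{\channelInTwo}')$ through $\channelpmf_{\channelOut\mid\channelInOne,\channelInTwo}$. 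Preservation of $\Expectation f_{\channelOutAlphElement}$ for every $\channelOutAlphElement$ forces $\codebookpmf_{\hat{\channelOut}'} = \codebookpmf_{\hat{\channelOut}}$, so (\ref{converse-lemma-convergence}) transfers verbatim, and preservation of $\Expectation g_1,\Expectation g_2,\Expectation g_3$ transfers (\ref{converse-lemma-sumrate})--(\ref{converse-lemma-ratetwo}). The only delicate items are the connectedness, compactness and continuity claims above, but none is an actual obstacle: $\generalpmfset$ is the continuous image of a product of simplices, and the relevant functionals are either linear in $p$ or a standard continuous composition of Shannon entropies.
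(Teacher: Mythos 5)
Your proposal is correct and follows essentially the same route as the paper: both apply Lemma~\ref{convex-cover-lemma} to the time-sharing variable $\hat{\timeSharingRV}$ with $\generalpmfset$ the set of product distributions on $\channelInOneAlph \times \channelInTwoAlph$, using the $\cardinality{\channelOutAlph}$ output-probability functionals plus the three conditional mutual information functionals. Your explicit verification of connectedness, compactness and continuity fills in details the paper only asserts, but the argument is the same.
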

\begin{proof}
We apply Lemma~\ref{convex-cover-lemma} with $\generalrvOneAlph := \channelInOneAlph \times \channelInTwoAlph$, $\generalrvTwoAlph := \timeSharingAlph$, letting $\generalpmfset$ be the set of all probability distributions on $ \channelInOneAlph \times \channelInTwoAlph$ corresponding to independent random variables $\channelInOne$ on $\channelInOneAlph$ and $\channelInTwo$ on $\channelInTwoAlph$. Define
\[
\generalpmf_\timeSharingAlphElement(\channelInOneAlphElement, \channelInTwoAlphElement) := \codebookpmf_{\hat{\channelInOne} | \hat{\timeSharingRV}}(\channelInOneAlphElement | \timeSharingAlphElement) \codebookpmf_{\hat{\channelInTwo} | \hat{\timeSharingRV}}(\channelInTwoAlphElement | \timeSharingAlphElement)
\]
and
\begin{align*}
\generalfunction_1(\generalpmf_\timeSharingAlphElement)
&:=
\mutualInformationConditional{\hat{\channelInOne}}{\hat{\channelOut}}{\hat{\timeSharingRV} = \timeSharingAlphElement}
\\
\generalfunction_2(\generalpmf_\timeSharingAlphElement)
&:=
\mutualInformationConditional{\hat{\channelInTwo}}{\hat{\channelOut}}{\hat{\timeSharingRV} = \timeSharingAlphElement}
\\
\generalfunction_3(\generalpmf_\timeSharingAlphElement)
&:=
\mutualInformationConditional{\hat{\channelInOne},\hat{\channelInTwo}}{\hat{\channelOut}}{\hat{\timeSharingRV} = \timeSharingAlphElement}
\\
\generalfunction_{4, \channelOutAlphElement}(\generalpmf_\timeSharingAlphElement)
&:=
\sum\limits_{\channelInOneAlphElement \in \channelInOneAlph}
\sum\limits_{\channelInTwoAlphElement \in \channelInTwoAlph}
  \generalpmf_\timeSharingAlphElement(\channelInOneAlphElement, \channelInTwoAlphElement)
  \channelpmf_{\channelOut | \channelInOne, \channelInTwo}(\channelOutAlphElement | \channelInOneAlphElement, \channelInTwoAlphElement),
\end{align*}
where $\generalfunction_{4,\channelOutAlphElement}$ defines a set of $\cardinality{\channelOutAlph}$ functions. Note that the mutual information terms depend only on $\generalpmf_\timeSharingAlphElement$ and are continuous. By Lemma~\ref{convex-cover-lemma}, we have a random variable $\timeSharingRV'$ distributed according to $\generalpmf$ on some alphabet of cardinality at most $\cardinality{\channelOutAlph} + 3$ and probability mass functions $(\generalpmf'_{\timeSharingAlphElement'})_{\timeSharingAlphElement' \in \timeSharingAlph'}$ corresponding to random variables independently distributed on $\channelInOneAlph$ and $\channelInTwoAlph$. These probability mass functions induce random variables $\channelInOne'$ and $\channelInTwo'$ that depend on $\timeSharingRV'$, but are conditionally independent given $\timeSharingRV'$. We further define a random variable $\channelOut'$ on $\channelOutAlph$ depending on $\channelInOne'$ and $\channelInTwo'$ through $\channelpmf_{\channelOut | \channelInOne, \channelInTwo}$. We have, by (\ref{convex-cover-lemma-function-property}),
\begin{align*}
\mutualInformationConditional{\channelInOne', \channelInTwo'}{\channelOut'}{\timeSharingRV'}
&=
\sum\limits_{\timeSharingAlphElement' \in \timeSharingAlph'}
\generalpmf(\timeSharingAlphElement')
\mutualInformationConditional{\channelInOne', \channelInTwo'}{\channelOut'}{\timeSharingRV'=\timeSharingAlphElement'}
\\
&=
\sum\limits_{\blockIndex=1}^\codebookBlocklength
  \frac{1}{\codebookBlocklength}
  \mutualInformationConditional{\hat{\channelInOne}, \hat{\channelInTwo}}{\hat{\channelOut}}{\hat{\timeSharingRV} = \blockIndex}
\\
&=
\mutualInformationConditional{\hat{\channelInOne}, \hat{\channelInTwo}}{\hat{\channelOut}}{\hat{\timeSharingRV}},
\end{align*}
and, by similar arguments, $\mutualInformationConditional{\channelInOne'}{\channelOut'}{\timeSharingRV'} = \mutualInformationConditional{\hat{\channelInOne}}{\hat{\channelOut}}{\hat{\timeSharingRV}}$ and $\mutualInformationConditional{\channelInTwo'}{\channelOut'}{\timeSharingRV'} = \mutualInformationConditional{\hat{\channelInTwo}}{\hat{\channelOut}}{\hat{\timeSharingRV}}$, so properties (\ref{converse-lemma-sumrate}), (\ref{converse-lemma-rateone}) and (\ref{converse-lemma-ratetwo}) are preserved for the new random variables. Furthermore, also by (\ref{convex-cover-lemma-function-property}), we have for each $\channelOutAlphElement \in \channelOutAlph$,
\begin{align*}
\codebookpmf_{\channelOut'}(\channelOutAlphElement)
&=
\sum\limits_{\timeSharingAlphElement' \in \timeSharingAlph'}
  \generalpmf(\timeSharingAlphElement')
  \sum\limits_{\channelInOneAlphElement \in \channelInOneAlph}
  \sum\limits_{\channelInTwoAlphElement \in \channelInTwoAlph}
    \generalpmf'_{\timeSharingAlphElement'}(\channelInOneAlphElement, \channelInTwoAlphElement)
    \channelpmf_{\channelOut | \channelInOne, \channelInTwo}(\channelOutAlphElement | \channelInOneAlphElement, \channelInTwoAlphElement)
\\
&=
\sum\limits_{\blockIndex=1}^\codebookBlocklength
    \frac{1}{n}
  \sum\limits_{\channelInOneAlphElement \in \channelInOneAlph}
  \sum\limits_{\channelInTwoAlphElement \in \channelInTwoAlph}
    \generalpmf_\timeSharingAlphElement(\channelInOneAlphElement, \channelInTwoAlphElement)
    \channelpmf_{\channelOut | \channelInOne, \channelInTwo}(\channelOutAlphElement | \channelInOneAlphElement, \channelInTwoAlphElement)
\\
&=
\codebookpmf_{\hat{\channelOut}}(\channelOutAlphElement),
\end{align*}
implying that (\ref{converse-lemma-convergence}) is preserved and completing the proof of the corollary.
\end{proof}

We are now ready to put everything together and prove our main converse result.
\begin{proof}[Proof of Theorem~\ref{converse-theorem}]
We write
\[
\lemmaconst_\codebookBlocklength
:=
\totalvariation{\codebookpmf_{\channelOut^\codebookBlocklength | \codebook_{1,\codebookBlocklength}, \codebook_{2,\codebookBlocklength}} - \channelpmf_{\channelOut^\codebookBlocklength}},
\]
and apply Corollary~\ref{converse-cor} with $\codebookOne := \codebook_{1,\codebookBlocklength}$, $\codebookTwo := \codebook_{2,\codebookBlocklength}$ and $\lemmaconst := \lemmaconst_\codebookBlocklength$ to obtain random variables $\channelInOne_\codebookBlocklength$, $\channelInTwo_\codebookBlocklength$, $\channelOut_\codebookBlocklength$ and $\timeSharingRV_\codebookBlocklength$ satisfying conditions (\ref{converse-lemma-convergence}), (\ref{converse-lemma-sumrate}), (\ref{converse-lemma-rateone}) and (\ref{converse-lemma-ratetwo}). We assume without loss of generality that $\timeSharingRV_\codebookBlocklength$ are all on the same alphabet $\timeSharingAlph$ with cardinality $\cardinality{\channelOutAlph} + 3$ and observe that the space of probability distributions on a finite alphabet is compact. So we can pick an increasing sequence $(\codebookBlocklength_\generalLimitIndex)_{\generalLimitIndex \geq 1}$ such that $\codebookpmf_{\timeSharingRV_{\codebookBlocklength_\generalLimitIndex}}$ converges to a probability mass function $\codebookpmf_{\timeSharingRV}$, $\codebookpmf_{\channelInOne_{\codebookBlocklength_\generalLimitIndex} |\timeSharingRV_{\codebookBlocklength_\generalLimitIndex}=\timeSharingAlphElement}$ converges for all $\timeSharingAlphElement \in \timeSharingAlph$ to a probability mass function $\codebookpmf_{\channelInOne|\timeSharingRV=\timeSharingAlphElement}$ and $\codebookpmf_{\channelInTwo_{\codebookBlocklength_\generalLimitIndex}|\timeSharingRV_{\codebookBlocklength_\generalLimitIndex}=\timeSharingAlphElement}$ converges for all $\timeSharingAlphElement \in \timeSharingAlph$ to a probability mass function $\codebookpmf_{\channelInTwo|\timeSharingRV=\timeSharingAlphElement}$. The random variables $\channelInOne$, $\channelInTwo$, $\channelOut$ and $\timeSharingRV$ are then defined by the joint probability mass function $\codebookpmf_\timeSharingRV \codebookpmf_{\channelInOne | \timeSharingRV} \codebookpmf_{\channelInTwo | \timeSharingRV} \channelpmf_{\channelOut | \channelInOne, \channelInTwo}$. (\ref{converse-lemma-convergence}) implies that the marginal distribution of $\channelOut$ is $\channelpmf_\channelOut$, (\ref{converse-lemma-sumrate}) implies (\ref{converse-theorem-sumrate}), (\ref{converse-lemma-rateone}) implies (\ref{converse-theorem-rateone}) and (\ref{converse-lemma-ratetwo}) implies (\ref{converse-theorem-ratetwo}).
\end{proof}

\section{Channel Resolvability and Secrecy}
\label{sec:secrecy}

In this section, we summarize some existing notions of secrecy and a
few facts about their relations to each other, and also discuss their
operational meaning. We then show explicitly how our channel
resolvability results can be applied to derive a secrecy result for
the multiple-access wiretap channel that not only implies strong
secrecy, but has even stronger operational properties than the
classical notion of strong secrecy. We start by providing additional
definitions in Subsection~\ref{sec:secrecy-addiditonal-def}, then we
discuss different notions of secrecy and interconnections between them
in Subsection~\ref{sec:secrecy-concepts}. Then we show how our
resolvability results can be used to achieve secrecy over the
multiple-access channel. To this end, we first repeat a well-known
result on achievable rates for the multiple-access channel in
Subsection~\ref{sec:secrecy-mac-coding} which is the second crucial
ingredient besides resolvability for the secrecy theorem we prove in
Subsection~\ref{sec:secrecy-achieve}.

\subsection{Additional Definitions}
\label{sec:secrecy-addiditonal-def}
In contrast to the other sections, in this section we will also look at probability distributions on the alphabet of possible messages. We therefore use $\messageAlphabet_1 := \{1, \dots, \exp(\codebookBlocklength\codebookRateOne)\}$ and $\messageAlphabet_2 := \{1, \dots, \exp(\codebookBlocklength\codebookRateTwo)\}$ to denote the message alphabets; for convenience we denote by $\messageAlphabet := \messageAlphabet_1 \times \messageAlphabet_2$ the message alphabet of both transmitters. $\messageRV_1$, $\messageRV_2$ and $\messageRV = (\messageRV_1, \messageRV_2)$ are random variables on $\messageAlphabet_1$, $\messageAlphabet_2$ and $\messageAlphabet$, respectively, to denote the messages chosen by the transmitters to be sent and $\codebookpmf_\messageRV$ denotes a probability mass function on $\messageAlphabet$ describing the distribution of $\messageRV$.

By a \emph{wiretap channel} $\channel=(\channelLegit, \channelWiretapper)$, we mean a pair of channels, where $\channelLegit=(\channelInOneAlph, \channelInTwoAlph, \channelOutAlphLegit, \channelpmf_{\channelOutLegit | \channelInOne, \channelInTwo})$, which we call the \emph{legitimate receiver's channel} and $\channelWiretapper=(\channelInOneAlph, \channelInTwoAlph, \channelOutAlphWiretapper, \channelpmf_{\channelOutWiretapper | \channelInOne, \channelInTwo})$, which we call the \emph{wiretapper's channel}, share the same input alphabets.

A \emph{wiretap code} of block length $\codebookBlocklength$ for the multiple-access channel consists of two collections of probability mass functions $\codebookpmf_{\channelInOne^\codebookBlocklength | \messageRV_1}$ for each $\messageAlphabetElement_1 \in \messageAlphabet_1$ and $\codebookpmf_{\channelInTwo^\codebookBlocklength | \messageRV_2}$ for each $\messageAlphabetElement_2 \in \messageAlphabet_2$ (called the \emph{encoders}), as well as a \emph{decoder} $\codebookDecoder_\codebookBlocklength: \channelOutAlphLegit^\codebookBlocklength \rightarrow \messageAlphabet$. The encoders and the wiretap channel together induce for each $\messageAlphabetElement \in \messageAlphabet$ a \emph{wiretapper's output distribution} on $\channelOutAlphWiretapper^\codebookBlocklength$.

In this paper, we will consider a special class of wiretap encoders: By a pair of \emph{wiretap codebooks} of block length $\codebookBlocklength$, information rates $\codebookRateOne$, $\codebookRateTwo$ and randomness rates $\codebookRandRateOne$, $\codebookRandRateTwo$, we mean functions $\codebookOne: \{1, \dots, \exp(\codebookBlocklength\codebookRateOne)\} \times \{1, \dots, \exp(\codebookBlocklength\codebookRandRateOne)\} \rightarrow \channelInOneAlph^\codebookBlocklength$ and $\codebookTwo: \{1, \dots, \exp(\codebookBlocklength\codebookRateTwo)\} \times \{1, \dots, \exp(\codebookBlocklength\codebookRandRateTwo)\} \rightarrow \channelInTwoAlph^\codebookBlocklength$. Given input distributions $\channelpmf_\channelInOne$ and $\channelpmf_\channelInTwo$, we define a probability distribution $\Probability_{\codebookOne, \codebookTwo}$ on the set of all possible codebooks such that every component of every function value is independently distributed according to $\channelpmf_\channelInOne$ or $\channelpmf_\channelInTwo$, respectively. Such a pair of codebooks induces encoders in the sense defined above: For $\txIndex \in \{1,2\}$, transmitter $\txIndex$ picks (independently of the other transmitter and without cooperation) a message $\codewordIndex_\txIndex \in \{1, \dots, \exp(\codebookBlocklength\codebookRate_\txIndex)\}$ and draws uniformly at random $\randomnessIndex_\txIndex \in \{1, \dots, \exp(\codebookBlocklength\codebookRandRate_\txIndex)\}$. The transmitted codeword is then $\codebook_\txIndex(\codewordIndex_\txIndex, \randomnessIndex_\txIndex)$. The wiretapper's output distribution induced by these encoders is
\begin{multline*}
\codebookpmf_{\channelOutWiretapper^\codebookBlocklength | \codebookOne(\codewordIndex_1, \cdot), \codebookTwo(\codewordIndex_2, \cdot)}(\channelOutAlphElementWiretapper^\codebookBlocklength)
:=
\exp(-\codebookBlocklength(\codebookRandRateOne+\codebookRandRateTwo))
\\
\sum\limits_{\randomnessIndex_1=1}^{\exp(\codebookBlocklength\codebookRandRateOne)}
\sum\limits_{\randomnessIndex_2=1}^{\exp(\codebookBlocklength\codebookRandRateTwo)}
  \channelpmf_{\channelOutWiretapper^\codebookBlocklength | \channelInOne^\codebookBlocklength, \channelInTwo^\codebookBlocklength} (\channelOutAlphElementWiretapper^\codebookBlocklength | \codebookOne(\codewordIndex_1, \randomnessIndex_1), \codebookTwo(\codewordIndex_2, \randomnessIndex_2)).
\end{multline*}

\subsection{Secrecy Concepts and their Operational Meanings}
\label{sec:secrecy-concepts}
In this subsection, we give a short overview over some notions of information-theoretic secrecy. We adapt the notation in the sources we are citing in such a way that it conforms with the notation we have been using in this paper, which is specific to the multiple-access wiretap channel, but the concepts apply also to (and have originally been introduced for) wiretap channels with only one transmitter.

Before we give formal definitions of secrecy, we first take a look at what operational implications we want a good notion of secrecy to have. In~\cite{AhlswedeGeneral}, Ahlswede generalized the communication task in the following way: The transmitter picks a message $\messageRV \in \messageAlphabet$, encodes it and transmits the codeword over a channel. The receiver then aims not necessarily at decoding the exact message $\messageRV$, but rather has a partition $\{\messageAlphabet_1, \dots, \messageAlphabet_\nPartitionElements\}$ of $\messageAlphabet$, i.e. disjoint subsets of $\messageAlphabet$ with $\cup_{\indexPartitions = 1}^{\nPartitionElements} \messageAlphabet_\indexPartitions = \messageAlphabet$. Its task is then to reconstruct $\indexPartitions$ such that $\messageRV \in \messageAlphabet_\indexPartitions$. The transmitter does not know the exact partition $\partition$ that defines the receiver's task, but has only a subset $\{\partition_1, \dots, \partition_\nPartitions\}$ of the set of all possible partitions of $\messageAlphabet$ and has to encode the message in such a way that the receiver can decode with high probability no matter which partition in this set defines its task, or put differently, we assume there is one receiver for each partition and we want each receiver to be able to decode with high probability. Ahlswede gave a list of specializations of this very general concept that he deemed potentially useful for real-world communication systems. Two of these are:
\begin{enumerate}
 \item $\nPartitions = 1$ and the only partition $\partition_1$ is the partition of singletons $\{\{\messageAlphabetElement\}: \messageAlphabetElement \in \messageAlphabet\}$. Here the receiver's task amounts to Shannon's classical decoding problem.\label{ahlswede-classification-shannon}
 \item $\nPartitions = \cardinality{\messageAlphabet}$ and the set of partitions is
 \[\Big\{\big\{\{\messageAlphabetElement\}, \messageAlphabet \setminus \{\messageAlphabetElement\}\big\}: \messageAlphabetElement \in \messageAlphabet\Big\}.\]
 This corresponds to the problem of identification codes introduced by Ahlswede and Dueck~\cite{AhlswedeDueckIdentification}.\label{ahlswede-classification-identification}
\end{enumerate}
These decoding tasks can not only be understood as having to be performed by a legitimate receiver in cooperation with the transmitter, but can also be seen as possible attacks carried out by a wiretapper, where each possible attack is defined by a partition on $\messageAlphabet$. For instance, we could deem the wiretapper's attack to have been successful only if it is able to decode the exact message which corresponds to item \ref{ahlswede-classification-shannon}), or, by an alternative definition, we might choose a particular message $\messageAlphabetElement$ that the wiretapper has to identify corresponding to one of the partitions of item \ref{ahlswede-classification-identification}) and thus obtain a much weaker definition of what constitutes a successful attack. 

The best thing we might wish for in this context would be that the wiretapper be unable to perform attacks corresponding to \emph{any} possible partition of the message set. In fact, demanding this is equivalent to insisting that the wiretapper shall not be able to decode any function of the transmitted message (because a function induces a partition of its domain via the preimages of singletons in its range).
 
Shannon introduced the concept of information theoretic secrecy and required for his condition of \emph{perfect secrecy}~\cite{ShannonSecrecy} that $\mutualInformation{\messageRV}{\channelOutWiretapper^\codebookBlocklength} = 0$. In Shannon's cipher system, both legitimate receiver and wiretapper get the same channel output, but transmitter and legitimate receiver share a key, which is unknown to the wiretapper.

Later, Wyner introduced the wiretap channel, where the wiretapper's channel output is degraded with respect to the legitimate receiver's channel output, but wiretapper and legitimate receiver do not share a key. In order to be more amenable to analysis, he weakened Shannon's perfect secrecy criterion and introduced the concept of \emph{weak secrecy}~\cite{WynerWiretap}. A sequence of codes for increasing block length $\codebookBlocklength$ is said to achieve weak secrecy if $\mutualInformation{\messageRV}{\channelOutWiretapper^\codebookBlocklength}/\codebookBlocklength$ approaches $0$ as $\codebookBlocklength$ tends to infinity, where $\messageRV$ is assumed to be uniformly distributed over the message set. It turns out that a sequence of codes can achieve weak secrecy while the total amount of information leaked still tends to infinity with the block length approaching infinity. Therefore, the notion of~\emph{strong secrecy} was introduced by Maurer~\cite{MaurerStrongSecret} which demands that $\mutualInformation{\messageRV}{\channelOutWiretapper^\codebookBlocklength}$ shall approach $0$ as $\codebookBlocklength$ tends to infinity, where $\messageRV$ is still assumed to be distributed uniformly.

Strong secrecy has been shown to imply that full decoding is only possible with an error probability that approaches the probability of guessing correctly without any channel output at all and it has also been shown to provide some resilience against identification attacks~\cite{BjelakovicSecrecy}, however, some messages might remain identifiable and no method was proposed of ensuring that any particular given message $\messageAlphabetElement$ not be identifiable.

This leads us to the concept of \emph{semantic security}, which originates from the field of cryptography~\cite{GoldwasserProbabilistic}, but has also been redefined to yield an information-theoretic secrecy criterion~\cite{BellareSemantic}\cite{BellareCryptographic}.

Following the discussion in~\cite{BellareCryptographic} and adapting to our notation, we say that a sequence of codebooks achieves \emph{semantic security} if
\begin{multline}
\label{def:semantic-security}
\max\limits_{\partition, \Probability_\messageRV} \bigg(
  \max\limits_{\wiretapperDecoder: \channelOutAlphWiretapper^\codebookBlocklength \rightarrow \partition}
  \Probability_{\messageRV, \channelOutWiretapper^\codebookBlocklength}(\messageRV \in \wiretapperDecoder(\channelOutWiretapper^\codebookBlocklength))
  \\
  -
  \max\limits_{\wiretapperGuesser \in \partition}
  \Probability_{\messageRV}(\messageRV \in \wiretapperGuesser)
\bigg)
\end{multline}
tends to $0$ with $\codebookBlocklength \rightarrow \infty$, where $\partition$ ranges over all partitions of $\messageAlphabet$ and $\Probability_\messageRV$ over all possible probability distributions of $\messageRV \in \messageAlphabet$. The positive term can be seen as the probability that the wiretapper solves its decoding task successfully given its channel output and an optimal decoder, while the second term can be seen as subtracting the probability that the decoding task is solved without knowledge of the wiretapper's channel output (i.e. by optimal guessing). This notion has a wide range of operational implications in the sense that it is impossible for the wiretapper to carry out an attack corresponding to any of Ahlswede's general communication tasks, as we show in Examples~\ref{example:semantic-security-id} and~\ref{example:semantic-security-general}.
\begin{remark}
(\ref{def:semantic-security}) remains unchanged even if we allow stochastic wiretapper's decoders, i.e. if we let $\wiretapperDecoder$ range over functions that map from $\channelOutAlphWiretapper^\codebookBlocklength$ to probability mass functions on $\partition$.
\end{remark}

On the other hand, it is not easy to prove directly that a particular wiretap code achieves semantic security. We therefore define a notion of secrecy that is more amenable to analysis, namely, we say that a sequence of wiretap codebooks achieves \emph{distinguishing security}  if
\begin{multline}
\label{def:distinguishing-security}
\max\limits_{\codewordIndex_1, \codewordIndex_2, \tilde{\codewordIndex}_1, \tilde{\codewordIndex}_2}
  \totalvariation{
    \codebookpmf_{\channelOutWiretapper^\codebookBlocklength | \codebookOne(\codewordIndex_1, \cdot), \codebookTwo(\codewordIndex_2, \cdot)}
    \\
    -
    \codebookpmf_{\channelOutWiretapper^\codebookBlocklength | \codebookOne(\tilde{\codewordIndex}_1, \cdot), \codebookTwo(\tilde{\codewordIndex}_2, \cdot)}
  }
\end{multline}
approaches $0$ with $\codebookBlocklength$ tending to infinity.
\begin{remark}
Due to the triangle inequality, this notion of distinguishing security is equivalent to
\begin{align*}
\max\limits_{\codewordIndex_1, \codewordIndex_2}
  \totalvariation{
    \codebookpmf_{\channelOutWiretapper^\codebookBlocklength | \codebookOne(\codewordIndex_1, \cdot), \codebookTwo(\codewordIndex_2, \cdot)}
    -
    \codebookpmf
  }
\end{align*}
approaching $0$, where $\codebookpmf$ is an arbitrary probability mass function on $\channelOutAlphWiretapper^\codebookBlocklength$ that does not change with $\codewordIndex_1$ or $\codewordIndex_2$.
\end{remark}

In \cite{BellareCryptographic}, these and several other related concepts of information-theoretical secrecy and their interrelations are discussed. It turns out that distinguishing and semantic security are equivalent and strictly stronger than strong secrecy. On the other hand, if the definition of strong secrecy is strengthened to demand that $\max_{\codebookpmf_\messageRV} \mutualInformation{\messageRV}{\channelOutWiretapper^\codebookBlocklength}$ approaches $0$, where $\codebookpmf_\messageRV$ ranges over \emph{all} distributions of $\messageRV$ (not just uniform), then it becomes equivalent to the notions of distinguishing and semantic security. We will present here versions of the proofs adapted to our notation only for the implications that seem most important to us in this context, namely that distinguishing security, which we show in Theorem~\ref{theorem:distinguishing-security} can be achieved over the multiple-access channel, implies both strong secrecy (Lemma~\ref{lemma:distinguishing-implies-strong}) and semantic security (Lemma~\ref{lemma:distinguishing-implies-semantic}).

We now give two examples to show how the concept of semantic security implies resilience to the Ahlswede attacks discussed before. Example~\ref{example:semantic-security-id} is a special case of Example~\ref{example:semantic-security-general}, but we lay it out explicitly anyway because we believe that it illustrates the ideas involved in a particularly accessible way and that the type of attack it describes could be particularly relevant in practice.
\begin{example}[Identification attacks]
\label{example:semantic-security-id}
Suppose the wiretapper's objective is to determine from an observation of $\channelOutWiretapper^\codebookBlocklength$ whether the transmitted message was $\messageAlphabetElement$ or not, i.e. the attack we consider can be described by the partition $\partition = \{\{\messageAlphabetElement\}, \messageAlphabet \setminus \{\messageAlphabetElement\}\}$. To this end, we assume that it has found some decoding function $\wiretapperDecoder: \channelOutAlphWiretapper^\codebookBlocklength \rightarrow \partition$. Without assuming any probability distribution on $\messageAlphabet$, we can define two types of error
\begin{align*}
\errorprob_1
&:=
\Probability_{\channelOutWiretapper^\codebookBlocklength} \left(
  \wiretapperDecoder(\channelOutWiretapper^\codebookBlocklength) = \messageAlphabet \setminus \{\messageAlphabetElement\}
  \big|
  \messageRV = \messageAlphabetElement
\right)
\\
\errorprob_2^{\hat{\messageAlphabetElement}}
&:=
\Probability_{\channelOutWiretapper^\codebookBlocklength} \left(
  \wiretapperDecoder(\channelOutWiretapper^\codebookBlocklength) = \{\messageAlphabetElement\}
  \big|
  \messageRV = \hat{\messageAlphabetElement}
\right),
\end{align*}
where $\errorprob_2^{\hat{\messageAlphabetElement}}$ is defined for each $\hat{\messageAlphabetElement} \neq \messageAlphabetElement$. Note that if the wiretapper is allowed to guess stochastically, it can trivially achieve any values for $\errorprob_1$ and $\errorprob_2^{\hat{\messageAlphabetElement}}$ that satisfy $\errorprob_1 + \errorprob_2^{\hat{\messageAlphabetElement}} = 1$ (where $\errorprob_2^{\hat{\messageAlphabetElement}}$ can be chosen the same for all $\hat{\messageAlphabetElement}$), even without observing $\channelOutWiretapper^\codebookBlocklength$. On the other hand, we now show that if (\ref{def:semantic-security}) is bounded above by $\lemmaconst > 0$ and we fix an arbitrary $\hat{\messageAlphabetElement} \neq \messageAlphabetElement$, the wiretapper cannot achieve much lower error probabilities $\errorprob_1$ and $\errorprob_2^{\hat{\messageAlphabetElement}}$, even given the observation of $\channelOutAlphWiretapper^\codebookBlocklength$. We first note that if $\Probability_\messageRV$ is defined by the probability mass function
\[
\codebookpmf_\messageRV(\tilde{\messageAlphabetElement}) :=
\begin{cases}
\frac{1}{2}, & \tilde{\messageAlphabetElement} = \messageAlphabetElement\ \vee
               \tilde{\messageAlphabetElement} = \hat{\messageAlphabetElement}
\\
0,           & \text{otherwise,}
\end{cases}
\]
then $\Probability_{\messageRV}(\messageRV \in \wiretapperGuesser) = 1/2$ regardless of the choice of $\wiretapperGuesser$. We can thus bound
\begin{align}
\label{example:semantic-security-id-def}
\lemmaconst
&>
\Probability_{\messageRV, \channelOutWiretapper^\codebookBlocklength}(\messageRV \in \wiretapperDecoder(\channelOutWiretapper^\codebookBlocklength))
-
\max\limits_{\wiretapperGuesser \in \partition}
\Probability_{\messageRV}(\messageRV \in \wiretapperGuesser)
\\
\label{example:semantic-security-id-prob-substitution}
&
\begin{aligned}
=
&\frac{1}{2}
\Probability_{\channelOutWiretapper^\codebookBlocklength}(\messageAlphabetElement \in \wiretapperDecoder(\channelOutWiretapper^\codebookBlocklength) | \messageRV = \messageAlphabetElement)
\\
&+
\frac{1}{2}
\Probability_{\channelOutWiretapper^\codebookBlocklength}(\hat{\messageAlphabetElement} \in \wiretapperDecoder(\channelOutWiretapper^\codebookBlocklength) | \messageRV = \hat{\messageAlphabetElement})
-
\frac{1}{2}
\end{aligned}
\\
\label{example:semantic-security-id-errorprob-def}
&=
\frac{1}{2}(1-\errorprob_1 + 1-\errorprob_2^{\hat{\messageAlphabetElement}} - 1),
\end{align}
where (\ref{example:semantic-security-id-def}) is by assumption valid for all $\Probability_\messageRV$, (\ref{example:semantic-security-id-prob-substitution}) follows by plugging in $\Probability_\messageRV$ as defined above and (\ref{example:semantic-security-id-errorprob-def}) follows by plugging in the definitions of $\errorprob_1$ and $\errorprob_2^{\hat{\messageAlphabetElement}}$. We can express this equivalently as
\[
\errorprob_1 + \errorprob_2^{\hat{\messageAlphabetElement}} > 1 - 2\lemmaconst,
\]
which shows that as (\ref{def:semantic-security}) tends to $0$, the wiretapper's minimum error probabilities approach values that can be achieved by stochastic guessing without observing any channel output at all.
\end{example}

\begin{example}[General Ahlswede Attacks]
\label{example:semantic-security-general}
As a second, slightly more involved example, we show the operational meaning of semantic security for attacks defined by more general partitions. Let $\partition = \{\messageAlphabet_1, \dots, \messageAlphabet_\nPartitionElements\}$ be a partition of $\messageAlphabet$. Instead of assuming that the wiretapper can determine the correct partition element based on any transmitted message, we make a weaker assumption, namely that there is (at least) one message in each partition element that can be correctly identified. Formally, we say that the attacker has a decoding function $\wiretapperDecoder: \channelOutAlphWiretapper^\codebookBlocklength \rightarrow \partition$ and we fix messages $\messageAlphabetElement_1 \in \messageAlphabet_1, \dots, \messageAlphabetElement_\nPartitionElements \in \messageAlphabet_\nPartitionElements$. For all $\indexPartitions \in \{1, \dots, \nPartitionElements\}$, the error probabilities associated with partition element $\messageAlphabet_\indexPartitions$ and message $\messageAlphabetElement_\indexPartitions \in \messageAlphabet_\indexPartitions$ is then defined as
\[
\errorprob_\indexPartitions^{\messageAlphabetElement_\indexPartitions}
:=
\Probability_{\channelOutWiretapper^\codebookBlocklength} \left(
  \wiretapperDecoder(\channelOutWiretapper^\codebookBlocklength) \neq \messageAlphabet_\indexPartitions
  \big|
  \messageRV = \messageAlphabetElement_\indexPartitions
\right).
\]
We observe that if stochastic guessing is allowed, the wiretapper can achieve any combination of error probabilities that satisfies $\errorprob_1^{\messageAlphabetElement_1} + \dots + \errorprob_\nPartitionElements^{\messageAlphabetElement_\nPartitionElements} = \nPartitionElements - 1$. We consider $\Probability_\messageRV$ defined by the probability mass function
\[
\codebookpmf_\messageRV(\messageAlphabetElement) :=
\begin{cases}
\frac{1}{\nPartitionElements}, & \messageAlphabetElement = \messageAlphabetElement_\indexPartitions \text{ for some } \indexPartitions
\\
0,           & \text{otherwise,}
\end{cases}
\]
noting that $\Probability_{\messageRV}(\messageRV \in \wiretapperGuesser) = 1/\nPartitionElements$ regardless of the choice of $\wiretapperGuesser$. Assuming that (\ref{def:semantic-security}) is upper bounded by $\lemmaconst > 0$, we have
\begin{align}
\label{example:semantic-security-general-def}
\lemmaconst
&>
\Probability_{\messageRV, \channelOutWiretapper^\codebookBlocklength}(\messageRV \in \wiretapperDecoder(\channelOutWiretapper^\codebookBlocklength))
-
\max\limits_{\wiretapperGuesser \in \partition}
\Probability_{\messageRV}(\messageRV \in \wiretapperGuesser)
\\
\label{example:semantic-security-general-prob-substitution}
&=
\sum\limits_{\indexPartitions=1}^\nPartitionElements
\frac{1}{\nPartitionElements}
\Probability_{\channelOutWiretapper^\codebookBlocklength}(\wiretapperDecoder(\channelOutWiretapper^\codebookBlocklength) = \messageAlphabet_\indexPartitions | \messageRV = \messageAlphabetElement_\indexPartitions)
-
\frac{1}{\nPartitionElements}
\\
\label{example:semantic-security-general-errorprob-def}
&=
\sum\limits_{\indexPartitions=1}^\nPartitionElements
\frac{1}{\nPartitionElements}
(1-\errorprob_\indexPartitions^{\messageAlphabetElement_\indexPartitions})
-
\frac{1}{\nPartitionElements}
\end{align}
where (\ref{example:semantic-security-general-def}) is by assumption valid for all $\Probability_\messageRV$, (\ref{example:semantic-security-general-prob-substitution}) follows by plugging in $\Probability_\messageRV$ and (\ref{example:semantic-security-general-errorprob-def}) follows by plugging in the definitions of the $\errorprob_\indexPartitions^{\messageAlphabetElement_\indexPartitions}$. We can express this equivalently as
\[
\sum\limits_{\indexPartitions=1}^\nPartitionElements
\errorprob_\indexPartitions^{\messageAlphabetElement_\indexPartitions}
>
\nPartitionElements - 1 - \nPartitionElements\lemmaconst,
\]
which shows that also in this more general case, as (\ref{def:semantic-security}) tends to $0$, the wiretapper's error probabilities approach the values that can be achieved by stochastic guessing without observing any channel output at all, even if these error probabilities are only assumed to hold for one message in every partition element.
\end{example}

\begin{lemma}
\label{lemma:distinguishing-implies-strong}
If distinguishing security holds, i.e. (\ref{def:distinguishing-security}) approaches $0$, then so does $\max_{\codebookpmf_\messageRV} \mutualInformation{\messageRV}{\channelOutWiretapper^\codebookBlocklength}$. In particular, strong secrecy holds.
\end{lemma}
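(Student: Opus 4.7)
The plan is to convert the variational bound (\ref{def:distinguishing-security}) into a mutual-information bound using the entropy-continuity lemma. Write $\lemmaconst_\codebookBlocklength$ for the distinguishing-security quantity in (\ref{def:distinguishing-security}), which by hypothesis satisfies $\lemmaconst_\codebookBlocklength \to 0$. I would fix an arbitrary message distribution $\codebookpmf_\messageRV$ on $\messageAlphabet$ and let $\codebookpmf_{\channelOutWiretapper^\codebookBlocklength}$ denote the induced marginal wiretapper output, i.e. the convex combination $\sum_{\messageAlphabetElement'}\codebookpmf_\messageRV(\messageAlphabetElement')\codebookpmf_{\channelOutWiretapper^\codebookBlocklength|\messageRV=\messageAlphabetElement'}$. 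For every $\messageAlphabetElement \in \messageAlphabet$, the triangle inequality together with (\ref{def:distinguishing-security}) yields
\[
\totalvariation{\codebookpmf_{\channelOutWiretapper^\codebookBlocklength|\messageRV=\messageAlphabetElement} - \codebookpmf_{\channelOutWiretapper^\codebookBlocklength}}
\leq
\sum_{\messageAlphabetElement'} \codebookpmf_\messageRV(\messageAlphabetElement')
\totalvariation{\codebookpmf_{\channelOutWiretapper^\codebookBlocklength|\messageRV=\messageAlphabetElement} - \codebookpmf_{\channelOutWiretapper^\codebookBlocklength|\messageRV=\messageAlphabetElement'}}
\leq
\lemmaconst_\codebookBlocklength,
\]
and, crucially, this bound is uniform in $\messageAlphabetElement$ and in the choice of $\codebookpmf_\messageRV$.

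Next, once $\codebookBlocklength$ is large enough that $\lemmaconst_\codebookBlocklength \leq 1/4$, I would apply Lemma~\ref{totvar-entropy-lemma} on the alphabet $\channelOutAlphWiretapper^\codebookBlocklength$ to each pair $(\codebookpmf_{\channelOutWiretapper^\codebookBlocklength}, \codebookpmf_{\channelOutWiretapper^\codebookBlocklength|\messageRV=\messageAlphabetElement})$ to obtain
\[
\absolute{\entropy{\channelOutWiretapper^\codebookBlocklength} - \entropy{\channelOutWiretapper^\codebookBlocklength|\messageRV=\messageAlphabetElement}}
\leq
-\frac{\lemmaconst_\codebookBlocklength}{2}\log\frac{\lemmaconst_\codebookBlocklength}{2\cardinality{\channelOutAlphWiretapper}^\codebookBlocklength}.
\]
Averaging over $\messageRV$ and noting that the signed difference is dominated by its absolute value gives
\[
\mutualInformation{\messageRV}{\channelOutWiretapper^\codebookBlocklength}
=
\entropy{\channelOutWiretapper^\codebookBlocklength} - \entropyConditional{\channelOutWiretapper^\codebookBlocklength}{\messageRV}
\leq
-\frac{\lemmaconst_\codebookBlocklength}{2}\log\frac{\lemmaconst_\codebookBlocklength}{2\cardinality{\channelOutAlphWiretapper}^\codebookBlocklength},
\]
and since the right-hand side is independent of $\codebookpmf_\messageRV$, it also bounds $\max_{\codebookpmf_\messageRV}\mutualInformation{\messageRV}{\channelOutWiretapper^\codebookBlocklength}$.

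Finally, I would check that the right-hand side vanishes as $\codebookBlocklength \to \infty$. Rewriting it as $\frac{\lemmaconst_\codebookBlocklength}{2}\bigl(\codebookBlocklength\log(2\cardinality{\channelOutAlphWiretapper}) - \log\lemmaconst_\codebookBlocklength\bigr)$ makes explicit the main obstacle: the linear-in-$\codebookBlocklength$ term coming from the exponentially growing alphabet $\channelOutAlphWiretapper^\codebookBlocklength$, which forces one to require $\lemmaconst_\codebookBlocklength = o(1/\codebookBlocklength)$ for the bound to tend to $0$. In the context of this paper this is unproblematic: distinguishing security will be established via the resolvability result of Theorem~\ref{theorem:soft-covering-two-transmitters-convex}, which delivers $\lemmaconst_\codebookBlocklength$ decaying exponentially in $\codebookBlocklength$, so the required decay is amply satisfied. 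The ``in particular'' clause is then the specialisation of the above bound to the uniform distribution on $\messageRV$.
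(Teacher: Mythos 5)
Your proof is correct and follows essentially the same route as the paper's: bound the variational distance between the conditional output distribution and the message-averaged output by (\ref{def:distinguishing-security}) via the triangle inequality, then convert to an entropy bound with Lemma~\ref{totvar-entropy-lemma}; the only cosmetic difference is that you average the entropy-difference bound over all messages, while the paper applies it only to the single message minimizing $\entropyConditional{\channelOutWiretapper^\codebookBlocklength}{\messageRV = \tilde{\messageAlphabetElement}}$. Your closing remark that the resulting bound $-\tfrac{1}{2}\lemmaconst_\codebookBlocklength\log\bigl(\lemmaconst_\codebookBlocklength/(2\cardinality{\channelOutAlphWiretapper}^\codebookBlocklength)\bigr)$ only vanishes when $\lemmaconst_\codebookBlocklength = o(1/\codebookBlocklength)$ is a legitimate caveat that the paper's own proof silently glosses over, and you resolve it correctly by noting the exponential decay delivered by Theorem~\ref{theorem:soft-covering-two-transmitters-convex}.
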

\begin{proof}
Fix an arbitrary distribution $\codebookpmf_\messageRV$ on $\messageAlphabet$ and observe that
\begin{align}
\mutualInformation{\messageRV}{\channelOutWiretapper^\codebookBlocklength}
&\leq
\entropy{\channelOutWiretapper^\codebookBlocklength} - \entropyConditional{\channelOutWiretapper^\codebookBlocklength}{\messageRV = \tilde{\messageAlphabetElement}}
\\
\label{eq:ds-implies-strong-mutinf}
&\leq
-\frac{1}{2}
\lemmaconst
\log \frac{\lemmaconst}
          {2\absolute{\channelOutAlphWiretapper}^\codebookBlocklength},
\end{align}

where $\tilde{\messageAlphabetElement}$ is chosen as the element in $\messageAlphabet$ that minimizes $\entropyConditional{\channelOutWiretapper^\codebookBlocklength}{\messageRV = \tilde{\messageAlphabetElement}}$ and the last inequality follows from Lemma~\ref{totvar-entropy-lemma}, as long as $\lemmaconst < 1/4$, where
\begin{align*}
\lemmaconst
&
=
\totalvariationBigg{
  \sum\limits_{(\messageAlphabetElement_1, \messageAlphabetElement_2) \in \messageAlphabet}
    \codebookpmf_\messageRV(\messageAlphabetElement_1, \messageAlphabetElement_2)
    \codebookpmf_{\channelOutWiretapper^\codebookBlocklength | \codebookOne(\messageAlphabetElement_1,\cdot), \codebookTwo(\messageAlphabetElement_2, \cdot)}
  \\
  &~~~~~~~-
  \codebookpmf_{\channelOutWiretapper^\codebookBlocklength | \codebookOne(\tilde{\messageAlphabetElement}_1,\cdot), \codebookTwo(\tilde{\messageAlphabetElement}_2, \cdot)}
}
\\
&
\begin{aligned}
\leq
&\sum\limits_{(\messageAlphabetElement_1, \messageAlphabetElement_2) \in \messageAlphabet}
\codebookpmf_\messageRV(\messageAlphabetElement_1, \messageAlphabetElement_2)
\\
&
\totalvariationlr{
  \codebookpmf_{\channelOutWiretapper^\codebookBlocklength | \codebookOne(\messageAlphabetElement_1,\cdot), \codebookTwo(\messageAlphabetElement_2, \cdot)}
  -
  \codebookpmf_{\channelOutWiretapper^\codebookBlocklength | \codebookOne(\tilde{\messageAlphabetElement}_1,\cdot), \codebookTwo(\tilde{\messageAlphabetElement}_2, \cdot)}
}.
\end{aligned}
\end{align*}

Note that the latter variational distance term is clearly bounded by (\ref{def:distinguishing-security}), and so (\ref{eq:ds-implies-strong-mutinf}) approaches $0$ if (\ref{def:distinguishing-security}) does.
\end{proof}

\begin{lemma}
\label{lemma:distinguishing-implies-semantic}
(\ref{def:semantic-security}) is upper bounded by (\ref{def:distinguishing-security}). In particular, distinguishing security implies semantic security.
\end{lemma}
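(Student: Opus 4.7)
The plan is to reduce distinguishing security to a statement about how close each message-conditional wiretapper output distribution is to one common distribution $\codebookpmf$, and then show that once we replace the conditional distribution by $\codebookpmf$, the wiretapper's advantage collapses because $\codebookpmf$ carries no information about $\messageRV$. Concretely, let $\lemmaconst$ denote the quantity in~(\ref{def:distinguishing-security}). By the remark following that definition (which is a one-line triangle-inequality argument fixing an arbitrary pair $(\tilde{\codewordIndex}_1,\tilde{\codewordIndex}_2)$), there exists a distribution $\codebookpmf$ on $\channelOutAlphWiretapper^\codebookBlocklength$ such that for every $\messageAlphabetElement=(\messageAlphabetElement_1,\messageAlphabetElement_2)\in\messageAlphabet$,
\[
\totalvariationlr{
  \codebookpmf_{\channelOutWiretapper^\codebookBlocklength|\codebookOne(\messageAlphabetElement_1,\cdot),\codebookTwo(\messageAlphabetElement_2,\cdot)}-\codebookpmf
}
\leq\lemmaconst.
\]

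Next, fix an arbitrary partition $\partition$ of $\messageAlphabet$, an arbitrary distribution $\codebookpmf_\messageRV$, and an arbitrary (possibly stochastic) wiretapper decoder $\wiretapperDecoder:\channelOutAlphWiretapper^\codebookBlocklength\to\partition$. For each $\messageAlphabetElement\in\messageAlphabet$ let $A_\messageAlphabetElement:=\{\channelOutAlphElementWiretapper^\codebookBlocklength:\messageAlphabetElement\in\wiretapperDecoder(\channelOutAlphElementWiretapper^\codebookBlocklength)\}$. Since the total variation distance controls the difference of probabilities of events, the above bound yields $\codebookpmf_{\channelOutWiretapper^\codebookBlocklength|\messageRV=\messageAlphabetElement}(A_\messageAlphabetElement)\leq\codebookpmf(A_\messageAlphabetElement)+\lemmaconst$, hence
\[
\Probability_{\messageRV,\channelOutWiretapper^\codebookBlocklength}(\messageRV\in\wiretapperDecoder(\channelOutWiretapper^\codebookBlocklength))
=\sum_{\messageAlphabetElement\in\messageAlphabet}\codebookpmf_\messageRV(\messageAlphabetElement)\codebookpmf_{\channelOutWiretapper^\codebookBlocklength|\messageRV=\messageAlphabetElement}(A_\messageAlphabetElement)
\leq\lemmaconst+\sum_{\messageAlphabetElement\in\messageAlphabet}\codebookpmf_\messageRV(\messageAlphabetElement)\codebookpmf(A_\messageAlphabetElement).
\]

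Finally I would exchange the order of summation in the residual term and observe that it factors through a pointwise guessing bound. Writing
\[
\sum_{\messageAlphabetElement\in\messageAlphabet}\codebookpmf_\messageRV(\messageAlphabetElement)\codebookpmf(A_\messageAlphabetElement)
=\sum_{\channelOutAlphElementWiretapper^\codebookBlocklength}\codebookpmf(\channelOutAlphElementWiretapper^\codebookBlocklength)\,\Probability_\messageRV\!\left(\messageRV\in\wiretapperDecoder(\channelOutAlphElementWiretapper^\codebookBlocklength)\right)
\leq\max_{\wiretapperGuesser\in\partition}\Probability_\messageRV(\messageRV\in\wiretapperGuesser),
\]
where the inequality uses $\wiretapperDecoder(\channelOutAlphElementWiretapper^\codebookBlocklength)\in\partition$ for every $\channelOutAlphElementWiretapper^\codebookBlocklength$ together with $\sum_{\channelOutAlphElementWiretapper^\codebookBlocklength}\codebookpmf(\channelOutAlphElementWiretapper^\codebookBlocklength)=1$. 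Rearranging shows
\[
\Probability_{\messageRV,\channelOutWiretapper^\codebookBlocklength}(\messageRV\in\wiretapperDecoder(\channelOutWiretapper^\codebookBlocklength))
-\max_{\wiretapperGuesser\in\partition}\Probability_\messageRV(\messageRV\in\wiretapperGuesser)\leq\lemmaconst,
\]
and taking the maximum over $\wiretapperDecoder$, $\partition$ and $\codebookpmf_\messageRV$ gives the claim. The only step that requires a moment's care is the order-of-summation step in the last display; apart from that the argument is a direct triangle-inequality computation, and extending it to stochastic decoders (per the remark after~(\ref{def:semantic-security})) is immediate because all bounds above are preserved under convex combinations of $\wiretapperDecoder$.
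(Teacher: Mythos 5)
Your proof is correct. Every step checks out: the fixed reference distribution $\codebookpmf$ exists with $\totalvariationlr{\codebookpmf_{\channelOutWiretapper^\codebookBlocklength|\codebookOne(\messageAlphabetElement_1,\cdot),\codebookTwo(\messageAlphabetElement_2,\cdot)}-\codebookpmf}\leq\lemmaconst$ for all $\messageAlphabetElement$ (indeed you can just take $\codebookpmf$ to be the output distribution of any fixed message pair, since (\ref{def:distinguishing-security}) maximizes over all pairs); the paper's convention $\totalvariation{\generalpmfOne-\generalpmfTwo}=\sum_{\generalrvOneValue}\positive{\generalpmfOne(\generalrvOneValue)-\generalpmfTwo(\generalrvOneValue)}$ makes it exactly the sup over events, so the passage to $\codebookpmf(A_\messageAlphabetElement)+\lemmaconst$ is valid; and the Fubini step correctly reduces the residual to $\sum_{\channelOutAlphElementWiretapper^\codebookBlocklength}\codebookpmf(\channelOutAlphElementWiretapper^\codebookBlocklength)\Probability_\messageRV(\messageRV\in\wiretapperDecoder(\channelOutAlphElementWiretapper^\codebookBlocklength))$, which is a convex combination of quantities each dominated by $\max_{\wiretapperGuesser\in\partition}\Probability_\messageRV(\messageRV\in\wiretapperGuesser)$ because $\wiretapperDecoder(\channelOutAlphElementWiretapper^\codebookBlocklength)\in\partition$. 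Your route differs from the paper's in organization: the paper introduces a second independent copy $(\tilde{\messageAlphabetElement}_1,\tilde{\messageAlphabetElement}_2)\sim\codebookpmf_\messageRV$ of the message (so its reference is the $\codebookpmf_\messageRV$-mixture of output distributions rather than a single fixed one) and expands the advantage into three terms $\semanticTotvarTerm+\semanticErrorTermOne-\semanticErrorTermTwo$, then shows $\semanticTotvarTerm$ is dominated by (\ref{def:distinguishing-security}) and $\semanticErrorTermOne-\semanticErrorTermTwo\leq 0$; your argument replaces that appendix-length algebraic decomposition with a two-line change of measure on the acceptance sets $A_\messageAlphabetElement$ followed by an exchange of summation. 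The substance is the same hybrid idea, but your version is shorter and makes the role of the message-independent reference distribution more transparent, while the paper's version exhibits the bound directly in the pairwise form in which (\ref{def:distinguishing-security}) is stated without invoking the remark that follows it.
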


\begin{proof}
Fix a partition $\partition$ of $\messageAlphabet$ and a probability mass function $\codebookpmf_\messageRV$ on $\messageAlphabet$ that realize the outer maximum in (\ref{def:semantic-security}) as well as a decoding function $\wiretapperDecoder: \channelOutWiretapper^\codebookBlocklength \rightarrow \partition$ and a guess $\wiretapperGuesser \in \partition$ that realize the inner maxima. We can then rewrite (\ref{def:semantic-security}) as $\semanticTotvarTerm + \semanticErrorTermOne - \semanticErrorTermTwo$ by steps as shown in (\ref{distinguishing-implies-semantic-begin}) through (\ref{distinguishing-implies-semantic-end}) in the appendix.

Observe that $\semanticErrorTermOne = \Probability_\messageRV(\messageRV \notin \wiretapperGuesser)$ and that, taking into account that $\wiretapperGuesser$ minimizes $\Probability_\messageRV(\messageRV \notin \wiretapperGuesser)$ over the elements of $\partition$,
\begin{align*}
&
\begin{aligned}
\semanticErrorTermTwo
=
&\sum\limits_{(\tilde{\messageAlphabetElement}_1, \tilde{\messageAlphabetElement}_2) \in \messageAlphabet}
  \codebookpmf_\messageRV(\tilde{\messageAlphabetElement}_1, \tilde{\messageAlphabetElement}_2)
  \sum\limits_{\channelOutAlphElementWiretapper^\codebookBlocklength \in \channelOutAlphWiretapper^\codebookBlocklength}
  \\
    &~~~~\codebookpmf_{\channelOutWiretapper^\codebookBlocklength | \codebookOne(\tilde{\messageAlphabetElement}_1, \cdot), \codebookTwo(\tilde{\messageAlphabetElement}_2, \cdot)}(\channelOutAlphElementWiretapper^\codebookBlocklength)
    \Probability_\messageRV(\messageRV \notin \wiretapperDecoder(\channelOutWiretapper^\codebookBlocklength))
\end{aligned}
\\
&
\begin{aligned}
\hphantom{\semanticErrorTermTwo}
\geq
&\sum\limits_{(\tilde{\messageAlphabetElement}_1, \tilde{\messageAlphabetElement}_2) \in \messageAlphabet}
  \codebookpmf_\messageRV(\tilde{\messageAlphabetElement}_1, \tilde{\messageAlphabetElement}_2)
  \sum\limits_{\channelOutAlphElementWiretapper^\codebookBlocklength \in \channelOutAlphWiretapper^\codebookBlocklength}
  \\
    &~~~~\codebookpmf_{\channelOutWiretapper^\codebookBlocklength | \codebookOne(\tilde{\messageAlphabetElement}_1, \cdot), \codebookTwo(\tilde{\messageAlphabetElement}_2, \cdot)}(\channelOutAlphElementWiretapper^\codebookBlocklength)
    \Probability_\messageRV(\messageRV \notin \wiretapperGuesser))
\end{aligned}
\\
&
\begin{aligned}
\hphantom{\semanticErrorTermTwo}
=
\Probability_\messageRV(\messageRV \notin \wiretapperGuesser).
\end{aligned}
\end{align*}
Observe further that, optimizing the outcome of the indicator function in $\semanticTotvarTerm$ for each term in the innermost sum individually, we get
\begin{align*}
\semanticTotvarTerm
\leq 
&\sum\limits_{(\messageAlphabetElement_1, \messageAlphabetElement_2), (\tilde{\messageAlphabetElement}_1, \tilde{\messageAlphabetElement}_2) \in \messageAlphabet}
\codebookpmf_\messageRV(\messageAlphabetElement_1, \messageAlphabetElement_2)
\codebookpmf_\messageRV(\tilde{\messageAlphabetElement}_1, \tilde{\messageAlphabetElement}_2)
\\
&\cdot \totalvariationlr{
  \codebookpmf_{\channelOutWiretapper^\codebookBlocklength | \codebookOne(\messageAlphabetElement_1, \cdot), \codebookTwo(\messageAlphabetElement_2, \cdot)}
  -
  \codebookpmf_{\channelOutWiretapper^\codebookBlocklength | \codebookOne(\tilde{\messageAlphabetElement}_1, \cdot), \codebookTwo(\tilde{\messageAlphabetElement}_2, \cdot)}
},
\end{align*}
which is clearly upper bounded by (\ref{def:distinguishing-security}). Summarizing, we have seen that $\semanticErrorTermOne-\semanticErrorTermTwo \leq 0$ and $\semanticTotvarTerm$ is upper bounded by (\ref{def:distinguishing-security}) and consequently (\ref{def:semantic-security}) is upper bounded by (\ref{def:distinguishing-security}).
\end{proof}

\subsection{Coding for the Multiple-Access Channel}
\label{sec:secrecy-mac-coding}
The capacity region of the multiple-access channel is well known and proofs can be found in almost any textbook on information theory, however, usually only the existence of a suitable codebook is proven. We need here the slightly stronger statement that the probability of randomly choosing a bad codebook is exponentially small, which requires only a slight modification of these proofs, which is, however, normally not given in the textbooks explicitly. For the sake of completeness, we state here the result that we need and present a proof that is similar to the one given in~\cite{TeInformationSpectrum}.
\begin{theorem}
\label{theorem:mac-coding}
Suppose
$\channel = (\channelInOneAlph, \channelInTwoAlph, \channelOutAlph, \channelpmf_{\channelOut | \channelInOne, \channelInTwo})$
is a channel, $\channelpmf_\channelInOne$ and $\channelpmf_\channelInTwo$ are input distributions, $\codebookRateOne < \mutualInformationConditional{\channelInOne}{\channelOut}{\channelInTwo}$, $\codebookRateTwo < \mutualInformationConditional{\channelInTwo}{\channelOut}{\channelInOne}$ and $\codebookRateOne+\codebookRateTwo < \mutualInformation{\channelInOne,\channelInTwo}{\channelOut}$.
Then there exist decoding functions
\[
\codebookDecoder_\codebookBlocklength: \channelOutAlph^\codebookBlocklength
                                       \rightarrow
                                       \{1, \dots, \exp(\codebookBlocklength\codebookRateOne)\} \times \{1, \dots, \exp(\codebookBlocklength\codebookRateTwo)\}
\]
and $\finalconstOne, \finalconstTwo > 0$ such that for large enough block length $\codebookBlocklength$, the codebook distributions of block length $\codebookBlocklength$ and rates $\codebookRateOne$ and $\codebookRateTwo$ satisfy
\begin{align}
\label{theorem:mac-coding-probability-statement}
\Probability_{\codebookOne, \codebookTwo} \left(
  \errorprob
  >
  \exp(-\finalconstOne\codebookBlocklength)
\right)
\leq
\exp\left(-\finalconstTwo\codebookBlocklength\right),
\end{align}
where we define the \emph{average decoding error probability} as
\begin{align*}
\errorprob
:=
\sum\limits_{\codewordIndex_1=1}^{\exp{\codebookBlocklength\codebookRateOne}}
\sum\limits_{\codewordIndex_2=1}^{\exp{\codebookBlocklength\codebookRateTwo}}
  &\exp(-\codebookBlocklength(\codebookRateOne+\codebookRateTwo))
  \\
  &
  \begin{aligned}
    \Probability_{\channelOut^\codebookBlocklength} \Big(
          &\codebookDecoder_\codebookBlocklength(\channelOut^\codebookBlocklength) \neq (\codewordIndex_1, \codewordIndex_2)
          ~\big|~
          \\
          &\channelInOne^\codebookBlocklength = \codebookOneWord{\codewordIndex_1},
          \channelInTwo^\codebookBlocklength = \codebookTwoWord{\codewordIndex_2}
        \Big).
  \end{aligned}
\end{align*}
\end{theorem}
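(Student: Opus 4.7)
The plan is to combine standard random MAC coding with a threshold (joint-typicality) decoder, first showing that the expectation of the average decoding error probability $\errorprob$ over the random codebook pair decays exponentially, and then applying Markov's inequality to convert this into the tail bound (\ref{theorem:mac-coding-probability-statement}). Concretely, I would fix small positive $\typicalityParam, \proofconstantOne$ so that each of the three rate hypotheses leaves strict slack at least $\typicalityParam + \proofconstantOne$, and define $\codebookDecoder_\codebookBlocklength(\channelOutAlphElement^\codebookBlocklength)$ to output the unique pair $(\codewordIndex_1, \codewordIndex_2)$ such that
\begin{align*}
\informationDensityConditional{\codebookOneWord{\codewordIndex_1}}{\channelOutAlphElement^\codebookBlocklength}{\codebookTwoWord{\codewordIndex_2}} &> \codebookBlocklength(\mutualInformationConditional{\channelInOne}{\channelOut}{\channelInTwo} - \typicalityParam), \\
\informationDensityConditional{\codebookTwoWord{\codewordIndex_2}}{\channelOutAlphElement^\codebookBlocklength}{\codebookOneWord{\codewordIndex_1}} &> \codebookBlocklength(\mutualInformationConditional{\channelInTwo}{\channelOut}{\channelInOne} - \typicalityParam), \\
\informationDensity{(\codebookOneWord{\codewordIndex_1}, \codebookTwoWord{\codewordIndex_2})}{\channelOutAlphElement^\codebookBlocklength} &> \codebookBlocklength(\mutualInformation{\channelInOne, \channelInTwo}{\channelOut} - \typicalityParam),
\end{align*}
and to output an arbitrary symbol otherwise.

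For a (by symmetry, arbitrary) transmitted message pair, the decoding error event is contained in the union of four sub-events: (i) at least one of the three information densities for the \emph{correct} pair falls below its threshold; (ii), (iii), (iv) some alternative codeword pair that differs from the transmitted one in only $\codewordIndex_1$, only $\codewordIndex_2$, or in both indices, respectively, has its relevant information density above the same threshold. Event (i) is controlled exactly as in the atypicality estimate (\ref{proof:soft-covering-two-transmitters-probability-bound-start})--(\ref{proof:soft-covering-two-transmitters-probability-bound-end}) of the direct theorem: Markov's inequality applied to a power $(\renyiParam-1)$ of the relevant likelihood ratio, together with the continuity of the Rényi divergence in $\renyiParam$ (which approaches the Kullback--Leibler divergence, i.e.\ the corresponding mutual information, as $\renyiParam \to 1$), yields a bound of the form $\exp(-\codebookBlocklength \proofconstantOne)$ for each of its three components. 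For each of events (ii)--(iv), independence of the wrong codewords from the observed $\channelOut^\codebookBlocklength$ permits a standard change-of-measure argument: the expected probability that any \emph{fixed} wrong pair passes the relevant threshold is at most $\exp(-\codebookBlocklength(\mutualInformation{\cdot}{\cdot} - \typicalityParam))$, and a union bound over at most $\exp(\codebookBlocklength \codebookRateOne)$, $\exp(\codebookBlocklength \codebookRateTwo)$, or $\exp(\codebookBlocklength(\codebookRateOne + \codebookRateTwo))$ wrong pairs yields exponential decay by the three rate inequalities.

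Putting these bounds together gives $\Expectation_{\codebookOne, \codebookTwo}[\errorprob] \leq \exp(-\alpha \codebookBlocklength)$ for some explicit $\alpha > 0$. Markov's inequality then yields, for any $0 < \finalconstOne < \alpha$,
\[
\Probability_{\codebookOne, \codebookTwo}\left( \errorprob > \exp(-\finalconstOne \codebookBlocklength) \right) \leq \frac{\Expectation_{\codebookOne, \codebookTwo}[\errorprob]}{\exp(-\finalconstOne \codebookBlocklength)} \leq \exp(-(\alpha - \finalconstOne) \codebookBlocklength),
\]
so one may take $\finalconstTwo := \alpha - \finalconstOne > 0$. The hard part is not any individual bound---each is a standard random-coding estimate already present in the paper's toolkit---but the bookkeeping of picking $\typicalityParam$, $\proofconstantOne$ and $\renyiParam$ so that a \emph{single} exponent $\alpha$ simultaneously controls all four error contributions; the strictness of the three hypothesized rate inequalities guarantees this is possible for sufficiently small slack.
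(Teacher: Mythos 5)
Your proposal is correct and follows essentially the same route as the paper's proof: the same joint-typicality (threshold) decoder, the same split into the event that the true pair fails a threshold (bounded via Markov's inequality on a power of the likelihood ratio and continuity of the R\'enyi divergence as $\renyiParam \to 1$) and the events that some wrong pair passes a threshold (bounded by change of measure plus a union bound), followed by Markov's inequality to get the tail bound. The only cosmetic difference is that you cite the direct theorem's atypicality estimate, which uses $\renyiParam > 1$, whereas here the deviation is in the opposite direction and one takes $\renyiParam < 1$ close to $1$, exactly as the paper does.
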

\begin{proof}
We define typical sets
\begin{align*}
\typicalSetIndex{\typicalityParam}{\codebookBlocklength}{1}
&:=
\{
  (\channelInOneAlphElement^\codebookBlocklength, \channelInTwoAlphElement^\codebookBlocklength, \channelOutAlphElement^\codebookBlocklength)
  :
  \informationDensityConditional{\channelInOneAlphElement^\codebookBlocklength}{\channelOutAlphElement^\codebookBlocklength}{\channelInTwoAlphElement^\codebookBlocklength}
  \geq
  \codebookBlocklength(\mutualInformationConditional{\channelInOne}{\channelOut}{\channelInTwo} - \typicalityParam)
\}
\\
\typicalSetIndex{\typicalityParam}{\codebookBlocklength}{2}
&:=
\{
  (\channelInOneAlphElement^\codebookBlocklength, \channelInTwoAlphElement^\codebookBlocklength, \channelOutAlphElement^\codebookBlocklength)
  :
  \informationDensityConditional{\channelInTwoAlphElement^\codebookBlocklength}{\channelOutAlphElement^\codebookBlocklength}{\channelInOneAlphElement^\codebookBlocklength}
  \geq
  \codebookBlocklength(\mutualInformationConditional{\channelInTwo}{\channelOut}{\channelInOne} - \typicalityParam)
\}
\\
\typicalSetIndex{\typicalityParam}{\codebookBlocklength}{3}
&:=
\{
  (\channelInOneAlphElement^\codebookBlocklength, \channelInTwoAlphElement^\codebookBlocklength, \channelOutAlphElement^\codebookBlocklength)
  :
  \informationDensity{\channelInOneAlphElement^\codebookBlocklength, \channelInTwoAlphElement^\codebookBlocklength}{\channelOutAlphElement^\codebookBlocklength}
  \geq
  \codebookBlocklength(\mutualInformation{\channelInOne, \channelInTwo}{\channelOut} - \typicalityParam)
\}
\\
\typicalSet{\typicalityParam}{\codebookBlocklength}
&:=
\typicalSetIndex{\typicalityParam}{\codebookBlocklength}{1}
\cap
\typicalSetIndex{\typicalityParam}{\codebookBlocklength}{2}
\cap
\typicalSetIndex{\typicalityParam}{\codebookBlocklength}{3}
\end{align*}
and a corresponding joint typicality decoder
\[
\codebookDecoder_\codebookBlocklength(\channelOutAlphElement^\codebookBlocklength) :=
\begin{cases}
  (\codewordIndex_1, \codewordIndex_2), &(\codebookOneWord{\codewordIndex_1}, \codebookTwoWord{\codewordIndex_2}, \channelOutAlphElement^\codebookBlocklength)
                                         \in \typicalSet{\typicalityParam}{\codebookBlocklength}
\\
                                        &\text{for unique $(\codewordIndex_1, \codewordIndex_2)$}
\\
  (1,1), &\text{otherwise.}
\end{cases}
\]
So in order to correctly decode a received message, we need the received sequence to be jointly typical with the encoded messages, while it must not be jointly typical with any other pair of codewords. We formalize these notions by defining the probabilities of three error events corresponding to the former being false as in (\ref{theorem:mac-coding-proof-errorprob-typ}) for $\indexForTypicalSet \in \{1,2,3\}$ and the probabilities of error events corresponding to the latter being false as in (\ref{theorem:mac-coding-proof-errorprob-atyp-1}), (\ref{theorem:mac-coding-proof-errorprob-atyp-2}) and (\ref{theorem:mac-coding-proof-errorprob-atyp-3}).
\begin{figure*}
\normalsize
\begin{align}
\label{theorem:mac-coding-proof-errorprob-typ}
\errorprob_{\mathrm{typ},\indexForTypicalSet}^{\codewordIndex_1, \codewordIndex_2}
&:=
    \Probability_{\channelOut^\codebookBlocklength} \Big(
      (\channelInOne^\codebookBlocklength, \channelInTwo^\codebookBlocklength, \channelOut^\codebookBlocklength)
      \notin
      \typicalSetIndex{\typicalityParam}{\codebookBlocklength}{\indexForTypicalSet}
      ~\big|~
      \channelInOne^\codebookBlocklength = \codebookOneWord{\codewordIndex_1}, \channelInTwo^\codebookBlocklength = \codebookTwoWord{\codewordIndex_2}
    \Big)
\\
\label{theorem:mac-coding-proof-errorprob-atyp-1}
\errorprob_{\mathrm{atyp},1}^{\codewordIndex_1, \codewordIndex_2}
&:=
    \Probability_{\channelOut^\codebookBlocklength} \Big(
      \exists \tilde{\codewordIndex}_1 \neq \codewordIndex_1 :
      (\codebookOneWord{\tilde{\codewordIndex}_1}, \channelInTwo^\codebookBlocklength, \channelOut^\codebookBlocklength)
      \in
      \typicalSetIndex{\typicalityParam}{\codebookBlocklength}{1}
      ~\big|~
      \channelInOne^\codebookBlocklength = \codebookOneWord{\codewordIndex_1}, \channelInTwo^\codebookBlocklength = \codebookTwoWord{\codewordIndex_2}
    \Big)
\\
\label{theorem:mac-coding-proof-errorprob-atyp-2}
\errorprob_{\mathrm{atyp},2}^{\codewordIndex_1, \codewordIndex_2}
&:=
    \Probability_{\channelOut^\codebookBlocklength} \Big(
      \exists \tilde{\codewordIndex}_2 \neq \codewordIndex_2 :
      (\channelInOne^\codebookBlocklength, \codebookOneWord{\tilde{\codewordIndex}_2}, \channelOut^\codebookBlocklength)
      \in
      \typicalSetIndex{\typicalityParam}{\codebookBlocklength}{2}
      ~\big|~
      \channelInOne^\codebookBlocklength = \codebookOneWord{\codewordIndex_1}, \channelInTwo^\codebookBlocklength = \codebookTwoWord{\codewordIndex_2}
    \Big)
\\
\label{theorem:mac-coding-proof-errorprob-atyp-3}
\errorprob_{\mathrm{atyp},3}^{\codewordIndex_1, \codewordIndex_2}
&:=
    \Probability_{\channelOut^\codebookBlocklength} \Big(
      \exists \tilde{\codewordIndex}_1 \neq \codewordIndex_1, \tilde{\codewordIndex}_2 \neq \codewordIndex_2 :
      (\codebookOneWord{\tilde{\codewordIndex}_1}, \codebookOneWord{\tilde{\codewordIndex}_2}, \channelOut^\codebookBlocklength)
      \in
      \typicalSetIndex{\typicalityParam}{\codebookBlocklength}{3}
      ~\big|~
      \channelInOne^\codebookBlocklength = \codebookOneWord{\codewordIndex_1}, \channelInTwo^\codebookBlocklength = \codebookTwoWord{\codewordIndex_2}
    \Big)
\end{align}
\hrulefill
\end{figure*}
We observe by the union bound and the definition of $\codebookDecoder_\codebookBlocklength$ that
\begin{align*}
\errorprob
\leq
\sum\limits_{\codewordIndex_1=1}^{\exp{\codebookBlocklength\codebookRateOne}}
\sum\limits_{\codewordIndex_2=1}^{\exp{\codebookBlocklength\codebookRateTwo}}
  &\exp(-\codebookBlocklength(\codebookRateOne+\codebookRateTwo))
  \\
  &\begin{aligned}
    \Big(
      &\errorprob_{\mathrm{typ},1}^{\codewordIndex_1, \codewordIndex_2}
      +
      \errorprob_{\mathrm{typ},2}^{\codewordIndex_1, \codewordIndex_2}
      +
      \errorprob_{\mathrm{typ},3}^{\codewordIndex_1, \codewordIndex_2}
      +
      \\
      &\errorprob_{\mathrm{atyp},1}^{\codewordIndex_1, \codewordIndex_2}
      +
      \errorprob_{\mathrm{atyp},2}^{\codewordIndex_1, \codewordIndex_2}
      +
      \errorprob_{\mathrm{atyp},3}^{\codewordIndex_1, \codewordIndex_2}
    \Big)
  \end{aligned}
\end{align*}
and thus, taking into cosideration that since the codewords are identically distributed, $\errorprob_{\text{typ},\indexForTypicalSet}^{\codewordIndex_1, \codewordIndex_2}$ and $\errorprob_{\text{atyp},\indexForTypicalSet}^{\codewordIndex_1, \codewordIndex_2}$ have expectations that do not depend on $\codewordIndex_1$ or $\codewordIndex_2$, we get
\begin{alignat*}{4}
&\Expectation_{\codebookOne, \codebookTwo} \errorprob
\leq
&~&
\Expectation_{\codebookOne, \codebookTwo} \errorprob_{\mathrm{typ},1}^{1,1}
&+&
\Expectation_{\codebookOne, \codebookTwo} \errorprob_{\mathrm{typ},2}^{1,1}
&+&
\Expectation_{\codebookOne, \codebookTwo} \errorprob_{\mathrm{typ},2}^{1,1}
\\
&~
&+&
\Expectation_{\codebookOne, \codebookTwo} \errorprob_{\mathrm{atyp},1}^{1,1}
&+&
\Expectation_{\codebookOne, \codebookTwo} \errorprob_{\mathrm{atyp},2}^{1,1}
&+&
\Expectation_{\codebookOne, \codebookTwo} \errorprob_{\mathrm{atyp},3}^{1,1}
\end{alignat*}

We next bound each of these summands above by $\exp(-\codebookBlocklength\proofconstantOne)$ for sufficiently large $\codebookBlocklength$ and sufficiently small but positive $\proofconstantOne$.

To this end, we observe
\begin{align}
&\begin{aligned}
\hphantom{={}}
\Expectation_{\codebookOne, \codebookTwo} \errorprob_{\mathrm{typ},1}^{1,1}
\end{aligned}
\\
&\begin{aligned}
=
\Probability_{\channelInOne^\codebookBlocklength, \channelInTwo^\codebookBlocklength, \channelOut^\codebookBlocklength} \Big(
  (\channelInOne^\codebookBlocklength, \channelInTwo^\codebookBlocklength, \channelOut^\codebookBlocklength)
  \notin
  \typicalSetIndex{\typicalityParam}{\codebookBlocklength}{1}
\Big)
\end{aligned}
\\
\label{proof:mac-coding-typical-error-chernoffprep}
&\begin{aligned}
=
\Probability_{\channelInOne^\codebookBlocklength, \channelInTwo^\codebookBlocklength, \channelOut^\codebookBlocklength} \Bigg(
  &\left(
    \frac{\channelpmf_{\channelOut^\codebookBlocklength | \channelInOne^\codebookBlocklength, \channelInTwo^\codebookBlocklength}
          (\channelOut^\codebookBlocklength | \channelInOne^\codebookBlocklength, \channelInTwo^\codebookBlocklength)}
         {\channelpmf_{\channelOut^\codebookBlocklength | \channelInTwo^\codebookBlocklength}
          (\channelOut^\codebookBlocklength | \channelInTwo^\codebookBlocklength)}
  \right)^{\renyiParam-1}
  \\
  &>
  \exp(\codebookBlocklength(\renyiParam-1)(\mutualInformationConditional{\channelInOne}{\channelOut}{\channelInTwo}-\typicalityParam))
\Bigg)
\end{aligned}
\\
\label{proof:mac-coding-typical-error-markov}
&\begin{aligned}
\leq
&\Expectation_{\channelInOne^\codebookBlocklength, \channelInTwo^\codebookBlocklength, \channelOut^\codebookBlocklength} \left(
  \left(
    \frac{\channelpmf_{\channelOut^\codebookBlocklength | \channelInOne^\codebookBlocklength, \channelInTwo^\codebookBlocklength}
          (\channelOut^\codebookBlocklength | \channelInOne^\codebookBlocklength, \channelInTwo^\codebookBlocklength)}
         {\channelpmf_{\channelOut^\codebookBlocklength | \channelInTwo^\codebookBlocklength}
          (\channelOut^\codebookBlocklength | \channelInTwo^\codebookBlocklength)}
  \right)^{\renyiParam-1}
\right)
\\
&\cdot \exp(-\codebookBlocklength(\renyiParam-1)(\mutualInformationConditional{\channelInOne}{\channelOut}{\channelInTwo}-\typicalityParam))
\end{aligned}
\\
\label{proof:mac-coding-typical-error-renyidiv}
&\begin{aligned}
=
\exp\Big(
  &\codebookBlocklength
  (\renyiParam-1)
  \\
  \cdot
  &\left(
    \renyidiv{\renyiParam}{\Probability_{\channelInOne, \channelInTwo, \channelOut}}
                          {\Probability_{\channelInOne | \channelInTwo}\Probability_{\channelOut | \channelInTwo}\Probability_{\channelInTwo}}
    -
    \mutualInformationConditional{\channelInOne}{\channelOut}{\channelInTwo}
    +
    \typicalityParam
  \right)
\Big)
\end{aligned}
\\
\label{proof:mac-coding-typical-error-final}
&\begin{aligned}
\leq
\exp(-\codebookBlocklength\proofconstantOne)
\end{aligned}
\end{align}
where (\ref{proof:mac-coding-typical-error-chernoffprep}) follows for all $\renyiParam < 1$ by substituting the definition of $\typicalSetIndex{\typicalityParam}{\codebookBlocklength}{1}$ and reformulating, (\ref{proof:mac-coding-typical-error-markov}) follows by Markov's inequality, (\ref{proof:mac-coding-typical-error-renyidiv}) follows by reformulating and substituting the definition of Rényi divergence, and (\ref{proof:mac-coding-typical-error-final}) follows for sufficiently small but positive $\proofconstantOne$ by taking $\renyiParam < 1$ close enough to $1$ such that the exponent in (\ref{proof:mac-coding-typical-error-renyidiv}) is negative. This is possible by observing that the Rényi divergence approaches the Kullback-Leibler divergence for $\renyiParam \rightarrow 1$ and is finite and thus continuous since our alphabets are finite~\cite{RenyiDiv}.

Moreover, we have
\begin{align}
&\begin{aligned}
\hphantom{={}}
\Expectation_{\codebookOne, \codebookTwo} \errorprob_{\mathrm{atyp},1}^{1,1}
\end{aligned}
\\
\label{proof:mac-coding-atypical-without-codebook}
&\begin{aligned}
\leq
\exp(\codebookBlocklength\codebookRateOne)
\Probability_{\channelInTwo^\codebookBlocklength, \channelOut^\codebookBlocklength}\Probability_{\channelInOne^\codebookBlocklength} \left(
  (\channelInOne^\codebookBlocklength, \channelInTwo^\codebookBlocklength, \channelOut^\codebookBlocklength)
  \in
  \typicalSetIndex{\typicalityParam}{\codebookBlocklength}{1}
\right)
\end{aligned}
\\
&\begin{aligned}
=
\exp(\codebookBlocklength\codebookRateOne)
&\sum_{\channelInOneAlphElement^\codebookBlocklength \in \channelInOneAlph^\codebookBlocklength}
\sum_{\channelInTwoAlphElement^\codebookBlocklength \in \channelInTwoAlph^\codebookBlocklength}
\sum_{\channelOutAlphElement^\codebookBlocklength \in \channelOutAlph^\codebookBlocklength}
\channelpmf_{\channelInOne^\codebookBlocklength}(\channelInOneAlphElement^\codebookBlocklength)
\channelpmf_{\channelInTwo^\codebookBlocklength, \channelOut^\codebookBlocklength}(\channelInTwoAlphElement^\codebookBlocklength, \channelOutAlphElement^\codebookBlocklength)
\\
&\cdot \indicator{(\channelInOneAlphElement^\codebookBlocklength,\channelInTwoAlphElement^\codebookBlocklength,\channelOutAlphElement^\codebookBlocklength)
           \in
           \typicalSetIndex{\typicalityParam}{\codebookBlocklength}{1}
          }
\end{aligned}
\\
\label{proof:mac-coding-atypical-reformulation}
&\begin{aligned}
=
\exp(\codebookBlocklength\codebookRateOne)
&\sum_{\channelInOneAlphElement^\codebookBlocklength \in \channelInOneAlph^\codebookBlocklength}
\sum_{\channelInTwoAlphElement^\codebookBlocklength \in \channelInTwoAlph^\codebookBlocklength}
\sum_{\channelOutAlphElement^\codebookBlocklength \in \channelOutAlph^\codebookBlocklength}
\channelpmf_{\channelInOne^\codebookBlocklength, \channelInTwo^\codebookBlocklength, \channelOut^\codebookBlocklength}(\channelInOneAlphElement^\codebookBlocklength, \channelInTwoAlphElement^\codebookBlocklength, \channelOutAlphElement^\codebookBlocklength)
\\
&\cdot \exp(-\informationDensityConditional{\channelInOneAlphElement^\codebookBlocklength}{\channelOutAlphElement^\codebookBlocklength}{\channelInTwoAlphElement^\codebookBlocklength})
\indicator{(\channelInOneAlphElement^\codebookBlocklength,\channelInTwoAlphElement^\codebookBlocklength,\channelOutAlphElement^\codebookBlocklength)
           \in
           \typicalSetIndex{\typicalityParam}{\codebookBlocklength}{1}
          }
\end{aligned}
\\
\label{proof:mac-coding-atypical-mutual-information}
&\begin{aligned}
\leq
\exp(\codebookBlocklength(\codebookRateOne - \mutualInformationConditional{\channelInOne}{\channelOut}{\channelInTwo} + \typicalityParam)),
\end{aligned}
\\
\label{proof:mac-coding-atypical-final}
&\begin{aligned}
\leq
\exp(-\codebookBlocklength\proofconstantOne),
\end{aligned}
\end{align}
where (\ref{proof:mac-coding-atypical-without-codebook}) is an application of the union bound, (\ref{proof:mac-coding-atypical-reformulation}) follows by reformulating and substituting the definition of information density, (\ref{proof:mac-coding-atypical-mutual-information}) follows by the definition of $\typicalSetIndex{\typicalityParam}{\codebookBlocklength}{1}$ and (\ref{proof:mac-coding-atypical-final}) follows for all sufficiently small $\proofconstantOne > 0$ as long as $\typicalityParam$ is so small that $\codebookRateOne - \mutualInformationConditional{\channelInOne}{\channelOut}{\channelInTwo} + \typicalityParam < 0$.

We skip the calculations showing that
\[\Expectation_{\codebookOne, \codebookTwo} \errorprob_{\mathrm{typ},\indexForTypicalSet}^{1,1},~ \Expectation_{\codebookOne, \codebookTwo} \errorprob_{\mathrm{atyp},\indexForTypicalSet}^{1,1} \leq \exp(-\codebookBlocklength\proofconstantOne)\]
for $\indexForTypicalSet \in \{2,3\}$, sufficiently large $\codebookBlocklength$ and sufficiently small $\proofconstantOne > 0$, since these statements follow in a straightforward manner by calculations very similar to the ones shown above.

Putting everything together and applying Markov's inequality, we get that for sufficiently large $\codebookBlocklength$, sufficiently small $\proofconstantOne > 0$, $\finalconstOne < \proofconstantOne$, and $\finalconstTwo < \proofconstantOne - \finalconstOne$,
\begin{align*}
\Probability_{\codebookOne, \codebookTwo} \left(
  \errorprob
  >
  \exp(-\finalconstOne\codebookBlocklength)
\right)
&\leq
\Expectation_{\codebookOne, \codebookTwo} (\errorprob)
\exp(\finalconstOne\codebookBlocklength)
\\
&\leq
6 \exp(-\codebookBlocklength(\proofconstantOne - \finalconstOne))
\\
&\leq
\exp(-\finalconstTwo\codebookBlocklength),
\end{align*}
concluding the proof of the theorem.
\end{proof}

\subsection{Achievable Secrecy Rates under Distinguishing Security}
\label{sec:secrecy-achieve}
\begin{theorem}
\label{theorem:distinguishing-security}
Suppose $\channel = (\channelLegit, \channelWiretapper)$ is a wiretap channel, $\timeSharingRV$ is a random variable, $\channelpmf_{\channelInOne|\timeSharingRV}$ and $\channelpmf_{\channelInTwo|\timeSharingRV}$ are input distributions and $\codebookRateOne$ and $\codebookRateTwo$ are rates satisfying
\begin{align}
\label{theorem:distinguishing-security-rate-one}
\mutualInformationConditional{\channelInOne}{\channelOutWiretapper}{\timeSharingRV}
&<
\mutualInformationConditional{\channelInOne}{\channelOutLegit}{\channelInTwo,\timeSharingRV}
-
\codebookRateOne
\\
\label{theorem:distinguishing-security-rate-two}
\mutualInformationConditional{\channelInTwo}{\channelOutWiretapper}{\timeSharingRV}
&<
\mutualInformationConditional{\channelInTwo}{\channelOutLegit}{\channelInOne,\timeSharingRV}
-
\codebookRateTwo
\\
\label{theorem:distinguishing-security-rate-both}
\mutualInformationConditional{\channelInOne,\channelInTwo}{\channelOutWiretapper}{\timeSharingRV}
&<
\mutualInformationConditional{\channelInOne,\channelInTwo}{\channelOutLegit}{\timeSharingRV}
-
(\codebookRateOne+\codebookRateTwo).
\end{align}
Then there exist randomness rates $\codebookRandRateOne, \codebookRandRateTwo$, decoding functions
\begin{align*}
\codebookDecoder_\codebookBlocklength: \channelOutAlphLegit^\codebookBlocklength \rightarrow
  &\{1,\dots,\exp(\codebookBlocklength\codebookRateOne)\} \times \{1, \dots, \exp(\codebookBlocklength\codebookRandRateOne)\} \times \\
  &\{1,\dots,\exp(\codebookBlocklength\codebookRateTwo)\} \times \{1, \dots, \exp(\codebookBlocklength\codebookRandRateTwo)\}
\end{align*}
and $\finalconstOne, \finalconstTwo > 0$, such that for sufficiently large $\codebookBlocklength$, the probability $\Probability_\mathrm{bad}$ under the wiretap codebook distribution of drawing a bad pair of codebooks is less than $\exp(-\codebookBlocklength\finalconstTwo)$. A pair of codebooks is considered to be bad if it does not satisfy all of the following:
\begin{enumerate}
 \item \emph{(Distinguishing Security)}. For any $\codewordIndex_1, \codewordIndex_2$, we have
\[
\totalvariation{
  \codebookpmf_{\channelOutWiretapper^\codebookBlocklength | \codebookOne(\codewordIndex_1,\cdot), \codebookTwo(\codewordIndex_2, \cdot)} - \channelpmf_{\channelOut^\codebookBlocklength}
}
\leq
\exp(-\finalconstOne\codebookBlocklength),
\]
i.e. the wiretap codebooks achieve distinguishing security and the variational distance vanishes exponentially.
\item \emph{(Average decoding error)}. We have $\errorprob \leq \exp(-\codebookBlocklength\finalconstOne)$, where
\begin{align*}
\errorprob
:=
&\sum\limits_{\codewordIndex_1=1}^{\exp{\codebookBlocklength\codebookRateOne}}
\sum\limits_{\codewordIndex_2=1}^{\exp{\codebookBlocklength\codebookRateTwo}}
\sum\limits_{\randomnessIndex_1=1}^{\exp{\codebookBlocklength\codebookRandRateOne}}
\sum\limits_{\randomnessIndex_2=1}^{\exp{\codebookBlocklength\codebookRandRateTwo}}
\\
  &\exp(-\codebookBlocklength(\codebookRateOne+\codebookRateTwo+\codebookRandRateOne+\codebookRandRateTwo))
  \\
  &
  \begin{aligned}
    \Probability_{\channelOut^\codebookBlocklength} \Big(
          &\codebookDecoder_\codebookBlocklength(\channelOut^\codebookBlocklength) \neq (\codewordIndex_1, \randomnessIndex_1, \codewordIndex_2, \randomnessIndex_2)
          ~\big|~
          \\
          &\channelInOne^\codebookBlocklength = \codebookOneWord{\codewordIndex_1, \randomnessIndex_1},
          \channelInTwo^\codebookBlocklength = \codebookTwoWord{\codewordIndex_2, \randomnessIndex_2}
        \Big).
  \end{aligned}
\end{align*}
is the average decoding error; i.e. the legitimate receiver is able to (on average) reconstruct both the messages and the randomness used at the transmitters with exponentially vanishing error probability.
\end{enumerate}
\end{theorem}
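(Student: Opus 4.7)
The plan is to use a nested codebook construction that decouples secrecy from reliability: each message $\codewordIndex_\txIndex$ indexes a sub-codebook of $\exp(\codebookBlocklength\codebookRandRate_\txIndex)$ randomness codewords, all components drawn i.i.d.\ from $\channelpmf_{\channelInOne | \timeSharingRV}$ and $\channelpmf_{\channelInTwo | \timeSharingRV}$ conditioned on a fixed time-sharing sequence, and the transmitter picks $\randomnessIndex_\txIndex$ uniformly to select the actual codeword $\codebookOne(\codewordIndex_1, \randomnessIndex_1)$ or $\codebookTwo(\codewordIndex_2, \randomnessIndex_2)$. With this structure, the induced wiretapper distribution for each fixed message pair $(\codewordIndex_1, \codewordIndex_2)$ depends only on the sub-codebook $(\codebookOne(\codewordIndex_1, \cdot), \codebookTwo(\codewordIndex_2, \cdot))$, which is exactly a MAC resolvability codebook of rates $(\codebookRandRateOne, \codebookRandRateTwo)$ for $\channelWiretapper$; while the legitimate receiver's decoding task becomes MAC decoding at the combined rate pair $(\codebookRateOne + \codebookRandRateOne, \codebookRateTwo + \codebookRandRateTwo)$ for $\channelLegit$.

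First I would pick $\codebookRandRateOne, \codebookRandRateTwo \geq 0$ that simultaneously exceed the three resolvability thresholds $\mutualInformationConditional{\channelInOne}{\channelOutWiretapper}{\timeSharingRV}$, $\mutualInformationConditional{\channelInTwo}{\channelOutWiretapper}{\timeSharingRV}$, and (for the sum) $\mutualInformationConditional{\channelInOne, \channelInTwo}{\channelOutWiretapper}{\timeSharingRV}$, while the combined rates $\codebookRateOne + \codebookRandRateOne$, $\codebookRateTwo + \codebookRandRateTwo$, and $\codebookRateOne + \codebookRateTwo + \codebookRandRateOne + \codebookRandRateTwo$ stay strictly below $\mutualInformationConditional{\channelInOne}{\channelOutLegit}{\channelInTwo, \timeSharingRV}$, $\mutualInformationConditional{\channelInTwo}{\channelOutLegit}{\channelInOne, \timeSharingRV}$, and $\mutualInformationConditional{\channelInOne, \channelInTwo}{\channelOutLegit}{\timeSharingRV}$, respectively. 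Feasibility of this linear system in $(\codebookRandRateOne, \codebookRandRateTwo)$ is precisely what assumptions (\ref{theorem:distinguishing-security-rate-one})--(\ref{theorem:distinguishing-security-rate-both}) encode, since each upper bound strictly exceeds its matching lower bound.

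Then I would apply Theorem~\ref{theorem:soft-covering-two-transmitters-convex} to $\channelWiretapper$ for every fixed $(\codewordIndex_1, \codewordIndex_2)$: the probability that the corresponding sub-codebook produces a wiretapper output further than $\exp(-\finalconstOne\codebookBlocklength)$ in variational distance from $\channelpmf_{\channelOutWiretapper^\codebookBlocklength}$ is doubly exponentially small in $\codebookBlocklength$. A union bound over the only singly exponentially many pairs $(\codewordIndex_1, \codewordIndex_2)$ preserves doubly exponential decay, so with overwhelming probability every message pair simultaneously satisfies the resolvability guarantee; item~1 then follows by the triangle inequality with the common reference distribution $\channelpmf_{\channelOutWiretapper^\codebookBlocklength}$. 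In parallel, I would invoke Theorem~\ref{theorem:mac-coding} on the full codebook seen as a MAC code of rates $(\codebookRateOne + \codebookRandRateOne, \codebookRateTwo + \codebookRandRateTwo)$ for $\channelLegit$, obtaining an exponentially small failure probability for item~2. A final union bound between the doubly exponentially small and the exponentially small bad events leaves a failure probability at most $\exp(-\finalconstTwo \codebookBlocklength)$ for sufficiently large $\codebookBlocklength$.

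The main obstacle is the explicit presence of the time-sharing variable $\timeSharingRV$, which Theorems~\ref{theorem:soft-covering-two-transmitters-convex} and~\ref{theorem:mac-coding} do not handle directly. The cleanest resolution is to fix a deterministic sequence $\timeSharingAlphElement^\codebookBlocklength \in \timeSharingAlph^\codebookBlocklength$ whose empirical type approximates $\codebookpmf_\timeSharingRV$ arbitrarily well, partition the $\codebookBlocklength$ coordinates according to the value of $\timeSharingAlphElement_\blockIndex$, and apply both theorems block-wise with the appropriate conditional input distributions $\channelpmf_{\channelInOne | \timeSharingRV = \timeSharingAlphElement}$ and $\channelpmf_{\channelInTwo | \timeSharingRV = \timeSharingAlphElement}$; a standard time-sharing argument between finitely many realizations of $\timeSharingRV$ then recovers the full conditional mutual information expressions and justifies the rate inequalities above.
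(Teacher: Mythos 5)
Your proposal follows essentially the same route as the paper: choose the randomness rates in the feasible polytope determined by (\ref{theorem:distinguishing-security-rate-one})--(\ref{theorem:distinguishing-security-rate-both}), view the wiretap codebook both as a MAC code of rates $(\codebookRateOne+\codebookRandRateOne,\codebookRateTwo+\codebookRandRateTwo)$ for $\channelLegit$ (Theorem~\ref{theorem:mac-coding}) and as exponentially many resolvability codebooks of rates $(\codebookRandRateOne,\codebookRandRateTwo)$ for $\channelWiretapper$ (Theorem~\ref{theorem:soft-covering-two-transmitters-convex}), and close with a union bound that exploits the doubly exponential decay of the resolvability failure probability; the time-sharing treatment you sketch is the standard argument the paper invokes implicitly. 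The one place where your sketch is thinner than the paper is the feasibility claim: for a two-dimensional system with lower and upper bounds on $\codebookRandRateOne$, $\codebookRandRateTwo$ \emph{and} on their sum, it does not suffice that each upper bound exceed its matching lower bound --- the cross constraints (sum of the individual lower bounds versus the sum upper bound, and the sum lower bound versus the sum of the individual upper bounds) must also be checked, and these follow from (\ref{theorem:distinguishing-security-rate-both}) only together with the conditional independence of $\channelInOne$ and $\channelInTwo$ given $\timeSharingRV$, which is exactly the bookkeeping the paper carries out via its explicit choice of $\codebookRandRateOne$ first and $\codebookRandRateTwo$ second.
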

\begin{proof}
Choose $\codebookRandRateOne$ such that
\begin{align}
\label{theorem:distinguishing-security-randrate-choice-one}
\max(\codebookRandRateOneLower{1}, \codebookRandRateOneLower{2})
< \codebookRandRateOne <
\min(\codebookRandRateOneUpper{1}, \codebookRandRateOneUpper{2}),
\end{align}
where
\begin{align*}
\codebookRandRateOneLower{1} &= \mutualInformationConditional{\channelInOne}{\channelOutWiretapper}{\timeSharingRV}
\\
\codebookRandRateOneLower{2}
&=
-\mutualInformationConditional{\channelInTwo}{\channelOutLegit}{\channelInOne,\timeSharingRV} + \codebookRateTwo + \mutualInformationConditional{\channelInOne,\channelInTwo}{\channelOutWiretapper}{\timeSharingRV}
\\
\codebookRandRateOneUpper{1} &= \mutualInformationConditional{\channelInOne}{\channelOutLegit}{\channelInTwo,\timeSharingRV} - \codebookRateOne
\\
\codebookRandRateOneUpper{2}
&=
-\mutualInformationConditional{\channelInTwo}{\channelOutWiretapper}{\timeSharingRV} + \mutualInformationConditional{\channelInOne,\channelInTwo}{\channelOutLegit}{\timeSharingRV} - (\codebookRateOne + \codebookRateTwo).
\end{align*}
To see that this is possible, we note that $\codebookRandRateOneLower{1} < \codebookRandRateOneUpper{1}$ is equivalent to (\ref{theorem:distinguishing-security-rate-one}), $\codebookRandRateOneLower{1} < \codebookRandRateOneUpper{2}$ and $\codebookRandRateOneLower{2} < \codebookRandRateOneUpper{1}$ both follow from (\ref{theorem:distinguishing-security-rate-both}), and $\codebookRandRateOneLower{2} < \codebookRandRateOneUpper{2}$ follows from (\ref{theorem:distinguishing-security-rate-two}) and (\ref{theorem:distinguishing-security-rate-both}).

Then, choose $\codebookRandRateTwo$ such that
\begin{align}
\label{theorem:distinguishing-security-randrate-choice-two}
\max(\codebookRandRateTwoLower{1}, \codebookRandRateTwoLower{2})
< \codebookRandRateTwo <
\min(\codebookRandRateTwoUpper{1}, \codebookRandRateTwoUpper{2}),
\end{align}
where
\begin{align*}
\codebookRandRateTwoLower{1} &= \mutualInformationConditional{\channelInOne,\channelInTwo}{\channelOutWiretapper}{\timeSharingRV} - \codebookRandRateOne
\\
\codebookRandRateTwoLower{2}
&=
\mutualInformationConditional{\channelInTwo}{\channelOutWiretapper}{\timeSharingRV}
\\
\codebookRandRateTwoUpper{1} &= \mutualInformationConditional{\channelInTwo}{\channelOutLegit}{\channelInOne,\timeSharingRV} - \codebookRateTwo
\\
\codebookRandRateTwoUpper{2}
&=
\mutualInformationConditional{\channelInOne, \channelInTwo}{\channelOutLegit}{\timeSharingRV} - \codebookRandRateOne - (\codebookRateOne + \codebookRateTwo).
\end{align*}
To see that this is possible, we note that $\codebookRandRateTwoLower{1} < \codebookRandRateTwoUpper{1}$ is equivalent to $\codebookRandRateOne > \codebookRandRateOneLower{2}$, which we have by (\ref{theorem:distinguishing-security-randrate-choice-one}), $\codebookRandRateTwoLower{1} < \codebookRandRateTwoUpper{2}$ is equivalent to (\ref{theorem:distinguishing-security-rate-both}), $\codebookRandRateTwoLower{2} < \codebookRandRateTwoUpper{1}$ is equivalent to (\ref{theorem:distinguishing-security-rate-two}) and $\codebookRandRateTwoLower{2} < \codebookRandRateTwoUpper{2}$ is equivalent to $\codebookRandRateOne < \codebookRandRateOneUpper{2}$, which we also have by (\ref{theorem:distinguishing-security-randrate-choice-one}).

Summarizing, with these choices, $\codebookRandRateOne, \codebookRandRateTwo$ have the properties
\begin{alignat*}{3}
\mutualInformationConditional{\channelInOne}{\channelOutWiretapper}{\timeSharingRV}
&~<~&
\codebookRandRateOne
&~<~&
\mutualInformationConditional{\channelInOne}{\channelOutLegit}{\channelInTwo, \timeSharingRV} &- \codebookRateOne
\\
\mutualInformationConditional{\channelInTwo}{\channelOutWiretapper}{\timeSharingRV}
&~<~&
\codebookRandRateTwo
&~<~&
\mutualInformationConditional{\channelInTwo}{\channelOutLegit}{\channelInOne,\timeSharingRV} &- \codebookRateTwo.
\\
\mutualInformationConditional{\channelInOne, \channelInTwo}{\channelOutWiretapper}{\timeSharingRV}
&~<~&
\codebookRandRateOne + \codebookRandRateTwo
&~<~&
\mutualInformationConditional{\channelInOne, \channelInTwo}{\channelOutLegit}{\timeSharingRV} &- \codebookRateTwo.
\end{alignat*}
We can look at the thus defined pair of random wiretap codebooks in two ways. On the one hand, they can be understood as a pair of codebooks of rates $\codebookRateOne + \codebookRandRateOne$ and $\codebookRateTwo + \codebookRandRateTwo$. They fulfill all the requirements of Theorem~\ref{theorem:mac-coding} (noting that by time sharing, the theorem is also valid for the convex closure specified in the theorem statement), and so we get $\finalconstOne', \finalconstTwo' > 0$ such that
\[
\Probability_{\codebookOne, \codebookTwo} (\errorprob > \exp(\codebookBlocklength\finalconstOne')) \leq \exp(\codebookBlocklength\finalconstTwo').
\]
On the other hand, we can look at them as a collection $(\codebookOne(\codewordIndex_1, \cdot))_{\codewordIndex_1=1}^{\exp(\codebookBlocklength\codebookRateOne)}$ of codebooks of rate $\codebookRandRateOne$ and $(\codebookTwo(\codewordIndex_2, \cdot))_{\codewordIndex_2=1}^{\exp(\codebookBlocklength\codebookRateTwo)}$ of codebooks of rate $\codebookRandRateTwo$, respectively. Therefore, each pair $(\codewordIndex_1, \codewordIndex_2)$ corresponds to a pair of codebooks fulfilling the requirements of Corollary~\ref{theorem:soft-covering-two-transmitters-convex} (again noting the possibility of time sharing), and so we get $\finalconstOne'', \finalconstTwo'' > 0$ such that for all $\codewordIndex_1, \codewordIndex_2$
\begin{multline*}
\Probability_{\codebookOne, \codebookTwo} \left(
  \totalvariation{
    \codebookpmf_{\channelOut^\codebookBlocklength | \codebookOne(\codewordIndex_1, \cdot), \codebookTwo(\codewordIndex_2, \cdot)} - \channelpmf_{\channelOut^\codebookBlocklength}
  }
  >
  \exp(-\finalconstOne''\codebookBlocklength)
\right)
\\
\leq
\exp\left(-\exp\left(\finalconstTwo''\codebookBlocklength\right)\right).
\end{multline*}
Choosing $\finalconstOne := \min(\finalconstOne', \finalconstOne'')$, we can apply the union bound to get
\begin{align*}
&\hphantom{{}={}}
\Probability_\mathrm{bad}
\\
&\leq
\Probability_{\codebookOne, \codebookTwo} (\errorprob > \exp(\codebookBlocklength\finalconstOne'))
+
\\
&\hphantom{{}={}}
\begin{aligned}
\sum\limits_{\codewordIndex_1=1}^{\exp(\codebookBlocklength\codebookRateOne)}
\sum\limits_{\codewordIndex_2=1}^{\exp(\codebookBlocklength\codebookRateTwo)}
\Probability_{\codebookOne, \codebookTwo} \Big(
  &\totalvariation{
    \codebookpmf_{\channelOut^\codebookBlocklength | \codebookOne(\codewordIndex_1, \cdot), \codebookTwo(\codewordIndex_2, \cdot)} - \channelpmf_{\channelOut^\codebookBlocklength}
  }
  \\
  &>
  \exp(-\finalconstOne''\codebookBlocklength)
\Big)
\end{aligned}
\\
&\leq
\exp(\finalconstTwo'\codebookBlocklength)
+
\exp(\codebookBlocklength(\codebookRateOne + \codebookRateTwo) - \exp(\finalconstTwo''\codebookBlocklength))
\\
&\leq
\exp(\finalconstTwo\codebookBlocklength)
\end{align*}
for sufficiently large $\codebookBlocklength$, as long as we choose $\finalconstTwo < \finalconstTwo'$.
\end{proof}

\bibliographystyle{plain}
\bibliography{references}

\clearpage
\onecolumn
\appendix

%

\begin{align}
\label{proof:soft-covering-two-transmitters-totalprob1}
&
\begin{aligned}
  \tilde{\Probability}
  =
  \sum\limits_{\hat{\codebookTwo}}
    \Probability_{\codebookTwo}(\codebookTwo = \hat{\codebookTwo})
    \Probability_{\codebookOne, \codebookTwo}\left(
    \vphantom{\sum\limits_{\codewordIndex_1=1}^{\exp(\codebookBlocklength\codebookRateOne)}}
    \right.
      &\sum\limits_{\codewordIndex_2=1}^{\exp(\codebookBlocklength\codebookRateTwo)}
        \exp(-\codebookBlocklength\codebookRateTwo)
        \frac{\channelpmf_{\channelOut^\codebookBlocklength | \channelInTwo^\codebookBlocklength}(\channelOutAlphElement^\codebookBlocklength | \codebookTwoWord{\codewordIndex_2})}
            {\channelpmf_{\channelOut^\codebookBlocklength}(\channelOutAlphElement^\codebookBlocklength)}
        \indicator{(\codebookTwoWord{\codewordIndex_2}, \channelOutAlphElement^\codebookBlocklength) \in \typicalSetIndex{\typicalityParam}{\codebookBlocklength}{2}}
        \\
        &\left. \vphantom{\sum\limits_{\codewordIndex_1=1}^{\exp(\codebookBlocklength\codebookRateOne)}}
        \totvarTypicalOne{\codebookTwoWord{\codewordIndex_2}}{\channelOutAlphElement^\codebookBlocklength}
        >
        1 + 3\exp(-\codebookBlocklength\proofconstantOne)
        ~|~
        \codebookOne \in \codebookSet_{\channelOutAlphElement^\codebookBlocklength}, \codebookTwo = \hat{\codebookTwo}
  \right)
\end{aligned}
\\
&
\begin{aligned}
  \phantom{\tilde{\Probability}}
  \leq
  \label{proof:soft-covering-two-transmitters-lemmaapplication1}
  \sum\limits_{\hat{\codebookTwo}}
    \Probability_{\codebookTwo}(\codebookTwo = \hat{\codebookTwo})
    \Probability_{\codebookOne, \codebookTwo}\left(
      \sum\limits_{\codewordIndex_2=1}^{\exp(\codebookBlocklength\codebookRateTwo)}
        \exp(-\codebookBlocklength\codebookRateTwo)
        \frac{\channelpmf_{\channelOut^\codebookBlocklength | \channelInTwo^\codebookBlocklength}(\channelOutAlphElement^\codebookBlocklength | \codebookTwoWord{\codewordIndex_2})}
            {\channelpmf_{\channelOut^\codebookBlocklength}(\channelOutAlphElement^\codebookBlocklength)}
        \indicator{(\codebookTwoWord{\codewordIndex_2}, \channelOutAlphElement^\codebookBlocklength) \in \typicalSetIndex{\typicalityParam}{\codebookBlocklength}{2}}
        \right. \\
      \left. \vphantom{\sum\limits_{\codewordIndex_1=1}^{\exp(\codebookBlocklength\codebookRateOne)}}
      >
      \frac{1 + 3\exp(-\codebookBlocklength\proofconstantOne)}
          {1 + \exp(-\codebookBlocklength\proofconstantOne)}
      ~|~
      \codebookOne \in \codebookSet_{\channelOutAlphElement^\codebookBlocklength}, \codebookTwo = \hat{\codebookTwo}
    \right)
\end{aligned}
\\
&\phantom{\tilde{\Probability}}\leq
\label{proof:soft-covering-two-transmitters-totalprob2-and-bound}
\Probability_{\codebookTwo}\left(
  \sum\limits_{\codewordIndex_2=1}^{\exp(\codebookBlocklength\codebookRateTwo)}
    \exp(-\codebookBlocklength\codebookRateTwo)
    \frac{\channelpmf_{\channelOut^\codebookBlocklength | \channelInTwo^\codebookBlocklength}(\channelOutAlphElement^\codebookBlocklength | \codebookTwoWord{\codewordIndex_2})}
          {\channelpmf_{\channelOut^\codebookBlocklength}(\channelOutAlphElement^\codebookBlocklength)}
    \indicator{(\codebookTwoWord{\codewordIndex_2}, \channelOutAlphElement^\codebookBlocklength) \in \typicalSetIndex{\typicalityParam}{\codebookBlocklength}{2}}
  >
  1 + \exp(-\codebookBlocklength\proofconstantOne)
\right)
\\
&\phantom{\tilde{\Probability}}\leq
\label{proof:soft-covering-two-transmitters-lemmaapplication2}
\exp\left(
  -\frac{1}{3} \exp(-\codebookBlocklength (\mutualInformation{\channelInTwo}{\channelOut} + \typicalityParam + 2\proofconstantOne - \codebookRateTwo))
\right)
\end{align}
(\ref{proof:soft-covering-two-transmitters-totalprob1}) follows from the law of total probability,~(\ref{proof:soft-covering-two-transmitters-lemmaapplication1}) is a consequence of the condition $\codebookOne \in \codebookSet_{\channelOutAlphElement^\codebookBlocklength}$,~(\ref{proof:soft-covering-two-transmitters-totalprob2-and-bound}) results from an application of the law of total probability and the assumption that $\codebookBlocklength$ is sufficiently large such that $\exp(-\codebookBlocklength\proofconstantOne) \leq 1$. Finally,~(\ref{proof:soft-covering-two-transmitters-lemmaapplication2}) follows from Lemma~\ref{lemma:soft-covering-two-transmitters-typical} with $\generalrvOne=\channelInTwo$, $\generalrvThree=\channelOut$, $\generalrvTwo$ a deterministic random variable with only one possible realization and $\lemmaconst=\exp(-\codebookBlocklength\proofconstantOne)$.
\vspace{10pt}
\hrule

\begin{align}
\label{proof:soft-covering-two-transmitters-union-bound-start}
&\phantom{=}
\Probability_{\codebookOne, \codebookTwo} \left(
  \totalvariation{ \codebookpmf_{\channelOut^\codebookBlocklength | \codebookOne, \codebookTwo} - \channelpmf_{\channelOut^\codebookBlocklength}}
  >
  7\exp(-\codebookBlocklength\proofconstantOne)
\right)
\\
\label{proof:soft-covering-two-transmitters-union-bound-application}
&
\begin{aligned}
  \leq
  &\Probability_{\codebookOne, \codebookTwo}\left(
    \totvarAtypicalOne
    >
    2\exp(-\codebookBlocklength\proofconstantOne)
  \right)
  +
  \Probability_{\codebookOne, \codebookTwo}\left(
    \totvarAtypicalTwo
    >
    2\exp(-\codebookBlocklength\proofconstantOne)
  \right)
  \\
  &+
  \sum\limits_{\channelOutAlphElement^\codebookBlocklength \in \channelOutAlph^\codebookBlocklength} \left(
    \Probability_{\codebookOne}\left(
      \codebookOne \notin \codebookSet_{\channelOutAlphElement^\codebookBlocklength}
    \right)
    +
    \Probability_{\codebookOne, \codebookTwo}\left(
      \totvarTypical{\channelOutAlphElement^\codebookBlocklength}
      >
      1 + 3\exp(-\codebookBlocklength\proofconstantOne)
      ~|~
      \codebookOne \in \codebookSet_{\channelOutAlphElement^\codebookBlocklength}
    \right)
  \right)
\end{aligned}
\\
\label{proof:soft-covering-two-transmitters-union-bound-substitutions}
&
\begin{aligned}
  \leq
  &2\exp(-2\exp(\codebookBlocklength(\min(\codebookRateOne,\codebookRateTwo)-2\proofconstantOne)))
  +
  \cardinality{\channelOutAlph}^\codebookBlocklength
  \cardinality{\channelInTwoAlph}^\codebookBlocklength
  \exp\left(
    -\frac{1}{3} \exp(-\codebookBlocklength (\mutualInformationConditional{\channelInOne}{\channelOut}{\channelInTwo} + \typicalityParam + 2\proofconstantOne - \codebookRateOne))
  \right) \\
  &+
  \cardinality{\channelOutAlph}^\codebookBlocklength
  \exp\left(
    -\frac{1}{3} \exp(-\codebookBlocklength (\mutualInformation{\channelInTwo}{\channelOut} + \typicalityParam + 2\proofconstantOne - \codebookRateTwo))
  \right)
\end{aligned}
\end{align}

(\ref{proof:soft-covering-two-transmitters-union-bound-application}) follows from~(\ref{proof:soft-covering-two-transmitters-typical-split}) and the union bound. (\ref{proof:soft-covering-two-transmitters-union-bound-substitutions}) is a substitution of~(\ref{proof:soft-covering-two-transmitters-atypical-bound-1}), (\ref{proof:soft-covering-two-transmitters-atypical-bound-2}), (\ref{proof:soft-covering-two-transmitters-typical-bound}) and~(\ref{proof:soft-covering-two-transmitters-lemmaapplication2}).
\vspace{10pt}
\hrule

\begin{align}
\label{proof:soft-covering-two-transmitters-second-order-union-bound-start}
&\phantom{=}
\Probability_{\codebookOne, \codebookTwo} \left( \totalvariation{ \codebookpmf_{\channelOut^\codebookBlocklength | \codebookOne, \codebookTwo} - \channelpmf_{\channelOut^\codebookBlocklength}}
>
(\secondOrderAtypicalProbability{2} + \secondOrderAtypicalProbability{1})
\left(1+\frac{1}{\sqrt{\codebookBlocklength}}\right)
+
\frac{3}{\sqrt{\codebookBlocklength}}
\right) \\
\label{proof:soft-covering-two-transmitters-second-order-union-bound-application}
&
\begin{aligned}
  \leq
  &\Probability_{\codebookOne, \codebookTwo}\left(
    \totvarAtypicalOne > \secondOrderAtypicalProbability{1}\left(1+\frac{1}{\sqrt{\codebookBlocklength}}\right)
  \right)
  +
  \Probability_{\codebookOne, \codebookTwo}\left(
    \totvarAtypicalTwo > \secondOrderAtypicalProbability{2}\left(1+\frac{1}{\sqrt{\codebookBlocklength}}\right)
  \right)
  \\
  &+
  \sum\limits_{\channelOutAlphElement^\codebookBlocklength \in \channelOutAlph^\codebookBlocklength} \left(
    \Probability_{\codebookOne}\left(
      \codebookOne \notin \codebookSet_{\channelOutAlphElement^\codebookBlocklength}
    \right)
    +
    \Probability_{\codebookOne, \codebookTwo}\left(
      \totvarTypicalOne{\codebookTwoWord{\codewordIndex_2}}{\channelOutAlphElement^\codebookBlocklength}
      >
      1 + \frac{3}{\sqrt{\codebookBlocklength}}
      ~|~
      \codebookOne \in \codebookSet_{\channelOutAlphElement^\codebookBlocklength}
    \right)
  \right)
\end{aligned}
\\
\label{proof:soft-covering-two-transmitters-second-order-union-bound-substitutions}
&
\begin{aligned}
  \leq
  &\exp\left(
    -\frac{2\secondOrderAtypicalProbability{1}^2}
          {\codebookBlocklength}
    \exp(\codebookBlocklength \min(\codebookRateOne,\codebookRateTwo))
  \right)
  +
  \exp\left(
    -\frac{2\secondOrderAtypicalProbability{2}^2}
          {\codebookBlocklength}
    \exp(\codebookBlocklength \min(\codebookRateOne,\codebookRateTwo))
  \right)
  \\
  &+
  \cardinality{\channelInTwoAlph}^\codebookBlocklength \cardinality{\channelOutAlph}^\codebookBlocklength
  \exp\left(
    -\frac{1}{3\codebookBlocklength} \exp(-\codebookBlocklength (\mutualInformationConditional{\channelInOne}{\channelOut}{\channelInTwo} + \typicalityParam_1 - \codebookRateOne))
  \right)
  +
  \cardinality{\channelOutAlph}^\codebookBlocklength
  \exp\left(
    -\frac{1}{3\codebookBlocklength} \exp(-\codebookBlocklength (\mutualInformation{\channelInTwo}{\channelOut} + \typicalityParam_2 - \codebookRateTwo))
  \right)
\end{aligned}  
\\
\label{proof:soft-covering-two-transmitters-second-order-union-bound-end}
&\leq
2\exp\left(
  -\frac{2\min(\secondOrderAtypicalProbability{1}^2,\secondOrderAtypicalProbability{2}^2)}
        {\codebookBlocklength}
  \exp(\codebookBlocklength \min(\codebookRateOne,\codebookRateTwo))
\right)
+
2\exp\left(
  \codebookBlocklength(\log \cardinality{\channelOutAlph} + \log \cardinality{\channelInTwoAlph})
  -\frac{1}{3}
  \codebookBlocklength^{\secondOrderParamC - \secondOrderParamD - 1}
\right)
\end{align}
(\ref{proof:soft-covering-two-transmitters-second-order-union-bound-application}) results from~(\ref{proof:soft-covering-two-transmitters-typical-split}) and the union bound, (\ref{proof:soft-covering-two-transmitters-second-order-union-bound-substitutions}) follows by substituting (\ref{proof:soft-covering-two-transmitters-second-order-atypical-bound-1}), (\ref{proof:soft-covering-two-transmitters-second-order-atypical-bound-2}) and~(\ref{proof:soft-covering-two-transmitters-second-order-typical-bound}). (\ref{proof:soft-covering-two-transmitters-second-order-union-bound-end}) follows by substituting (\ref{theorem:soft-covering-two-transmitters-second-order-rate-one}), (\ref{theorem:soft-covering-two-transmitters-second-order-rate-two}) and (\ref{proof:soft-covering-two-transmitters-second-order-typicalityparam}), as well as elementary operations.
\newpage

\begin{align}
\label{converse-mutinf-bound-start}
\codebookBlocklength
\mutualInformationConditional{\hat{\channelInOne}, \hat{\channelInTwo}}{\hat{\channelOut}}{\hat{\timeSharingRV}}
-
\mutualInformation{\tilde{\channelInOne}^\codebookBlocklength, \tilde{\channelInTwo}^\codebookBlocklength}{\tilde{\channelOut}^\codebookBlocklength}
&=
\sum\limits_{\blockIndex=1}^{\codebookBlocklength}
\sum\limits_{\channelInOneAlphElement \in \channelInOneAlph}
\sum\limits_{\channelInTwoAlphElement \in \channelInTwoAlph}
\sum\limits_{\channelOutAlphElement \in \channelOutAlph}
  \codebookpmf_{\tilde{\channelInOne}_\blockIndex, \tilde{\channelInTwo}_\blockIndex, \tilde{\channelOut}_\blockIndex}(\channelInOneAlphElement, \channelInTwoAlphElement, \channelOutAlphElement)
  \log \frac{\channelpmf_{\channelOut | \channelInOne, \channelInTwo}(\channelOutAlphElement | \channelInOneAlphElement, \channelInTwoAlphElement)}
            {\codebookpmf_{\tilde{\channelOut}_\blockIndex}(\channelOutAlphElement)}
\nonumber
\\
&\hphantom{{}={}}
-
\sum\limits_{\channelInOneAlphElement^{\codebookBlocklength} \in \channelInOneAlph^{\codebookBlocklength}}
\sum\limits_{\channelInTwoAlphElement^{\codebookBlocklength} \in \channelInTwoAlph^{\codebookBlocklength}}
\sum\limits_{\channelOutAlphElement^{\codebookBlocklength} \in \channelOutAlph^{\codebookBlocklength}}
  \codebookpmf_{\tilde{\channelInOne}^{\codebookBlocklength}, \tilde{\channelInTwo}^{\codebookBlocklength}, \tilde{\channelOut}^{\codebookBlocklength}}(\channelInOneAlphElement^{\codebookBlocklength}, \channelInTwoAlphElement^{\codebookBlocklength}, \channelOutAlphElement^{\codebookBlocklength})
\log \frac{\channelpmf_{\channelOut^{\codebookBlocklength} | \channelInOne^{\codebookBlocklength}, \channelInTwo^{\codebookBlocklength}}(\channelOutAlphElement^{\codebookBlocklength} | \channelInOneAlphElement^{\codebookBlocklength}, \channelInTwoAlphElement^{\codebookBlocklength})}
            {\codebookpmf_{\tilde{\channelOut}^{\codebookBlocklength}}(\channelOutAlphElement^{\codebookBlocklength})}
\\
&=
\sum\limits_{\blockIndex=1}^{\codebookBlocklength}
\sum\limits_{\channelInOneAlphElement \in \channelInOneAlph}
\sum\limits_{\channelInTwoAlphElement \in \channelInTwoAlph}
\sum\limits_{\channelOutAlphElement \in \channelOutAlph}
  \codebookpmf_{\tilde{\channelInOne}_\blockIndex, \tilde{\channelInTwo}_\blockIndex, \tilde{\channelOut}_\blockIndex}(\channelInOneAlphElement, \channelInTwoAlphElement, \channelOutAlphElement)
  \log \channelpmf_{\channelOut | \channelInOne, \channelInTwo}(\channelOutAlphElement | \channelInOneAlphElement, \channelInTwoAlphElement)
+
\sum\limits_{\blockIndex=1}^{\codebookBlocklength}
  \entropy{\tilde{\channelOut}_\blockIndex}
\nonumber
\\
&\hphantom{{}={}}
-
\sum\limits_{\blockIndex=1}^{\codebookBlocklength}
\sum\limits_{\channelInOneAlphElement^{\codebookBlocklength} \in \channelInOneAlph^{\codebookBlocklength}}
\sum\limits_{\channelInTwoAlphElement^{\codebookBlocklength} \in \channelInTwoAlph^{\codebookBlocklength}}
\sum\limits_{\channelOutAlphElement^{\codebookBlocklength} \in \channelOutAlph^{\codebookBlocklength}}
  \codebookpmf_{\tilde{\channelInOne}^{\codebookBlocklength}, \tilde{\channelInTwo}^{\codebookBlocklength}, \tilde{\channelOut}^{\codebookBlocklength}}(\channelInOneAlphElement^{\codebookBlocklength}, \channelInTwoAlphElement^{\codebookBlocklength}, \channelOutAlphElement^{\codebookBlocklength})
\log \channelpmf_{\channelOut | \channelInOne, \channelInTwo}(\channelOutAlphElement_\blockIndex | \channelInOneAlphElement_\blockIndex, \channelInTwoAlphElement_\blockIndex)
-
\entropy{\tilde{\channelOut}^\codebookBlocklength}
\\
&=
\sum\limits_{\blockIndex=1}^{\codebookBlocklength}
  \entropy{\tilde{\channelOut}_\blockIndex}
-
\entropy{\tilde{\channelOut}^\codebookBlocklength}
\\
&\leq
\sum_{\blockIndex=1}^\codebookBlocklength
\entropy{\channelOut_\blockIndex}
-
\entropy{\channelOut^\codebookBlocklength}
+
\codebookBlocklength
\left(
  -
  \frac{1}{2}
  \lemmaconst
  \log
  \frac{\lemmaconst}{2\cardinality{\channelOutAlph}}
\right)
+
\left(
  -
  \frac{1}{2}
  \lemmaconst
  \log
  \frac{\lemmaconst}{2\cardinality{\channelOutAlph^\codebookBlocklength}}
\right)
\label{converse-mutinf-bound-lemma}
\\
&\leq
\codebookBlocklength
\left(
  -
  \lemmaconst
  \log
  \frac{\lemmaconst}{2\cardinality{\channelOutAlph}}
\right)
\label{converse-mutinf-bound-end}
\end{align}
(\ref{converse-mutinf-bound-lemma}) follows by (\ref{converse-lemma-tvassumption}), (\ref{converse-lemma-totvar-single}) and Lemma~\ref{totvar-entropy-lemma}.
\vspace{10pt}
\hrule

\begin{align}
\label{distinguishing-implies-semantic-begin}
&\hphantom{{}={}}
\max\limits_{\partition, \Probability_\messageRV} \bigg(
  \max\limits_{\wiretapperDecoder: \channelOutAlphWiretapper^\codebookBlocklength \rightarrow \partition}
  \Probability_{\messageRV, \channelOutWiretapper^\codebookBlocklength}(\messageRV \in \wiretapperDecoder(\channelOutWiretapper^\codebookBlocklength))
  -
  \max\limits_{\wiretapperGuesser \in \partition}
  \Probability_{\messageRV}(\messageRV \in \wiretapperGuesser)
\bigg)
\\
&
\begin{aligned}
=
\sum\limits_{(\messageAlphabetElement_1, \messageAlphabetElement_2) \in \messageAlphabet}
\sum\limits_{\channelOutAlphElementWiretapper^\codebookBlocklength \in \channelOutAlphWiretapper^\codebookBlocklength}
  \codebookpmf_\messageRV(\messageAlphabetElement_1, \messageAlphabetElement_2)
  \codebookpmf_{\channelOutWiretapper^\codebookBlocklength | \codebookOne(\messageAlphabetElement_1, \cdot), \codebookTwo(\messageAlphabetElement_2, \cdot)}(\channelOutAlphElementWiretapper^\codebookBlocklength)
  \cdot
  \indicator{(\messageAlphabetElement_1,\messageAlphabetElement_2) \in \wiretapperDecoder(\channelOutAlphElementWiretapper^\codebookBlocklength)}
\\
-
\sum\limits_{(\tilde{\messageAlphabetElement}_1, \tilde{\messageAlphabetElement}_2) \in \messageAlphabet}
\sum\limits_{\channelOutAlphElementWiretapper^\codebookBlocklength \in \channelOutAlphWiretapper^\codebookBlocklength}
  \codebookpmf_\messageRV(\tilde{\messageAlphabetElement}_1, \tilde{\messageAlphabetElement}_2)
  \codebookpmf_{\channelOutWiretapper^\codebookBlocklength | \codebookOne(\tilde{\messageAlphabetElement}_1, \cdot), \codebookTwo(\tilde{\messageAlphabetElement}_2, \cdot)}(\channelOutAlphElementWiretapper^\codebookBlocklength)
  \cdot
  \indicator{(\tilde{\messageAlphabetElement}_1,\tilde{\messageAlphabetElement}_2) \in \wiretapperGuesser}
\end{aligned}
\\
&
\begin{aligned}
=
\sum\limits_{(\messageAlphabetElement_1, \messageAlphabetElement_2), (\tilde{\messageAlphabetElement}_1, \tilde{\messageAlphabetElement}_2) \in \messageAlphabet}
\codebookpmf_\messageRV(\messageAlphabetElement_1, \messageAlphabetElement_2)
\codebookpmf_\messageRV(\tilde{\messageAlphabetElement}_1, \tilde{\messageAlphabetElement}_2)
\Bigg(
\sum\limits_{\channelOutAlphElementWiretapper^\codebookBlocklength \in \channelOutAlphWiretapper^\codebookBlocklength}
\codebookpmf_{\channelOutWiretapper^\codebookBlocklength | \codebookOne(\messageAlphabetElement_1, \cdot), \codebookTwo(\messageAlphabetElement_2, \cdot)}(\channelOutAlphElementWiretapper^\codebookBlocklength)
\indicator{(\messageAlphabetElement_1,\messageAlphabetElement_2) \in \wiretapperDecoder(\channelOutAlphElementWiretapper^\codebookBlocklength)}
\\
-
\sum\limits_{\channelOutAlphElementWiretapper^\codebookBlocklength \in \channelOutAlphWiretapper^\codebookBlocklength}
\codebookpmf_{\channelOutWiretapper^\codebookBlocklength | \codebookOne(\tilde{\messageAlphabetElement}_1, \cdot), \codebookTwo(\tilde{\messageAlphabetElement}_2, \cdot)}(\channelOutAlphElementWiretapper^\codebookBlocklength)
\cdot(
  \indicator{(\messageAlphabetElement_1,\messageAlphabetElement_2) \in \wiretapperDecoder(\channelOutAlphElementWiretapper^\codebookBlocklength)}
  +
  \indicator{(\messageAlphabetElement_1,\messageAlphabetElement_2) \notin \wiretapperDecoder(\channelOutAlphElementWiretapper^\codebookBlocklength)}
  -
  \indicator{(\tilde{\messageAlphabetElement}_1,\tilde{\messageAlphabetElement}_2) \notin \wiretapperGuesser}
)
\Bigg)
\end{aligned}
\\
&=
\semanticTotvarTerm + \semanticErrorTermOne - \semanticErrorTermTwo
\label{distinguishing-implies-semantic-end}
\end{align}

with the definitions

\begin{align}
\label{distinguishing-implies-semantic-part1}
&\begin{aligned}
\semanticTotvarTerm
:=
\sum\limits_{(\messageAlphabetElement_1, \messageAlphabetElement_2), (\tilde{\messageAlphabetElement}_1, \tilde{\messageAlphabetElement}_2) \in \messageAlphabet}
\codebookpmf_\messageRV(\messageAlphabetElement_1, \messageAlphabetElement_2)
\codebookpmf_\messageRV(\tilde{\messageAlphabetElement}_1, \tilde{\messageAlphabetElement}_2)
\sum\limits_{\channelOutAlphElementWiretapper^\codebookBlocklength \in \channelOutAlphWiretapper^\codebookBlocklength}
\Big(
  \codebookpmf_{\channelOutWiretapper^\codebookBlocklength | \codebookOne(\messageAlphabetElement_1, \cdot), \codebookTwo(\messageAlphabetElement_2, \cdot)}(\channelOutAlphElementWiretapper^\codebookBlocklength)
  -
  \codebookpmf_{\channelOutWiretapper^\codebookBlocklength | \codebookOne(\tilde{\messageAlphabetElement}_1, \cdot), \codebookTwo(\tilde{\messageAlphabetElement}_2, \cdot)}(\channelOutAlphElementWiretapper^\codebookBlocklength)
\Big)
\\
\cdot
\indicator{(\messageAlphabetElement_1,\messageAlphabetElement_2) \in \wiretapperDecoder(\channelOutAlphElementWiretapper^\codebookBlocklength)}
\end{aligned}
\\
\label{distinguishing-implies-semantic-part2}
&
\semanticErrorTermOne
:=
\sum\limits_{(\tilde{\messageAlphabetElement}_1, \tilde{\messageAlphabetElement}_2) \in \messageAlphabet}
  \codebookpmf_\messageRV(\tilde{\messageAlphabetElement}_1, \tilde{\messageAlphabetElement}_2)
  \cdot
  \sum\limits_{\channelOutAlphElementWiretapper^\codebookBlocklength \in \channelOutAlphWiretapper^\codebookBlocklength}
    \codebookpmf_{\channelOutWiretapper^\codebookBlocklength | \codebookOne(\tilde{\messageAlphabetElement}_1, \cdot), \codebookTwo(\tilde{\messageAlphabetElement}_2, \cdot)}(\channelOutAlphElementWiretapper^\codebookBlocklength)
    \indicator{(\tilde{\messageAlphabetElement}_1, \tilde{\messageAlphabetElement}_2) \notin \wiretapperGuesser}
\\
\label{distinguishing-implies-semantic-part3}
&
\semanticErrorTermTwo
:=
\sum\limits_{(\messageAlphabetElement_1, \messageAlphabetElement_2), (\tilde{\messageAlphabetElement}_1, \tilde{\messageAlphabetElement}_2) \in \messageAlphabet}
\codebookpmf_\messageRV(\messageAlphabetElement_1, \messageAlphabetElement_2)
\codebookpmf_\messageRV(\tilde{\messageAlphabetElement}_1, \tilde{\messageAlphabetElement}_2)
  \sum\limits_{\channelOutAlphElementWiretapper^\codebookBlocklength \in \channelOutAlphWiretapper^\codebookBlocklength}
    \codebookpmf_{\channelOutWiretapper^\codebookBlocklength | \codebookOne(\tilde{\messageAlphabetElement}_1, \cdot), \codebookTwo(\tilde{\messageAlphabetElement}_2, \cdot)}(\channelOutAlphElementWiretapper^\codebookBlocklength)
    \indicator{(\messageAlphabetElement_1, \messageAlphabetElement_2) \notin \wiretapperDecoder(\channelOutAlphElementWiretapper^\codebookBlocklength)}.
\end{align}

\end{document}